\newcounter{protocol}
\newenvironment{protocol}[1][htbp]
  {%
    %
    \let\c@algocf\c@protocol
    
    \begin{algorithm}[#1]
  }
  {%
    \end{algorithm}
  }
\newtheorem{theorem}{Theorem}
\newtheorem{lemma}{Lemma}
\newtheorem{observation}{Observation}
\newtheorem{definition}{Definition}
\newtheorem{corollary}{Corollary}
\newtheorem{fact}{Fact}
\newtheorem{proposition}{Proposition}
\newtheorem{question}{Open Question}
\newtheorem{remark}{Remark}
\newcommand{\alg}{\textsc{ALG}}
\newcommand{\poly}{\textsc{POLY}}
\newcommand{\opt}{\textsc{OPT}}
\newcommand{\ac}{\textsc{AC}}
\newcommand{\mwlm}{\textsc{MWLM}}
\newcommand{\mwlbm}{\textsc{MWLBM}}
\newcommand{\wobm}{\textsc{WOBM}}
\newcommand{\supp}{\textsc{supp}}
\newcommand{\child}{\textsc{CHILD}}
\newcommand{\1}[1]{\mathbf{1}\left\{#1\right\}}
\newcommand{\inner}[1]{\left\langle #1 \right\rangle}
\newcommand{\ceil}[1]{\left\lceil #1 \right\rceil}
\newcommand{\E}{\mathbb{E}}
\newcommand{\R}{\mathbb{R}}
\newcommand{\calA}{\mathcal{A}}
\newcommand{\calO}{\mathcal{O}}
\newcommand{\calS}{\mathcal{S}}
\newcommand{\hf}[1]{{  }}
\newcommand{\adi}[1]{{  }}
\title{The Keychain Problem: On Minimizing the Opportunity Cost of Uncertainty}
\author{
    Ramiro N. Deo-Campo Vuong
    \thanks{
        Cornell University.
        Email: {\tt ramdcv@cs.cornell.edu}
    }
    \and Robert Kleinberg
    \thanks{
        Cornell University.
        Email: {\tt rdk@cs.cornell.edu}
    }
    \and Aditya Prasad
    \thanks{
        The University of Chicago.
        Email: {\tt adityaprasad@uchicago.edu}
    }
    \and Eric Xiao
    \thanks{
        National Taiwan University.
        Email: {\tt b10401011@ntu.edu.tw}. Part of the work was completed while this author was visiting the Computer Science Department at the University of Chicago.
    }
    \and Haifeng Xu
    \thanks{
        The University of Chicago.
        Email: {\tt haifengxu@uchicago.edu}
    }
}
\begin{document}

\maketitle

\begin{abstract}
In this paper, we introduce a family of sequential decision-making 
problems, collectively termed the \emph{Keychain Problem}, that involve exploring a set of actions to maximize expected payoff when only a subset of actions are available in each 
stage.
In an instance of the Keychain Problem, a locksmith faces a sequence of decisions,
each of which involves selecting one key from a keychain (a subset of keys)
to attempt to open a lock. Given a Bayesian prior on the effectiveness
of keys, the locksmith's goal is to 
minimize the \emph{opportunity cost}, which is the expected number of rounds in which the chain has a \emph{correct key} but our selected key is incorrect. 

We study the computation of the Bayes optimal solution for Keychain Problems. Employing polynomial-time reductions, we establish formal connections between natural variants of the Keychain Problem and  well-studied algorithmic economics problems on bipartite graphs. When the keychain order is known to the locksmith, we show that it reduces to Maximum Weight Bipartite Matching (MWBM). More general  is the situation when the keychain order is sampled from a prior distribution (possibly correlated with the correct key). Here  the Keychain Problem reduces to a novel generalization of MWBM which we coin the \emph{Maximum Weight Laminar Matching}, which then further reduces to combinatorial auctions under XOS valuation functions. Finally, we show that when the locksmith can choose the keychain order, the Keychain problem reduces from a classic NP-hard combinatorial problem, again, on bipartite graphs. Besides implying  algorithmic results and deepening our structural understanding about the Keychain Problem, our established reductions also find applications beyond --  for example, to the Philosopher Inequality for online bipartite  matching.


 
\end{abstract}



\newpage
\section{Introduction}\label{sec:intro}
Designing experimentation policies to 
optimally explore a space of alternatives 
has been a fruitful theme for algorithms research,
encompassing influential models such as multi-armed bandits
\cite{gittins2011multi,lattimore2020bandit,slivkins2019bandit} and Pandora's problem~\cite{kleinberg2016descending,singla2018price,weitzman}. An aspect of optimal
exploration problems that has been under-examined in prior
work is \emph{limited availability}: the set of alternatives
available for exploration may vary over the course of the 
exploration process. A notable exception is the
\emph{sleeping bandits} problem \cite{blum2007external,kleinberg2010regret} where the set of available bandits at each round is prior-independent knowledge.
For the more general Bayesian settings that integrate  knowledge over available actions as a prior distribution, we are not aware of any prior work addressing the
design and analysis of optimal  
experimentation policies under limited availability.
This paper seeks to fill that gap.

We introduce and study a problem of sequential decision-making
under uncertainty that we call the \emph{Keychain Problem}. 
In this setting, there is a locksmith 
tasked with opening a lock.
At her disposal is a sequence of keychains.
In an online fashion, the locksmith irrevocably selects 
at most one key
from the next keychain in the sequence. She observes if the selected
key opens the lock, and receives a reward if it does.
The goal of the locksmith is to quickly identify the correct
keys so she can recover rewards on subsequent keychains.
We measure the performance of the locksmith against the
performance of a Bayes optimal policy.

There are essentially two reasons why optimal experimentation
in the face of limited availability is difficult. First, since
the set of actions available varies over time, one must seize
opportunities to reduce uncertainty about the identity of the
best action in light of the choices one will face in the
future (see a concrete example in
\Cref{subsec:intro_example}.) 
Second, since the best action may not be available at all times, one must also work to identify 
high-performing alternatives for the 
times when it is unavailable. The Keychain Problem 
is abstracted to distill the first of these challenges
while rendering the second one moot by equalizing the values
of all suboptimal actions.

Similar to Pandora's Problem \cite{weitzman}, 
the Keychain Problem admits a basic version whose 
optimal solution is  tractable (see \Cref{subsec:intro_warmup})
and a wealth of versions that are algorithmically rich.  
Our paper is  devoted to exploring structural understanding of natural variants of the Keychain Problem; we do so by employing efficient reductions to and from well-studied problems in algorithmic economics. Besides implying algorithmic properties of Keychain Problems,   
we hope these connections constitute an enticing invitation 
for researchers to further investigate this setting.  
We conclude by listing a rich set of interesting future directions.

\subsection{Example: Advisor Search}\label{subsec:intro_example}
Consider the problem of finding an advisor as a first-year graduate student.
Lily (locksmith) is a new PhD student at her dream university.
There are three rotational periods (keychains) in her first year.
For each rotation, she adaptively selects a single professor to collaborate with.
After working with a professor during a rotation, she learns if the professor is a good advising fit and may continue to work with the same professor for future rotations.
To remain in good standing with the university, Lily must identify an advisor and partake in successful collaborations during this rotational period.
There are three professors (keys) Lily is interested in working with: Alice, Bob, and Carol.
The professors have very different advising styles and interests, so \textit{only one} professor will make a good advisor for Lily.
Based on the professors' previous work and accounts from other students, Lily estimates that Alice, Bob, and Carol will be a good advisor fit with probabilities of $3/7$, $2/7$, and $2/7$, respectively.
Bob and Carol are available for every rotation.
In contrast, Alice is considering a short sabbatical period and will not be available to collaborate during the second rotation with probability $2/3$, independent of which professor will make a good advisor.
Notably, Lily will not learn of Alice's realized availability for the second rotation before choosing who to work with in the first rotation.
If Lily's goal is to maximize her expected number of rotations with the professor who would make a good advisor, whom should she choose to work with at each rotational period?
\begin{table}[ht]
    \centering
    \begin{tabular}{lcccc}
    \toprule
    \multirow{2.4}{*}{\textbf{Advisor}} & \multirow{2.4}{*}{\textbf{Fit Probability}} & \multicolumn{3}{c}{\textbf{Availability Probability}} \\
    \cmidrule(lr){3-5}
     & & \textbf{Rotation 1} & \textbf{Rotation 2} & \textbf{Rotation 3} \\
    \midrule
    Alice & $3/7$ & $1$ & $1/3$ & $1$ \\
    Bob   & $2/7$ & $1$ & $1$ & $1$ \\
    Carol  & $2/7$ & $1$ & $1$ & $1$ \\
    \bottomrule
    \end{tabular}
    \caption{Advisor fit probabilities and availability across rotations. Alice is unavailable for Rotation 2 with probability $2/3$, independently of advisor fit. \vspace{-5mm}}
\end{table}

Let us consider two polices.
The first greedy policy selects the professor that yields the largest number of expected successful total rotations, conditioned on past observations.
The second policy strikes a balance between greedy selection and exploring alternatives to reduce uncertainty in future rotational periods.
For the sake of brevity, we leave explicit calculations for this example in Appendix \ref{sec:app_intro_example}.

The Greedy policy first collaborates with Alice because she yields the greatest number of successful total rotations in expectation.
In expectation, choosing to collaborate with Alice first yields $1$ successful total rotation, whereas collaboration with Bob or Carol first only offers $6/7$ successful total rotations.
If collaboration with Alice is successful during the first rotation, then Lily should continue to collaborate with Alice on subsequent rotations.
Otherwise, she should choose either Bob or Carol on the second rotation, and the other one on the third rotation, if the second-rotation collaboration is unsuccessful.
Lily can expect to have $13/7$ successful rotations by greedily selecting collaborators.

It turns out the greedy policy is not optimal in Lily's situation.
Counterintuitively, Lily should collaborate with Bob (or Carol by symmetry) for her first rotation.
The interesting event is when Bob is not the right advisor for Lily, and Alice goes on sabbatical. By ruling out Bob as the correct advisor on the first rotation, Lily can rotate with Carol next to learn if she is the correct advisor at no opportunity cost (since Carol is the only available advisor on the second rotation who could be a good fit). 
\emph{This reduction in opportunity cost turns out to outweigh the benefit of greedily collaborating with Alice first.} Specifically, the following   policy  can be shown to be optimal for Lily: she always collaborates with Bob for the first rotation  and adapts to Alice's availability on the second rotation. If Alice takes a sabbatical period (with prob. $2/3$), Lily  collaborates Carol second and   Alice at last; Otherwise, Lily should choose to collaborate with Alice second and Carol at last. Naturally, during this process, she stops rotation whenever the good advising fit is  identified. Simple calculations show that  Lily can expect  $40/21$  successful rotations with this  policy, which exceeds the $13/7$ expected successful rotations yielded by the greedy policy.


\subsection{Overview of Results and Their Applications} 
Our primary contribution, both technically and conceptually, is to establish formal connections between Keychain Problems and several widely-studied optimization problems in algorithmic economics; we do so by employing polynomial-time reductions that also help reveal structural similarities and differences between Keychain Problems and other optimization problems. Moreover, we further demonstrate that these established reductions not only imply efficient polynomial-time algorithms for various versions of Keychain Problems, but also find applications, and help make progress on, other (seemingly unrelated) algorithmic problems.

Overall, our results  reveal  close connections between Keychain Problems and problems on bipartite graphs. We start from the   basic setting with known keychain order and show an efficient reduction from computing the optimal Bayes solution to finding the maximum weight bipartite matching (MWBP). The crux of our investigation is for the  more general Bayesian setting where the keychain order is unknown and the designer has a probabilistic belief about the order that is allowed to correlate with the (also unknown) correct key. We show that this Bayesian version of Keychain Problems is closely related to a natural generalization of MWBP, which we term the \emph{maximum weight laminar matching} (MWLM). In MWLM, each right-side node $j$ in the bipartite graph is allowed to have a \emph{type set} $T_j$, and the collection of typesets over all right-nodes forms a laminar set family.\footnote{ A  set family $T_1, \dots, T_n \subseteq \mathcal{T}$   over a ground type set   $\mathcal{T} = \{t_1, \dots, t_k\}$  is called a \emph{laminar family} if,  for all $i$ and $j$, we have  $T_i \cap T_j = T_i$, or $T_i \cap T_j = T_j$, or $T_i \cap T_j = \emptyset$.} An edge set is called a \emph{laminar matching} if it is a standard bipartite matching for any subgraph induced by any type $t$ by deleting all right nodes $j$ such that  $ t \not \in T_j$.\footnote{When $T_j = \{ t \}$ is the same singleton set for every $j$, this degenerates to standard bipartite matching.}  

For the above Bayesian version of Keychain Problems, we show two reductions. First, computing the optimal Bayes optimal solution reduces to the MWLM problem described above. Second, MWLM furter reduces to the classic combinatorial auction with XOS valuation problem (see, e.g., \cite{dobzinski2005auctions,dobzinski2010auctions}). Both reductions are 
in polynomial time and, notably, are approximation-preserving. As a corollary,  these reductions immediately imply a $(1-1/e)$-approximate algorithm for the corresponding Keychain problem since there is a $(1-1/e)$-approximation for combinatorial auction with XOS valuation \cite{dobzinski2010auctions}. More interestingly, these connections  also find applications in  classical combinatorial problems. We demonstrate this by examining the widely studied \emph{online} weighted bipartite matching problem. The well-known Philosopher Inequality  captures the approximation ratio of a polynomial-time online algorithm against the Bayes optimal algorithm \cite{papadimitriou2021online}.  Recent results by \cite{braverman2025new} show a $0.678$ approximation for the Philosopher Inequality, assuming edges are drawn independently. A straightforward application of our reductions implies a $(1-1/e)$-approximation (slightly worse than $0.678$) for  the more general settings that allows $b$-matching and allows edge correlations in the randomness (though requiring the support of the randomness to have polynomial size). 
 We conclude this section by proving the NP-hardness of obtaining a better-than-$(\frac{4063}{4064}+\epsilon)$ approximation for the Keychain problem with a probabilistic belief of keychain orders, ruling out any PTAS. 
 
Our last   section  turns to another natural variant of Keychain Problems where the keychain order can be optimized  by the desiger. Our result reveals connection to another bipartite graph problem which we call \emph{downward-facing bipartite permutation} --  given any  bipartite graph with directed edges pointing from left nodes to right ones, decide whether there is a way to independently permute the left and right side of nodes to make all edges weakly downward-facing. This is a known NP-hard problem which we show reduces to our Keychain Order Selection setting, hence implying its NP-hardness. Nevertheless, we present a clean $(1/2)$-approximate algorithm. 

Finally, we conclude with many open directions, together with some interesting initial findings that could hint at plausible future paths. 

\subsection{Keychain Problems in Application Domains}
 

Keychain Problems model the wide-ranging task of opportunity-cost minimization in search environments where each opportunity provides a limited option set. While Section \ref{subsec:intro_example} provides a natural (though perhaps toy) application to advisory search,  we now exemplify two more realistic application domains.

\paragraph{Clinical trials.} The problem of designing clinical trials to identify the best treatment for a disease (i.e., the ``lock'' to be opened) is a key application of online exploration (see e.g. \cite{guha2007approximation,slivkins2019bandit}). There are typically multiple candidate treatments (i.e., keys), each with some probability of being the best, which is captured as the designer's prior belief. Participating patients (i.e., keychains) arrive in order, or their order could be decided by the treatment designer. These correspond to optimizable keychain orders or pre-determined orders. In our models, the patient order can be correlated with the (random) best treatment option. 
Notably, the applicable treatments to each patient are usually a subset of candidate treatments, corresponding to each keychain containing a subset of keys, due to patients' reluctance to some candidate treatments at a trial stage \cite{taylor1984physicians} or potential treatment incompatibility issues \cite{ross1999barriers}.  In such applications, the designer must decide an applicable treatment for each patient, and possibly the order of patients for trials, in order to find out the right or  best treatment as soon as possible. 

\paragraph{Cargo transportation in hostile environments via camouflaging.} 
A completely different application domain  with the same underlying problem structure  is  critical cargo   transportation. 
The task is to transport critical cargo across contested waters while avoiding detection by adversaries. Camouflage technologies are widely used in such applications \cite{umich_camouflage,boehmer2024escape}. Specifically, 
each shipment vessel (keychain) is equipped with a set of compatible camouflage technologies (keys) that it may deploy. 
The transportation organization only has a prior belief about the capabilities of these camouflage technologies against the unknown adversary. To avoid detection by adversarial parties, the shipment planner chooses one available camouflage technology (key) to activate for each vessel, subject to compatibility constraints between camouflage technologies and vessels.
In this setting, avoiding detection corresponds to opening the lock and the vessel order is typically optimizable. 

\subsection{Related Work} \label{subsec:related_work}

The Keychain Problem relates to several areas of literature on sequential decision-making under uncertainty, stochastic search, and online matching.
Although these settings share our overall motivation of optimal search, below we illustrate how these problems and their algorithmic challenges differ from ours. 

Our theme of exploring a space of actions subject to constraints
on the available actions   connects  to the 
\emph{Sleeping Bandits} problem~\cite{blum2007external,kleinberg2010regret}.
While sharing similar motivation of accounting for limited availability, the two problem formulations
are  very different: in the Keychain Problem one has a prior
distribution on the available actions at each round and the identity of the correct key. 
In the sleeping bandits problem there is no prior knowledge about  available bandits. 
Accordingly, the two problems are
solved using very different techniques, as will   be illustrated by our results.  

More generally, in stochastic search problems, it is rare for the
Bayes optimal procedure is to be computationally efficient. 
One of the earliest examples of an efficiently solvable stochastic
search problem is the celebrated Pandora's Problem, formulated by ~\cite{weitzman} who also
showed that a remarkably simple ``index policy'' is Bayes optimal. 
Subsequent work in theoretical computer science obtained approximately  optimal policies for combinatorial generalizations 
of Pandora's Problem in which more than one prize may
be claimed~\cite{singla2018price}, decision-theoretic
generalizations in which there is more than one way
to open a box~\cite{beyhaghi2023pandora,beyhaghi2019pandora,bowers2025prophetinequalitiesbanditscabinets,chawla2024combinatorialselectioncostlyinformation,fu2023pandora,guha2008information,scully2025localhedgingapproximatelysolves},
and strategic generalizations in which boxes are under
the control of selfish agents~\cite{kleinberg2018delegated,kleinberg2016descending}.
Our Keychain Order Selection problem bears a resemblance to
Pandora's Problem in that the searcher has freedom, in both
models, to decide on the order in which to explore 
alternatives. However,    in  Keychain Order Selection this freedom is not absolute; it is limited by the structure of the keychains, which inhibits the application of 
techniques based on Weitzman indices and their generalizations.

Our techniques find applications to the  weighted online bipartite matching under the Philosopher Inequality benchmark, as studied by \cite{braverman2022max, braverman2025new, papadimitriou2021online}. 
Algorithms in this setting compete against a Bayes optimal policy computed by a philosopher with unlimited computational resources.
These works consider the case in which the incident edges on each online node are drawn independently from a known distribution.
In contrast, our techniques allow arbitrary stochastic models with few realizable edge sets under more general one-sided $b$-matching constraints.
\cite{papadimitriou2021online} shows that online bipartite matching with the Philosopher Inequality benchmark is $\textsc{PSPACE}$-hard to approximate beyond a constant factor, and \cite{braverman2025new} provides an algorithm with a current best $0.678$-approximation ratio via pivot sampling. 
Other works by \cite{naor2025online, saberi2021greedy} present rounding algorithms with applications to online matching in the  Philosopher Inequality paradigm.

On a deeper technical level, our results turn out to relate to various lines of work in combinatorial auctions, ordering constraint satisfaction problems, and Quadratic Assignment problems. We defer additional discussion on these topics to Appendix \ref{sec:app_related_works}.

\subsection{Preliminaries on Notations}
\paragraph{Notation Conventions}
We use $[n] = \{1,\dots, n\}$ to denote the set of $n$ elements.
The power set of set $S$ is denoted by $2^{S}$.
As a shorthand for an additive set function with weights $w_1,\dots,w_n$, we use $w(S) = \sum_{i\in S} w_i$.
We denote the collection of distributions whose sample space is the finite set $S$ by $\Delta(S)$.
For a vector $v$, the $i$th element is $v_i$.
Additionally, $v_{i:j}$ for $i\le j$ denotes the vector $(v_i,v_{i+1},\dots,v_j)$.
The element in the $i$th row and $j$th column of matrix $M$ is $M_{i,j}$.
To denote the elements in the first $j$ columns, we use $M_{:,1:j}$. To represent a ``NULL'' option, we use $\perp$.
The indicator function is $\1{\cdot}$; the output is $1$ when the input predicate is true and $0$ otherwise.
Finally, $\inner{\cdot}$ denotes a tuple often used to describe an instance of a problem.

\textit{Approximation Guarantees. } 
In many of our settings, it is intractable to compute a Bayes optimal solution.
In these cases, we consider multiplicative and mixed multiplicative-additive approximations.
An algorithm is said to be an $\alpha$-approximation if it outputs a policy with reward $\alg$ that competes with the reward of a Bayes optimal solution $\opt$:
 $\alg \ge \alpha \cdot \opt$. 
More generally, an algorithm is a $(\alpha,\epsilon)$-approximation if it outputs policies satisfying: $\alg \ge \alpha \cdot \opt - \epsilon$.


\section{The Keychain Problem:  Basic Settings  and Preliminary Results} \label{sec:warm-up}
We start by formally introducing the \emph{Keychain Problem}.
There is a set of \emph{keys} $[n]$, and a single lock.
A \emph{locksmith} -- our algorithm designer -- is presented with a sequence of \emph{keychains} $C_1,\dots, C_m$, each being a subset of $[n]$. 
There could be one or multiple \emph{correct keys} (i.e., lock-opening keys) from $[n]$ to open the lock. The locksmith does not know which keys are correct, and only possesses some prior knowledge that is modeled as a Bayesian prior distribution $p$ on the probability of keys' correctness (possibly jointly with other setting scenarios). The locksmith hopes to find a correct key from every keychain $C_t$ (if $C_t$ has one) to open the lock.  
The tuple $\langle n, \{ C_t \}_{t=1}^m, p \rangle$ hence consists of the input to a Keychain Problem.

As it turns out, the tractability of the Keychain Problem depends on the number of correct keys as well as whether the order of the keychains is fully known,  drawn from a random prior, or  an optimizable decision. We start from the basic setting with a single correct key and known keychain order     in this section, and will   gradually generalize this setup in the following sections.
In this basic case, the locksmith's prior knowledge encodes a distribution $p = (p_1, \cdots, p_n) \in \Delta([n])$ in which $p_i$ is the probability that key $i$ is the correct key; let $k^*$ denote the correct key, drawn from the distribution $p$.
Given her prior knowledge and order of keychains $C_1, C_2, \cdots, C_m$, the locksmith must design a (possibly history-dependent) \emph{policy} $\pi$ that irrevocably selects a key $k_t \in C_t \cup\{\perp\}$  at time step $t=1,\dots, m$.  She then observes whether or not  $k_t =k^* $  and receives a reward   $\1{k^* = k_t}$, which will affect how the policy $\pi$ selects the next keys. For mathematical  convenience,  $k_t = \perp$ means the locksmith  discards the keychain without trying any key.  
In   applications, a keychain captures an opportunity with multiple options (i.e., the keys), and will be discarded (i.e., the opportunity is gone) once one of its keys is tested. 
The locksmith looks to maximize   expected reward:  
\begin{align*}
    r(\pi) = \E_{p, \pi} \left[\sum_{t=1}^{m} \1{ k_t = k^*}\right]
\end{align*}
where the expectation is taken with respect to the randomness of the environment described by prior distribution $p$ and any randomness of the locksmith's policy $\pi$.

\begin{remark} More generally, each chain $C_t$ may have a non-negative weight $w_t$ capturing its importance, hence the objective $r(\pi) $ is generalized to $ \E_{p, \pi} \left[\sum_{t=1}^{m}  w_t \1{ k_t = k^*}\right]$. All our results in this paper can be straightforwardly extended to this weighted version. To avoid unnecessary notation, we focus our analysis on the unweighted objective above. \end{remark} 

Next we make some helpful observations that will simplify our search for the optimal policy $\pi$. Notably, when there is only a single correct key, there always exists a Bayes optimal policy that  always tries to play the correct key once it is identified. We call such policies \textit{exploitative}, and provide a formal definition below that is generalizable to multiple correct keys.
 

 \begin{restatable}[Exploitative Policies]{definition}{exploitation}
Let $K^*_\tau$ denote the (possibly empty) set of identified correct keys \emph{before} time $\tau$. We say a policy is exploitative if it always plays some $k_{\tau} \in K^*_\tau \cap C_{\tau}$ for any $\tau$ whenever $K^*_\tau \cap C_{\tau}$ is non-empty.
\end{restatable} 


\begin{observation}
\label{ob1:exploitative}
For the Keychain Problem with one correct key, there is a Bayes optimal policy that is deterministic and exploitative. 
\end{observation}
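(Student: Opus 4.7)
The plan is to handle the two qualities --- determinism and exploitativeness --- separately, starting with determinism and then modifying the resulting optimal policy into an exploitative one via an exchange argument.

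For determinism, I would formalize the locksmith's decision problem as a finite-horizon MDP whose state at time $t$ records the keys already tested and the match/no-match outcomes observed so far (together with the observable part of the environment in the later variants). Since both the state space and the per-round action set $C_t \cup \{\perp\}$ are finite and the horizon is $m$, the backward Bellman recursion yields an arg-max action at every state, so a deterministic Bayes optimal policy exists. This is precisely the classical MDP fact already alluded to in the paragraph preceding the statement.

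For exploitativeness, I would take any such deterministic Bayes optimal policy $\pi$ and convert it pointwise into an exploitative policy $\pi'$ whose reward is at least as large. Fix any realization of the environment (the correct key $k^*$, and in later sections also the remaining exogenous randomness), and let $t$ be the first round in which $\pi$ selects $k^*$; set $t = m+1$ if no such round exists. Define $\pi'$ to agree with $\pi$ through round $t$, and in every later round $\tau$ to play $k^*$ whenever $k^* \in C_\tau$ while copying $\pi$'s choice otherwise. Because $\pi$ and $\pi'$ generate identical observation histories through round $t$, the keychain sequences they encounter afterwards are also identical (in the Order Selection setting, $\pi'$ simply inherits $\pi$'s keychain selections). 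Comparing rewards round by round in this realization, $\pi'$ matches $\pi$ in every round and may exceed it in any round $\tau > t$ where $\pi$ failed to play $k^* \in C_\tau$. Taking expectation over the prior, $r(\pi') \ge r(\pi) = \opt$, so $\pi'$ is deterministic, exploitative, and Bayes optimal.

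The main subtlety worth flagging is that the modification must not inadvertently discard useful exploratory information. I would argue that once $k^*$ has been tested successfully at round $t$, the locksmith's posterior over correct keys places all mass on $k^*$, so no subsequent test can reveal anything new; deviating to play $k^*$ on $C_\tau$ rather than some other key loses no informational value and strictly gains reward. This reliance on the one-correct-key assumption is essential: with multiple correct keys a rejected test genuinely rules out possibilities, which is exactly why exploitativeness can fail in that regime, as noted in the discussion of \Cref{subsec:many-key}.
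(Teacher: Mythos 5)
Your proposal is correct and follows essentially the same route as the paper: determinism is obtained from the finite-horizon MDP formulation (citing the classical existence of deterministic optimal policies), and exploitativeness from an exchange argument that replaces any post-identification deviation from $k^*$ with playing $k^*$, which can only increase reward since the posterior collapses to a point mass once the correct key is found. Your version is somewhat more careful than the paper's appendix argument about the round-by-round coupling of the modified policy, but the underlying idea is identical.
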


As a consequence of Observation \ref{ob1:exploitative} we can restrict our analysis to deterministic and exploitative policy for the remainder of this paper  when the setting has one correct key. Observation \ref{ob1:exploitative} is intuitive.
What is somewhat less intuitive, however, is that this same insight ceases to be true when there are multiple correct keys -- we present such a counterexample in Appendix \ref{appendix:multi_exploit_suboptimal}, which also hints at the non-triviality of settings with many correct keys even with a known keychain order. 

The proof of Observation \ref{ob1:exploitative} (deferred to Appendix \ref{sec:app_obs1}) is based on a reformulation of the keychain problem with one correct key as a Markov Decision Process (MDP), albeit with exponentially many states. The observation follows since any MDP admits optimal policies that are deterministic  \cite{puterman2014markov}. 

\subsection{The Basic Keychain Problem as Weighted Bipartite Matching}\label{subsec:intro_warmup}




A key conceptual contribution of this work is to establish  intrinsic connections between the Keychain Problem and other classic algorithmic problems. 
We start from connecting this basic setting of the keychain problem to   bipartite matching. 

Observation \ref{ob1:exploitative} suggests that it is without loss of generality to search for exploitative and deterministic policies. Because these policies act predictably after identifying the correct key, we model a policy by the keys it tests when it has not yet identified the correct key.
We encode deterministic and exploitative policies as a function $\pi: [m]\to [n] \cup \{ \perp \}$ from keychains to keys that are tested for the first time.
Since each key can be tested for the first time at most once, $\pi$ must be injective on $[n]$. 
This hints at a bipartite matching structure that we formalize below. 

    
    


\begin{proposition}\label{prop:intro_fixedalgo}
Computing the Bayes optimal policy for the Keychain Problem with known keychain order and one correct key reduces in $O(nm^2)$ time to   maximum weighted bipartite matching. 
\end{proposition}

\begin{proof}[Proof Sketch.] Let $  \langle n, \{ C_t \}_{t=1}^m, p \rangle$ denote the keychain problem instance, and $\pi$ be an arbitrary deterministic and exploitative policy. Construct a bipartite graph with left-side nodes as all keys and right-side nodes as all keychains. An edge $e=(k, t)$ exists if and only if key $k \in C_t$. Since there is only one correct key, the policy $\pi$ can be fully specified by the first keychain $C_t$ that the key $k$ is tried, corresponding to picking the edge $(k, t)$ in the constructed bipartite graph. Since any key only has one chance to be tried for the first time and any chain can only be explored once, all these edges induced by the policy $\pi$ as above forms a bipartite matching.  

If policy $\pi$ selects key $k$ on keychain $C_t$, its future (expected) reward is: $ 
    r_{k,t} =  \Pr_{k^*\sim p}[k^* = k]   \cdot  \left(\sum_{\tau=t}^{m} \1{k\in C_{\tau}}\right)$. The crux of the proof is to show that      
   the  expected  reward of any policy $\pi$ is precisely the total weight  of its corresponding matching, with $r_{k, t}$ as the weight of edge $(k,t)$. Intuitively, this is due to  the law of total expectation that adds up the reward $r_{k, t}$ from each edge $(k,t)$ chosen by the policy choice. We defer formal arguments to Appendix \ref{append_sec_reduction_mwbp}.    
\end{proof}

Proposition \ref{prop:intro_fixedalgo} implies an efficient algorithm for computing the Bayes optimal solution. We remark that such tractability does not extend to situations with many correct keys. As mentioned above, if many keys are correct, exploitative policies are no longer necessarily optimal. In  Appendix \ref{app_subsec:many-key}, we further show the APX-hardness of finding the Bayes optimal policy with a succinctly described prior over   correct keys.


\section{Keychain Problems with Probabilistic Chain Orders}\label{sec:scenarios}

This section investigates a natural generalization of the basic setting in Section \ref{sec:warm-up}. That is, the keychain order is not fully known; instead the locksmith only has a probabilistic prior about the order.

The locksmith now has probabilistic prior on both the correct keys  and  the keychain  orders. To capture their potential correlations,  we conveniently refer to the joint event of (correct key, keychain order) as a \emph{scenario} of the problem.\footnote{The term ``scenario''  is   commonly used term for similar purposes in  stochastic search problems; see e.g. \cite{heitsch2009scenario}.}     Let $\calS$ denote the support of all possible scenarios. With slight abuse of notation, the locksmith's prior   $p\in\Delta(\calS)$ now is a distribution over the scenarios. 
Each scenario encodes a realization of the environment. Formally, scenario $s\in\calS$ has an associated number of rounds $m(s)$, a sequence of keychains $C_1(s),\dots,C_{m(s)}(s) \in 2^{[n]}\setminus \{\emptyset\}$, and a correct key $k^*(s)\in[n]$. Let $m=\max_{s\in\calS} m(s)$ denote the maximum number of rounds in any scenario.

Given the instance input $\langle n, \calS, p \rangle$, the locksmith participates in a sequential decision-making process.
This process begins with the environment privately drawing a scenario $s\sim p$ from the prior.
On each round $t$ of $m(s)$ total rounds,  the keychain $C_t(s)$ is presented to the locksmith.
The locksmith then irrevocably selects a key $k_t\in C_t(s)\cup\{\perp\}$ from the keychain to test or opts to test nothing. 
The environment then reveals to the locksmith if $k_t = k^*(s)$.
Finally, the locksmith obtains a reward of $\1{k_t = k^*(s)}$.
We summarize this process in Protocol \ref{prot:scenarios_setting}.

\begin{protocol}[ht]
    \KwIn{
        Keys $[n]$, scenarios $\calS$, and prior $p\in\Delta(\calS)$.
    }
    The environment privately draws a scenario $s\sim p$\;
    \For{each round $t=1,\dots,m(s)$}{
        The learner observes keychain $C_{t}(s)$\;
        The learner irrevocably selects a key $k_t \in C_{t}(s) \cup \{\perp\}$\;
        The learner receives and observes reward $\1{k_t = k^*(s)}$\;
    }
    
    \caption{\textsc{The Interaction Protocol with Probabilistic Scenarios} ($[n]$, $\calS$, $p$)}
    \label{prot:scenarios_setting}
\end{protocol}


The locksmith looks for a policy $\pi$ that maximizes her expected reward, which is
 $   r(\pi) = \E_{p,\pi}\left[ \sum_{t=1}^{m(s)} \1{k_t = k^*} \right] $ 
where the expectation is with respect to the randomness of the environment $p$ and the possible randomness of the locksmith's policy $\pi$.
We consider algorithms that run in time polynomial in the number of keys $n$, the maximum keychain sequence length $m$, and the number of scenarios $|\calS|$.

\subsection{Two Related Problems and the Main Result}  
We now describe two  combinatorial problems that are  of natural interest in algorithmic economics. 
The first  is a familiar one.  
\begin{definition}[Combinatorial Auctions with XOS Valuations  (see, e.g., \cite{dobzinski2005auctions, dobzinski2010auctions})]\label{def:scenarios_comboauctions}
    The Combinatorial Auctions problem is defined on a set of buyers $[n]$ and a set of items $[m]$.
    Each buyer $i\in [n]$ has an associated normalized, monotone, and XOS valuation $v_i:[m]\to\R_{\ge 0}$;
    the auctioneer is assumed to have access to a value, demand, and supporting price oracle for each   valuation function $v_i$. 
    The goal is to compute an allocation (partition on items) $S_1,\dots, S_n$ that maximizes $v(S_1,\dots, S_n) = \sum_{i\in[n]} v_i(S_i)$.
\end{definition}


The second problem is a novel generalization of the classical  \emph{bipartite} matching, which we term the \emph{laminar matching} problem.  
Laminar matching augments classic bipartite matching on a bipartite graph $G = (L\cup R, E)$  by allowing each right node $j \in R$ to have a type set $T_j$.  
For any type $t$, let $R(t) = \{j\in R \mid t\in T_j\}$ denote the set of all right nodes including type $t$, and let $G(t) =   (L\cup R(t), E) $ denote the \emph{sub-graph} of $G$ induced by excluding all right-side nodes outside $R(t)$ as well as their corresponding edges.   Laminar matchings are then defined as follows. 
 
\begin{definition}[Laminar Matching]\label{def:scenarios_laminarmatch}
 Consider a bipartite graph $G = (L\cup R, E)$ and a collection of type sets $\{T_j\}_{j\in R}$ on right nodes that form a laminar family.\footnote{   A   set family $T_1, \dots, T_n \subseteq \mathcal{T}$ over the ground set $\mathcal{T}= \{t_1, \dots, t_k\}$ is called a \emph{laminar family} if, for all $i$ and $j$, we have $T_i \cap T_j = T_i$, or $T_i \cap T_j = T_j$, or $T_i \cap T_j = \emptyset$.}  An  edge set $M\subseteq E$ is called a laminar matching if $M$ restricted to $G(t)$ is a standard bipartite matching in the induced bipartite subgraph  $G(t)$ for every type $t$.  
\end{definition}



 It is easy to see that laminar matching  degenerates to the standard bipartite matching when $T_j  = \{ t \}$ is the same singleton set for every $j$.  

\begin{definition}[Maximum Weight Laminar Matching]\label{def:scenarios_mwlm}
 In the Maximum Weight Laminar Matching (MWLM) problem, one is given a bipartite graph $G=(L\cup R, E)$ with edge weights $w$ and a collection of type sets $\{T_j\}_{j\in R}$ on right nodes that form a laminar family. The goal is find a laminar matching $M$ with maximum weight $w(M) = \sum_{(i,j)\in M} w_{i,j}$.
\end{definition}

Our main result of this section is the establishment of the following formal connection between the Keychain problem and the two related problems described above. 
 
\begin{theorem}[Reductions from Keychain to MWLM, then to  Auctions. ]\label{thm:probalisitc-main} {\color{white}a} \\ \vspace{-4mm} 
 
\begin{itemize}
    \item    There is a  $O(\poly(n,m,|\calS|))$-time approximation-preserving reduction from computing the Bayes optimal policy for the Keychain Problem with probabilistic scenarios to the maximum weight laminar matching (MWLM) problem.  
    \item There is a  $O(\poly(n,m,|\calS|))$-time approximation-preserving reduction from   the maximum weight laminar matching (MWLM) problem to combinatorial auctions with XOS valuations.   
\end{itemize}
\end{theorem}
The "approximation-preserving" property means any $(\alpha, \epsilon)$-approximation algorithm for the latter can be converted in  $O(\poly(n,m,|\calS|))$-time to an $(\alpha, \epsilon)$-approximation algorithm for the former.  Recall that when the valuations of the buyers are XOS, normalized, and monotone, there are known $(1-1/e)$-approximation algorithms for the Combinatorial Auctions problem \cite{dobzinski2005auctions, dobzinski2010auctions}. As corollaries of Theorem \ref{thm:probalisitc-main}, we obtain a $(1-1/e)$-approximation for the corresponding keychain problem as well as the MWLM problem.

\subsection{Proof of Theorem \ref{thm:probalisitc-main}. }

\subsection*{Step 1:   Representing a Policy using Information Sets} 

Like the known keychain order problem with one correct key \ref{subsec:intro_warmup}, the Probabilistic Scenarios setting admits a deterministic and exploitative optimal policy.
However, the added complexity of the Probabilistic Scenarios setting requires more intricate policies beyond the simple injective mappings from rounds to keys that we previously employed.
Instead, we consider mappings between prefixes of realizable keychain sequences, called \emph{information sets}, to a previously untested key.

Before presenting our representation of a policy, we introduce some necessary notation for information sets.
An information set $o$ is the prefix of some realizable keychain sequence: $o = (C_1(s),\dots,C_t(s))$ for some scenario $s\in\calS$.
We use $\calO = \{(C_1(s),\dots,C_{t}(s))\}_{s\in\calS, t\le m(s)}$ to denote the collection of all information sets.
A scenario $s$ is said to be consistent with information set $o$ when information set $o$ is observed in scenario $s$ (e.g. $o = (C_1(s),\dots, C_t(s))$ for some round $t\le m(s)$).
The set $C(o)\subseteq\calS$ denotes the collection of all scenarios consistent with information set $o$.
We use $P(s) = \{(C_1(s),\dots,C_t(s))\}_{t\le m(s)}\subseteq\calO$ to denote the information sets for which scenario $s$ is consistent.
The information sets in $P(s)$ correspond with a chain (or root-to-leaf path) in the forest of trees representing information sets ordered by the prefix relation.
Finally, the probability of observing information set $o$ during the decision-making process is $p_o = \sum_{s\in C(o)} p_s$.

We represent policies as a mapping from information sets to keys $\pi: \calO \to [n]$.
We interpret $\pi(o) = k$ to mean that key $k$ is tested for the first time upon observing information set $o$, having not found a correct key before arriving at $o$.
Notably, since we consider exploitative policies, we require that admissible policies test each key at most once amongst all the information sets associated with a scenario.
Formally, we require that $\pi$ restricted to the information sets $P(s)$ must be injective for any scenario $s\in\calS$.
We do not explicitly require that such policies only select keys on a given chain because we ascribe $0$ reward to off-chain selections.
Protocol \ref{prot:scenarios_implementation} below describes how to implement an admissible policy $\pi$.

\begin{protocol}[ht]
    \KwIn{
        An admissible deterministic and exploitative policy $\pi:\calO \to [n]\cup\{\perp\}$.
    }
    Initialize $k^* = \perp$\;
    \For{each round $t=1,\dots,m$}{
        Observe chain $C_t$ and defined information set $o_t\gets (C_1,\dots, C_t)$\;
        \eIf{$k^* = \perp$}{
            Play $k_t = \pi(o_t)$ when $\pi(o_t) \in C_t$, otherwise play $k_t = \perp$\;
            If $k_t$ is correct, update $k^* = k_t$;
        }{
            Play $k_t = k^*$ when $k^*\in C_t$, otherwise play $k_t=\perp$\;
        }
    } 
    \caption{\textsc{Exploitative and Deterministic Policy Implementation}($\pi$)}
    \label{prot:scenarios_implementation}
\end{protocol}


Finally, we compute the expected reward of an admissible policy $\pi$.
Consider if an exploitative policy selects key $k$ upon observing information set $o = (C_1(s), \dots ,C_t(s))$ when scenario $s$ is realized.
The reward our policy would recover is
\begin{align*}
    r_{k,o,s}
&= \underbrace{\1{k\in C_t \text{ and }k = k^*(s)}}_{k\text{ is correct and selectable}} \underbrace{\left(\sum_{t'=t}^{m(s)} \1{k \in C_{t'}(s)}\right)}_{\text{future reward}}
\end{align*}
Notably, $r_{k,o,s} = 0$ when $o\notin P(s)$ or $k=\perp$.
Since we do not know the realized scenario, our algorithm cannot directly consider the above quantity.
Instead, we focus on the expected reward of selecting key $k$ at information set $o$, conditioned on scenarios consistent with our observations:
\begin{align*}
    r_{k,o}
    = \E\left[ r_{k,o,s} \middle| s\in C(o)\right]
    = \sum_{s\in C(o)} \left(\frac{p_s}{p_o}\right) r_{k,o,s}
\end{align*}
With these formalizations, we compute the expected reward of an admissible policy $\pi$:
\begin{align*}
    r(\pi) &= \E_{s\sim p} \left[ \sum_{k, o\in P(s)} r_{k,o,s}\1{\pi(o) = k} \right]
    = \sum_{s} \left( p_s \sum_{k, o\in P(s)} r_{k,o,s}\1{\pi(o) = k} \right) \\
    &= \sum_{k, o} \left(\left(\sum_{s\in C(o)} p_s r_{k,o,s}\right) \1{\pi(o) = k} \right) 
    = \sum_{k,o} p_o r_{k,o} \1{\pi(o) = k}
\end{align*}

\subsection*{Step 2:  From Keychain Problems to Maximum Weight Laminar Matching }
Recall from Definition \ref{def:scenarios_laminarmatch}, Maximum Weight Laminar Matching (MWLM) is defined on an edge-weighted bipartite graph and a set of types associated with each right node. 
The goal is to find a maximum weight laminar matching: an edge set in which each right node has at most one neighbor and each left node has at most one neighbor of each type.

In our reduction, keys correspond to left nodes, information sets correspond to right nodes, and types correspond to scenarios.
The types associated with a right node (information set) are exactly the set of scenarios consistent with the information set.
We define a bijection between policies and laminar matchings, in which an edge from a left node (key) $k$ to a right node (information set) $o$ encodes that the policy selects key $k$ on information set $o$ (e.g. $\pi(o) = k$).
By carefully selecting edge weights, the weight of any laminar matching is the expected reward of its corresponding policy.

\begin{lemma}[Reduction to Maximum Weight Laminar Matching]\label{lem:scenarios_redmwlm}
    For any $\alpha \in (0,1]$ and $\epsilon \ge 0$, if there is a polynomial time $(\alpha, \epsilon)$-approximation algorithm for the Maximum Weight Laminar Matching problem then there is a polynomial time $(\alpha, \epsilon)$-approximation algorithm for the Probabilistic Scenarios problem.
\end{lemma}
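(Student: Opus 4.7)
The plan is to exhibit a polynomial-time reduction from Probabilistic Scenarios to MWLM together with an inverse map on solutions that preserves objective value exactly, so any $(\alpha,\epsilon)$-approximate matching decodes into an $(\alpha,\epsilon)$-approximate policy. Given an input $\langle [n], \calS, p\rangle$, I would construct the MWLM instance by taking $L = [n]$ (keys as left nodes), $R = \calO$ (information sets as right nodes), forming the complete bipartite graph, and assigning edge weights $w_{k,o} = p_o \cdot r_{k,o}$, each computable in polynomial time from the explicit prior via $w_{k,o} = \sum_{s\in C(o)} p_s \cdot r_{k,o,s}$. For each right node $o$ I would take the type set to be $T_o = C(o) \subseteq \calS$. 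Since $|\calO| \le m|\calS|$, the instance has polynomial size.

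Next I would verify that $\{C(o)\}_{o\in\calO}$ forms a laminar family. For any $o_1,o_2 \in \calO$, either one is a prefix of the other, in which case $C(\cdot)$ is monotonically shrinking and $C(o_1) \subseteq C(o_2)$ or $C(o_2)\subseteq C(o_1)$, or the two sequences disagree at some round, in which case no single scenario is consistent with both, forcing $C(o_1)\cap C(o_2) = \emptyset$. This is exactly the laminar condition.

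Then I would establish a value-preserving bijection between admissible deterministic exploitative policies and laminar matchings. Forward map: given $\pi$, let $M_\pi = \{(\pi(o), o) : o\in\calO,\ \pi(o)\neq \perp\}$. The right-degree condition of \Cref{def:scenarios_laminarmatch} holds since $\pi$ is a function. The left-type condition says that for each key $k$ and each scenario $s$, at most one $o$ with $s\in T_o$ (i.e.\ $o\in P(s)$) satisfies $\pi(o)=k$; this is precisely the admissibility requirement that $\pi$ restricted to $P(s)$ be injective. The inverse map sends a laminar matching $M$ to the policy $\pi_M(o) = k$ if $(k,o)\in M$ and $\perp$ otherwise, and the left-type constraint applied scenario-by-scenario guarantees admissibility. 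Finally,
\begin{equation*}
w(M_\pi) \;=\; \sum_{(k,o)\in M_\pi} p_o\, r_{k,o} \;=\; \sum_{o\in\calO} p_o\, r_{\pi(o),o} \;=\; r(\pi),
\end{equation*}
matching the reward formula derived in \Cref{subsec:scenarios_setting}. Combined with \Cref{ob1:exploitative}, which says restricting attention to deterministic exploitative policies is without loss of generality, this yields $\opt_{\mathrm{MWLM}} = \opt_{\text{Keychain}}$.

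To finish: given an $(\alpha,\epsilon)$-approximation algorithm $\calA$ for MWLM, run $\calA$ on the constructed instance to obtain a matching $M$ with $w(M) \ge \alpha \cdot \opt_{\mathrm{MWLM}} - \epsilon$, then output $\pi_M$; by the preceding identities, $r(\pi_M) = w(M) \ge \alpha\cdot \opt_{\text{Keychain}} - \epsilon$. The main obstacle is purely bookkeeping: double-checking that the left-type constraint of laminar matching really coincides with injectivity of $\pi$ on each $P(s)$, and that laminarity of $\{T_o\}$ (as opposed to single-type-per-node) is what the downstream MWLM algorithm actually needs. No genuinely hard step arises beyond verifying these structural identifications.
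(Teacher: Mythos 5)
Your proposal is correct and follows essentially the same route as the paper: the identical construction (keys as left nodes, information sets as right nodes, types $T_o = C(o)$, weights $p_o r_{k,o}$), the same bijection between admissible policies and laminar matchings, and the same value-preservation calculation. The only difference is that you explicitly verify laminarity of $\{C(o)\}$ and the correspondence between the left-type constraint and injectivity on each $P(s)$, details the paper leaves implicit.
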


 \begin{proof} 
    Given an instance of the Probabilistic Scenarios problem $\inner{n, \calS, p}$, we construct an instance of MWLM $\inner{G = (L\cup R, E), w, T}$.
    The set of left nodes in the bipartite graph corresponds with keys $L = [n]$.
    Right nodes correspond with information sets $R = \calO$.
    The bipartite graph is complete (e.g. $E = L\times R$).
    The weight of edge $(k,o) \in L\times R$ is the expected reward of picking key $k$ upon observing information set $o$.
    Specifically, we assign weight $w_{k,o} = p_o r_{k,o}$.

    We now describe a bijection between admissible policies $\pi$ and laminar matchings, in which the expected reward of a policy is equivalent to the weight of the corresponding laminar matching.
    Admissible policy $\pi$ maps to the matching $M = \{(k,o)\in E \mid \pi(o) = k \}$.
    This mapping is bijective.
    Moreover, the expected reward of admissible policy $\pi$ and its corresponding laminar matching $M$ satisfy:
    \begin{align*}
        r(\pi)
        = \sum_{k,o} p_o r_{k,o} \1{\pi(o) = k}
        = \sum_{(k,o)\in M} w_{k,o}
        = w(M)
    \end{align*}
    The reduction takes polynomial time.
    Thus, a polynomial time $(\alpha,\epsilon)$-approximation algorithm for the Probabilistic Scenarios problem is implied by a polynomial time $(\alpha, \epsilon)$-approximation algorithm for Maximum Weight Laminar matching.
 \end{proof}

\subsection*{Step 3: From Laminar Matching to  XOS Combinatorial Auctions}\label{subsubsec:scenarios_combo}
Our third step develops a polynomial-time and approximation-preserving reduction from MWLM to Combinatorial Auctions with XOS valuations, as formally defined in Definition \ref{def:scenarios_comboauctions}. 

At a high level of our reduction, each left node is a buyer and each right node is an item.
An assignment of bundles corresponds with a laminar matching.
The constraint that each item is allocated once in an assignment guarantees that each right node has at most one neighbor in the laminar matching.
We use the valuations of buyers to encode the constraint that each left node is matched to at most one neighbor of each type in the laminar matching.
Specifically, the valuation of a buyer (left node) for a bundle of goods is the maximum weight subset of items (right nodes) in the bundle that do not share a type.
We call such valuations \textit{antichain} functions.
Finally, we specify a surjective map from allocations to laminar matchings in which the social welfare admitted by the allocation is the weight of the corresponding laminar matching.

\textbf{Preliminaries: }Our reduction necessitates background about set functions.
In the main body, we assume familiarity with previously studied XOS functions, monotone functions, normalized functions, value oracles, demand oracles, and supporting price oracles. The formal definitions of  these objects are provided in Appendix \ref{sec:app_scenario_background} for completeness.

We introduce a special class of XOS functions called antichain functions and summarize relevant facts about them.
Formal proofs of all claims related to antichain functions are implied by results in Appendix \ref{sec:app_scenario_antichain} for a more general class of set functions.
\begin{definition}[Antichain Set Function]
    A set function $f:2^{[n]}\to \R_{\ge 0}$ is an antichain function if there are weights $w_1,\dots,w_n \ge 0$ and laminar sets $T_1,\dots,T_n\subseteq T$ such that on any input $S\in 2^{[n]}$ we have:
    \begin{align*}
        f(S) = \max_{A\in \ac(S)} w(A)
    \end{align*}
    where $\ac(S) = \{A\subseteq S\mid T_i \cap T_j = \emptyset \text{ for all }i,j\in A \text{ with }i\neq j\}$ is the collection containing all subsets of $S$ that only contain incomparable items.
\end{definition}

It turns out that every antichain function is XOS, monotone, and normalized.

\begin{fact}[XOS, Monotonicity, and Normalized]
    If $f$ is an antichain function, then $f$ is XOS, monotone, and normalized.
\end{fact}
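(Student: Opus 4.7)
The plan is to verify the three properties---normalized, monotone, XOS---in turn. Normalization and monotonicity follow immediately from the definition, while XOS requires exhibiting an explicit family of additive clauses. Since the paper says the formal proofs live in the appendix (under a more general set-function class), the proposal here is just the short, structural argument specialized to antichain functions.

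First, for normalization, the only subset of $\emptyset$ is $\emptyset$ itself, which is vacuously an antichain, so $f(\emptyset) = w(\emptyset) = 0$. Next, for monotonicity, suppose $S \subseteq S'$. Every $A \in \ac(S)$ is a subset of $S$ whose $T_i$'s are pairwise disjoint, hence a subset of $S'$ with the same property; thus $\ac(S) \subseteq \ac(S')$, and taking the maximum of $w(\cdot)$ over a larger family yields $f(S) \leq f(S')$.

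For the XOS property, for each antichain $A \in \ac([n])$ I would define the additive clause $a^A : 2^{[n]} \to \R_{\ge 0}$ with per-item weights $w_i \cdot \mathbf{1}\{i \in A\}$, so that $a^A(S) = w(S \cap A)$. I then want to show $f(S) = \max_{A \in \ac([n])} a^A(S)$. For the upper bound on the right-hand side, pick $B \in \ac(S)$ attaining $f(S)$; since $B \subseteq [n]$ and has pairwise-disjoint $T_i$'s, $B \in \ac([n])$, and $a^B(S) = w(S \cap B) = w(B) = f(S)$. For the lower bound on the right-hand side, for any $A \in \ac([n])$ the intersection $A \cap S$ is still a subset of $S$ with pairwise-disjoint $T_i$'s (disjointness is preserved under taking subsets), so $A \cap S \in \ac(S)$, giving $a^A(S) = w(A \cap S) \leq f(S)$.

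There is essentially no obstacle: the only point to flag is the closure of antichains under restriction, which is immediate. The argument is short enough that it could be inlined, but given that the appendix handles a more general class, the main job here is to make clear to the reader that the three verifications are all one-line consequences of the definition of $\ac(\cdot)$ and the standard ``indicator weights'' construction of XOS clauses.
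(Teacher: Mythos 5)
Your proof is correct and follows essentially the same route as the paper's (which proves these facts in the appendix for the more general $k$-disjoint antichain functions): the trivial check $\ac(\emptyset)=\{\emptyset\}$ for normalization, the containment $\ac(S)\subseteq\ac(S')$ for monotonicity, and indicator-weight clauses indexed by antichains of $[n]$ for the XOS representation. Your two-direction verification of $f(S)=\max_{A\in\ac([n])} a^A(S)$ is in fact slightly more explicit than the paper's one-line identity, but the underlying construction is identical.
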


The last fact we require is that antichain functions admit efficient valuation, demand, and supporting price oracles.
Each of these oracles reduces to finding a maximum-weight antichain on a vertex-weighted forest.
We specifically consider forests induced by laminar type sets $\{T_i\}_{i\in S}$; these graphs have vertex set $S$ and require that every $i,j\in S$ with $T_i \subset T_j$ has $j$ as an ancestor of $i$.
Note that if two nodes have the same type set ($T_i = T_j$), $i$ and $j$ must be a parent and child, in any order. 
Such forests can be computed efficiently.

In the case of value and supporting price queries on an input subset of items $S\in 2^{[n]}$, the maximum weight antichain on the forest induced by laminar types $\{T_{i}\}_{i\in S}$ with vertex weights $w$ encodes a maximizing XOS clause.
For demand queries, the bundle of items in demand is a maximum weight antichain on the forest induced by laminar types $T_1,\dots, T_n$ with vertex weights $w_i-p_i$ for each vertex $i$.

We use dynamic programming to compute maximum weight antichains.
There is an entry $\opt[i]$ in our dynamic programming table for each node $i$.
The value of entry $\opt[i]$ is the maximum weight of an antichain in the subtree rooted at $i$.
There are two cases we consider in our recursion at node $i$.
First, the maximum weight antichain is $\{i\}$ (note that the inclusion of any descendant of $i$ breaks the antichain property).
Second, the maximum weight antichain is not $\{i\}$ but some collection of descendants of $i$.
This yields the following recursive definition:
\begin{align*}
    \opt[i] = \max\left\{w_i, \sum_{j\in \child(i)} \opt[j]\right\}
\end{align*}
where $\child(i)$ is the immediate children node of $i$ in the specified forest.
Finally, backtracking recovers the maximum weight antichain in the vertex-weighted forest in polynomial time.

\begin{fact}[Efficient Oracles]
    An antichain function admits polynomial-time implementations of value, demand, and supporting price oracles.
\end{fact}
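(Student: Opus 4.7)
The plan is to reduce all three oracle queries to a single subroutine, namely computing a maximum-weight antichain in a vertex-weighted rooted forest, which the DP recurrence stated in the excerpt solves in linear time with a standard backtracking step. The key structural observation is that a laminar family $\{T_i\}_{i \in S}$ induces a rooted forest $F_S$ on vertex set $S$ in which $i$ is an ancestor of $j$ precisely when $T_j \subsetneq T_i$; two indices $i, j$ satisfy $T_i \cap T_j = \emptyset$ iff they are incomparable in $F_S$. Consequently $\ac(S)$ coincides with the collection of antichains of $F_S$, and $F_S$ can be built in time polynomial in $|S|$ and the description size of the type sets.

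For the value oracle on input $S$, I would run the DP on $F_S$ with vertex weights $w$ to obtain $f(S) = \sum_{r} \opt[r]$ summed over roots of $F_S$, together with the witness antichain $A^* \in \ac(S)$ recovered by backtracking. The supporting price oracle follows at essentially no extra cost: assign $q_i = w_i$ for $i \in A^*$ and $q_j = 0$ for $j \in S \setminus A^*$. By construction $\sum_{i \in S} q_i = w(A^*) = f(S)$, and for every $T \subseteq S$ the set $A^* \cap T$ is still an antichain (subsets of antichains are antichains), so $f(T) \ge w(A^* \cap T) = \sum_{i \in T} q_i$, which is precisely the supporting-price inequality.

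For the demand oracle at prices $(p_i)_{i \in [n]}$, I would first argue that the utility-maximizing bundle can be taken to be an antichain in $F_{[n]}$. Given any $S$ with witness $A^* \in \ac(S)$ achieving $f(S) = w(A^*)$, monotonicity and $A^* \in \ac(A^*)$ give $f(A^*) \ge w(A^*) = f(S)$, while $p(A^*) \le p(S)$ since prices are nonnegative, so $f(A^*) - p(A^*) \ge f(S) - p(S)$. Hence the demand problem reduces to $\max_A \sum_{i \in A}(w_i - p_i)$ over antichains $A$ of $F_{[n]}$, which is solved by running the same DP on $F_{[n]}$ with modified weights $\tilde w_i = w_i - p_i$ and seeding each node's recurrence with a baseline of $0$, so that negative-utility items are silently dropped rather than forced in.

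The one non-routine ingredient is the swap argument used to restrict the demand oracle to antichains; everything else is straightforward forest dynamic programming and witness extraction. Each oracle call thus runs in time polynomial in $n$ and in the input size of its query, establishing that antichain functions support the required oracle interface efficiently.
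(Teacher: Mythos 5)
Your proposal is correct and follows essentially the same route as the paper: all three oracles are reduced to a maximum-weight antichain computation on the forest induced by the laminar type sets, solved by the same dynamic program, with supporting prices given by the indicator-weighted witness antichain and the demand query handled by rerunning the DP on the shifted weights $w_i - p_i$. Your explicit verification of the supporting-price inequality and the exchange argument restricting the demand bundle to an antichain are slightly more detailed than the paper's (which states these steps without proof), but they are the same ideas.
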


Having established the required background, we turn our attention to the Combinatorial Auction problem and our reduction from \mwlm.


\begin{lemma}[Reduction to Combinatorial Auctions]\label{lem:scenarios_redcombo}
    For any $\alpha\in(0,1]$ and $\epsilon\ge 0$, a polynomial time $(\alpha,\epsilon)$-approximation algorithm for the Combinatorial Auctions problem implies a polynomial time $(\alpha,\epsilon)$-approximation algorithm for the Maximum Weight Laminar Matching problem.
\end{lemma}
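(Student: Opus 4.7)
The plan is to build an instance of Combinatorial Auctions whose buyers, items, and valuations exactly mirror the structure of MWLM, so that allocations and laminar matchings correspond to each other with equal welfare/weight. Concretely, given an MWLM instance $\langle G=(L\cup R, E), w, \{T_j\}_{j\in R}\rangle$, I would create an auction with one buyer per left node $i\in L$ and one item per right node $j\in R$. For each buyer $i$, I would define the valuation as the antichain function induced by the weights $w_{i,\cdot}$ and the laminar family $\{T_j\}_{j\in R}$; that is, on a bundle $S\subseteq R$,
\[
v_i(S) \;=\; \max_{A\in\ac(S)} \sum_{j\in A} w_{i,j},
\]
where $\ac(S)$ is the collection of subsets of $S$ whose type sets are pairwise disjoint. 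By the facts already stated about antichain functions, each $v_i$ is XOS, monotone, and normalized, and admits polynomial-time value, demand, and supporting price oracles, so the resulting auction satisfies all the requirements of \Cref{def:scenarios_comboauctions} and the reduction itself runs in polynomial time.

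Next I would establish a value-preserving (and surjective) correspondence between allocations and laminar matchings. In one direction, given an allocation $(S_1,\dots,S_{|L|})$, for each buyer $i$ let $A_i\subseteq S_i$ be a maximum-weight antichain witnessing $v_i(S_i)$; set $M=\{(i,j) : j\in A_i\}$. The disjointness of $S_1,\dots,S_{|L|}$ ensures every right node has at most one neighbor in $M$, and the antichain condition on each $A_i$ is exactly the constraint $|N_M(i)\cap R(t)|\le 1$ for every type $t$, so $M$ is a laminar matching of weight $\sum_i w(A_i)=\sum_i v_i(S_i)$. In the other direction, given a laminar matching $M$, define $S_i = N_M(i)$ for each $i$ and distribute any leftover items arbitrarily; $N_M(i)$ is itself an antichain by the matching constraint, so $v_i(S_i)\ge w(N_M(i))$, and summing gives welfare at least $w(M)$ (using monotonicity to absorb the leftover items without loss).

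From this correspondence it follows that $\opt_{\text{auction}}=\opt_{\mwlm}$, and any feasible auction allocation can be converted in polynomial time (via one demand/supporting-price call per buyer, which returns the antichain $A_i$ explicitly) into a laminar matching of the same weight. So running a polynomial-time $(\alpha,\epsilon)$-approximation for the auction and converting the output yields a laminar matching of weight at least $\alpha\cdot\opt_{\text{auction}}-\epsilon=\alpha\cdot\opt_{\mwlm}-\epsilon$, establishing the claimed reduction.

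The one place where I expect to need real care, rather than routine bookkeeping, is checking that all three oracles for the antichain valuations can genuinely be implemented in polynomial time and that the supporting-price oracle produces a price vector consistent with the XOS representation used in the black-box auction algorithm; this is where the laminar structure is essential, since it reduces each oracle call to a maximum-weight antichain computation on a vertex-weighted forest, solvable by the straightforward dynamic program described above for antichain functions. Once that is in place, the remaining steps (the bijection between allocations/matchings and the preservation of weights) are direct consequences of the definitions.
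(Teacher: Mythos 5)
Your proposal is correct and follows essentially the same route as the paper's proof: one buyer per left node, one item per right node, antichain valuations $v_i(S)=\max_{A\in\ac(S)}\sum_{j\in A}w_{i,j}$, and a value-preserving surjection between allocations and laminar matchings, with the oracle implementations reduced to maximum-weight antichain computations on the forest induced by the laminar family. Your two-directional accounting (forward map preserves weight exactly; backward map yields welfare at least $w(M)$, hence $\opt_{\text{auction}}=\opt_{\mwlm}$) is a slightly more explicit version of the paper's surjectivity argument, but the substance is identical.
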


\begin{proof}
    Given an instance of MWLM $\inner{G=(L\cup R), w, T}$, we construct an instance of Combinatorial Auctions $\inner{n, m, \{v_i\}_{i\in[n]}}$.
    Each left node corresponds with a buyer $[n] = L$.
    Each right node has an associated item $[m] = R$.
    The valuation $v_i$ of buyer $i$ for a bundle $S\subseteq [m]$ of goods is the maximum weight subset of right nodes with distinct types.
    For a bundle $S$, recall that $\ac(S) = \{ A\subseteq S\mid  T_j \cap T_{j'} = \emptyset \text{ for all distinct } j,j'\in A\}$ is the collection of subsets containing at most one item of each type.
    \begin{align*}
        v_i(S) = \max_{A \in \ac(S)} \sum_{j\in A} w_{i,j}
    \end{align*}
    Notice that $v_i$ is an antichain function.
    Thus, it admits a linear time value, demand, and supporting price oracle.

    We now create a surjective map from allocations to laminar matchings.
    Allocation $S_1,\dots,S_n\subseteq [m]$ maps to the matching in which left node $i$ has an edge to right node $j$ when the supporting prices of $i$ on allocation $S_i$ has non-zero price on item $j$:
    \begin{align*}
        M = \left\{(i,j)\in E \,\middle|\, j \in A \text{ for antichain } A = \arg\max_{A'\in\ac(S_i)} \sum_{j\in A'} w_{i,j}\right\}
    \end{align*}
    where the argmax operation breaks ties arbitrarily, but consistently.
    The mapping is surjective.
    For an arbitrary laminar matching $M$, the allocation $S_1,\dots,S_n$ where $S_i = \{j\in R\mid (i,j)\in M\}$ is associated with $M$.

    Finally, we show that the value of an allocation is equal to the weight of its associated laminar matching.
    Let $A_i \in \arg\max_{A\in\ac(S_i)} \sum_{j\in A} w_{i,j}$ be the set of items allocated to buyer $i$ with non-zero supporting price for their allocation $S_i$ (break ties arbitrarily but consistently). 
    \begin{align*}
        \sum_{i} v_i(S_i)
        = \sum_{i} \sum_{j\in A_i} w_{i,j}
        = \sum_{(i,j)\in M} w_{i,j} 
        = w(M)
    \end{align*}

    The reduction requires polynomial time.
    A polynomial time $(\alpha,\epsilon)$-approximation for combinatorial auctions implies a polynomial time $(\alpha,\epsilon)$-approximation for the MWLM Problem.
\end{proof}

\subsection{Beyond Theorem \ref{thm:probalisitc-main}:  Applications and Generalizations}
Since combinatorial auction with XOS has been shown to admit  an efficient $(1-1/e)$-approximation \cite{dobzinski2010auctions}, a straightforward application of Theorem \ref{thm:probalisitc-main} is that both keychain problem with explicitly described Probabilistic Scenarios and MWLM   admit efficient $ (1-1/e)$-approximations as well. Next we illustrate how the established reductions above can be further applied to derive novel results for classic problems. 

\subsubsection{An Application of the Reductions to Weighted Online Bipartite Matching.} 
 
To showcase   applications of our techniques beyond Keychain Problems,  we illustrate how the formal connections we established could be employed to reduce Philosopher Inequalities for Stochastic Edge-Weighted Online Bipartite Matching to the XOS Combinatorial Auction problem and, consequently, obtain a novel $(1-1/e)$-approximation algorithm for the Philosopher Inequality benchmark \cite{papadimitriou2021online}.  

\paragraph{  Edge-Weighted Online Bipartite Matching and the Philosopher Inequality.} 
In this setting, there are $n$ offline nodes and $m$ online nodes.
Before matching, the matcher is given a prior $p\in\Delta(\mathbb{R}_{\ge 0}^{n\times m})$ over edge weights. The matching process begins with the environment privately drawing edge weights $w\sim p$.
The right nodes $j=1,\ldots,m$ arrive one at a time, and upon arrival the environment reveals edge weights $w_{1,j},\dots,w_{n,j}$.
Upon observing the weights of online node $j$, the matcher must irrevocably select an offline neighbor $i$ to match to node $j$ or opt to match nothing.
The matcher must maintain a matching in which offline node $i$ has at most one neighbor. The goal is to compute a policy to compete against a Philosopher who uses their unconstrained computational power to find a Bayes optimal policy. 
Notably, our results apply to a generalized setting that allows arbitrary edge weight correlation in the prior as well as one-sided $b$-matching, whereas \cite{braverman2022max, braverman2025new, papadimitriou2021online} assume edge weights are independently drawn and restricted to standard bipartite matching. 
However, our results require that there are only polynomially-many realizable edge sets.
Although the current state of the art is a $0.678$-approximation ($>1-1/e$) for their setting, their results are not applicable here.

As a direct application of Theorem \ref{thm:probalisitc-main} and its proof techniques, we can obtain the following new results for the Philosopher Inequalities. For completeness, detailed proof of the proposition is provided in Appendix \ref{subsec:app_scenarios_obm}.

\begin{restatable}[Online Bipartite Matching Approximation]{proposition}{obm}\label{prop:scenarios_obmapprox}
    The Philosopher Inequality for the Edge-Weighted Online Bipartite Matching problem admits a $(1-1/e)$-approximation algorithm that runs in time $O(n,m,|\supp(p)|)$.
\end{restatable}



\subsubsection{Generalization of Theorem \ref{thm:probalisitc-main} to Black-box Prior Access.} Our proof techniques of Theorem \ref{thm:probalisitc-main} can be further generalized to the setting where the locksmith does not know the prior $p$  but merely has sample access to it. 
Intuitively, our algorithm can sample scenarios to estimate the reward of selecting a key at any information set within a small additive error with high probability.
By applying Lemmas \ref{lem:scenarios_redmwlm} and \ref{lem:scenarios_redcombo}, we can derive  a  $(1-1/e, \epsilon)$-approximation with small and controllable additive error and failure probability. For completeness, the detailed proof is provided in Appendix \ref{sec_app:black-box-prop}. 

\begin{restatable}
{proposition}{scenariosampleapprox}\label{prop:scenarios_sampleapprox}
In the black-box prior setting, there is a $O(\poly(\epsilon^{-1},\delta^{-1},n,m,|\calS|))$ time algorithm that computes a  $\left(1 - 1/e, \epsilon\right)$-approximate policy for the Probabilistic Scenarios setting with probability at least $1-\delta$ for any $\epsilon > 0$ and $\delta > 0$.
    This algorithm uses $O\left(\left(\frac{m^2 |\calO|^2}{\epsilon^2}\right) \ln\left(\frac{n |\calO|}{\delta}\right) \right)$ scenarios sampled i.i.d. from the prior $p$.
\end{restatable}



\subsection{Keychain Problems with Probabilistic Scenarios Are Hard}\label{subsec:scenarios_hardness}
The reductions in Theorem \ref{thm:probalisitc-main} only show that the keychain problem with probabilistic scenarios is ``no harder'' than combinatorial auctions with XOS, but do not rule out the possibility that the Keychain Problem  may be polynomially solvable. We conclude this section by ruling out that possibility with an APX-hardness result. 

\begin{theorem}[Probabilistic Scenarios Hardness of Approximation]\label{thm:scenarios_hardness}
    Assuming $P \neq NP$, there is no polynomial time algorithm for the Probabilistic Scenarios problem that guarantees a $\left(\frac{4063}{4064} + \delta\right)$-approximation, for any $\delta > 0$.
\end{theorem}

Our reduction is from a special case of 3-SAT called Max-$(3, 2B)$-SAT \cite{berman2004approximation}. 
In this special case of 3-SAT, each literal appears exactly twice among all the clauses, and each clause contains three distinct literals. We defer full details of this reduction to Appendix \ref{sec_app_prob_hard}.

Interestingly, our reduction can be applied to show an impossibility result for the Philosopher Inequality as well, as stated below and formally proved in Appendix \ref{sec_app_philosopher_hard}.   

\begin{restatable}[Online Bipartite Matching Approximation]{proposition}{obmhard}\label{prop:scenarios_obmhard}
    For the Philosopher Inequality Edge-Weighted Online Bipartite Matching problem, there is no polynomial time $\left(\frac{4063}{4064} +\epsilon\right)$-approximation algorithm for any $\epsilon > 0$ assuming that $P\neq NP$.
\end{restatable}

\section{Keychain Problems with Optimizable Chain Orders}\label{sec:order}

  \Cref{sec:scenarios} investigates the Bayesian setting in which the player  has no control over the keychains' order but only has some prior knowledge about it.
In this section, we strengthen the locksmith and allow her to choose the order of keychains.
We study how this will change the problem's structural properties and complexity. 
In this setting, the locksmith receives as input a set of keys $[n]$, a set of  keychains $\mathcal{C} := C_1, \dots, C_m$ where $C_i \in 2^{[n]}\setminus\{\emptyset\}$, and a prior $p \in \Delta[n]$ over the correct key. 
At each round $t$ of the $m$ total rounds, the locksmith chooses any previously unselected keychain $C_t$ and selects a key $k_t \in C_t$, obtaining a reward of $\textbf{1}\{k_t = k^*\}$.



Before proceeding to our algorithmic analysis, we first observe that, although  the locksmith now is allowed to adaptively 
choose the keychain order, there remains no benefit in being adaptive when 
the Keychain Problem has only one correct key.  To see this, note that
the only steps of the interaction protocol where the locksmith
faces a non-trivial decision are those in which all of the keys tested in previous rounds were found
to be incorrect keys. Given any (potentially adaptive) policy that is exploitative, 
one can transform it into an exploitative policy that chooses the order of keychains
non-adaptively by simulating the behavior of the given policy when every key tested 
turns out to be incorrect. The order in which the policy selects keychains in this special 
case can be used as its keychain ordering in all cases, without changing its expected reward.
Before finding a correct key, the modified policy's behavior is identical to that of the 
original, and after finding a correct key both policies get the same reward because they
are both exploitative. Henceforth, we will assume  without loss of generality that policies choose the order of keychains non-adaptively.

\subsection{NP-Hardness of Keychain Order Selection} \label{subsubsec:order_hardness}

When each keychain can contain an arbitrary number of keys and the locksmith can pick the order of the keychains, we show that it is NP-Hard to determine the optimal strategy for the locksmith. 
This result reinforces connections between the Keychain Problem and problems on bipartite graphs -- indeed, the Keychain Order Selection Setting is closely related to the NP-hard task of finding permutations on bipartite graphs that maximize the number of ``downward'' edges.
We formalize the connection in the following.

Let $G$ be a directed bipartite graph $G = (L \cup R, E)$ such that edges only go from $L$ to $R$. Let $\sigma_L$ and $\sigma_R$ denote orderings over the left and right nodes, respectively. We say an edge $(l, r)$ points upwards if $\sigma_L(l) > \sigma_R(r)$, and otherwise it points downwards.

\begin{definition} [Downwards-Facing Bipartite Permutation (DFBP) \cite{fertin2015obtaining}]
    Given a balanced bipartite graph $G = (L \cup R, E)$ with edges going only from $L$ to $R$, determine if there are orderings of the nodes $\sigma_L$ and $\sigma_R$ such that no edges point upwards.
\end{definition}

It is well-known that the DFBP problem is NP-hard \cite{fertin2015obtaining,gerbner2016topological}.\footnote{It turns out that reordering the left-side and right-side nodes of a bipartite graph is equivalent to independently permuting the rows and columns of its adjacency matrix. The goal of obtaining a downwards facing bipartite graph then corresponds   to making the adjacency matrix an \textit{upper triangular} matrix (i.e.,   entries below the main diagonal are all $0$'s).}  
Our main result of this section is to show that DFBP reduces to the Keychain Order Selection problem, implying its NP-hardness.  


\begin{restatable}{theorem}{keychainorderhard}
\label{thm:order_hard}
The  Downwards-Facing Bipartite Permutation (DFBP) problem reduces in polynomial time to  a subclass of the Keychain Order Selection problem that has an equal number of keys and keychains ($n=m$) and a uniform prior over key correctness. 
\end{restatable}
 

A natural first attempt is to directly interpret the given DFBP instance as a Keychain Order Selection instance, mapping left nodes to keys and right nodes to keychains.
Unfortunately, this correspondence fails because the Keychain objective depends on more than just the number of downwards edges.
There are orderings of left and right nodes of YES instances of DFBP that would not be optimal for the corresponding Keychain Instance (and may not even achieve positive utility). 

To overcome this obstacle, we establish structural characterizations of the optimal solution in the uniform-prior Keychain setting.
When the prior distribution is uniform, the locksmith's reward is at most $(n+1)/2$, achievable if and only if the nodes in the bipartite graph corresponding to the Keychain Instance can be ordered such that there are $n(n+1)/2$ downwards pointing edges (Appendix \ref{subsec:app_observations_for_hardness}).

Armed with these observations, we construct a modified reduction.
Given an instance of DFBP, we augment the graph with an additional left and right node and take the transpose of the complement of the resulting graph.
This serves as our instance of the Keychain Order Selection problem, and we show that the constructed instance achieves utility $(n+1)/2$ if and only if the original DFBP instance is a YES instance, from which we conclude that the Keychain Order Selection problem is NP-hard. 
The formal proof of \Cref{thm:order_hard} appears in Appendix \ref{sec:app_order_hardness_proof}.

\subsection{A Simple (1/2)-Approximation} \label{subsubsec:order_half_approx}
We complement the above hardness with a simple algorithm that achieves half of the optimal value. The algorithm is extremely simple, though we conjecture it may be optimal (see a more detailed discussion in Appendix \ref{subsec:order_opendirections}). Specifically, the algorithm simply solves the basic Keychain Problem of Section \ref{subsec:intro_warmup} on any ordering $\sigma$ over the keychains  as well as on its reverse $-\sigma$. We show that picking the better among these two policies yields a $(1/2)$-approximation. 


\begin{restatable}{proposition}{orderhalfapprox}
    \label{thm:order_approx}
    There is a $(1/2)$-approximation of the Bayes optimal strategy for the Keychain Order Selection problem.
\end{restatable}



A key insight from our proof is that the tasks of selecting the order of the keychains and selecting which key to play on each keychain can be decoupled.
Suppose the locksmith is constrained to play a fixed \textit{key selection} that a unique $k_c$ to each keychain $c$, regardless of the order.
For any ordering of the keychains and any key $k$, at least half of the occurrences of $k$ must appear before or after the key is played according to to the key selection function.
So, the combined reward of any ordering and its reverse serves as an upper bound for the best ordering for any key selection function, implying that one of the two at least half the value of the optimal ordering for this fixed key selection function.
Finally, the optimal policy can be mapped to such a key selection function, completing the argument.
The detailed proof is in Appendix \ref{sec:app_order_approx}.

Our best-of-two approximation algorithm is strikingly similar to the folklore best-of-two (1/2)-approximation of Maximum Acyclic Subgraph (MAS), in which any ordering or its reverse is a (1/2)-approximation of the optimal solution.
These similarities suggest a deeper connection to MAS, but finding an approximation preserving hardness reduction, if one exists, is a challenging problem that we leave to future work.    
Despite their similarities, the Keychain Order Selection setting includes an additional constraint that played keys must exist on the current keychain.
This constraint is difficult to encode in the objective of MAS and prevents a straightforward application of the known hardness of approximation results for MAS \cite{guruswami2008beating, bhangale2019ug}.
Further discussion of the constraint is presented in \Cref{subsec:order_opendirections}.


\section{Concluding Discussions and Future Directions} 
In this paper, we introduced a class  of problems, collectively termed Keychain Problems, which naturally captures the opportunity cost minimization problem in sequential decision making. Employing polynomial-time reductions, we formally establish the connections between variants of Keychain problems and various problems pertaining to bipartite graphs, ranging from the basic maximum weight bipartite matching, to the novel maximum weight laminar matching (which then reduces to combinatorial auctions with XOS valuation functions), and downwards-facing Bipartite Permutation. These formal connections not only deepen our understanding about the structure and richness of the Keychain Problem, but also can be applied to derive new algorithmic results such as the $(1-1/e)$ approximation for the Philosopher Inequality in some natural settings. 

Our paper leaves open many rich areas of exploration.
In the Probabilistic Scenarios setting, our reductions run in time polynomial in the number of scenarios, meaning that they will not work for settings where we have succinct priors over exponentially many scenarios (e.g. if the next keychain is chosen uniformly at random from the remaining keychains).
In the Order Selection setting, one may hope to strengthen the connections to ordering CSPs to preserve approximation-hardness, and one can ask if our $(1/2)$-approximation is tight.
Finally, we present preliminary APX-hardness results in the Many Correct Keys setting, but there is still a wealth of unanswered and interesting research questions in this area.
A detailed discussion of open directions is presented in Appendix \ref{sec:questions}.


\bibliographystyle{plain}
\bibliography{references}
\newpage 
\appendix
\section*{\Large Appendix: Table of Contents}
\vspace{2mm}

\noindent
\textbf{\ref{sec:questions}. Open Questions and Future Directions}\dotfill \pageref{sec:questions}

\noindent
\textbf{\ref{sec:app_intro_example}. Advisor Search Calculations}\dotfill \pageref{sec:app_intro_example}

\noindent
\textbf{\ref{sec:app_related_works}. Additional Discussions on Related Works}\dotfill \pageref{sec:app_related_works}

\noindent
\textbf{\ref{sec:app_intro_policy}. Omitted Discussions and Proofs from Section \ref{sec:warm-up}}\dotfill \pageref{sec:app_intro_policy}

\noindent
\textbf{\ref{sec:app_scenario}. Omitted Details and Proofs from Section \ref{sec:scenarios}}\dotfill \pageref{sec:app_scenario}

\noindent
\textbf{\ref{sec:app_order}. Omitted Details and Proofs from Section \ref{sec:order}}\dotfill \pageref{sec:app_order}

\noindent
\textbf{\ref{subsec:app_scenarios_obm}. Weighted Online Bipartite b-Matching Algorithms}\dotfill \pageref{subsec:app_scenarios_obm}

\noindent
\textbf{\ref{sec:app_ai_disclosure}. AI Software Disclosure}\dotfill \pageref{sec:app_ai_disclosure}

\newpage
\section{Open Questions and Future Directions}\label{sec:questions}
\subsection{Approximation with Exponentially Many Probabilistic Scenarios}\label{subsec:scenarios_openprob}
Our algorithms for the Probabilistic Scenarios setting (\Cref{sec:scenarios}) run polynomially in the number of scenarios.
We ask the intriguing question: Is there an efficient $(1-1/e)$-approximation for instances with succinct priors over exponentially many scenarios?
Specifically, we consider prior distributions that admit an efficient future scenario sampling oracle.
This oracle takes as input an information set $o$ and returns a scenario $s$ sampled from the prior conditioned on the occurrence of $o$: scenario $s$ is returned with probability $(p_s/p_o)\cdot\1{s\in C(o)}$.

\begin{question}[Exponentially Many Scenarios]
    Given a future scenario sampling oracle, is there a polynomial-time $(1-1/e)$-approximation algorithm for the probabilistic scenarios problem?
\end{question}

The main observation giving hope that such an algorithm is possible is the fact that the locksmith does not need an entire description of her policy $\pi$.
She only needs the output of her policy $\pi$ restricted to the set of realized information sets, which is observed in an online fashion.
We foresee the future scenario sampling oracle being applied to each observed information set to learn which available key should be tested.

In addition to the instance in which each of the $m$ keychains is drawn i.i.d.\ from a known distribution (analogous to the online bipartite matching setting investigated in \cite{papadimitriou2021online, braverman2022max,braverman2025new}), this future sampling algorithm could be applied to much more intricate stochastic models.
The most natural of these instances has a prior that presents a sequence of known keychains whose ordering is chosen uniformly at random.

\subsection{Open Directions in the Keychain Order Selection Setting} \label{subsec:order_opendirections}

Both the approximation algorithm and the NP-Hardness result in Section \ref{sec:order} point towards connections between the Keychain Order Selection problem and quadratic assignment problems \cite{nagarajan2009quadratic}. 
Indeed, the (1/2)-approximation is very similar to the known folklore (1/2)-approximation of Maximum Acyclic Subgraph (MAS), in which either an ordering of the nodes or its reverse satisfies at least 1/2 the edges.
Similarly, the hardness reduction from the UTMP problem suggests a deeper connection to quadratic assignment problems in general.

However, an approximation-preserving connection between the two problems remains elusive: in UTMP, the objective sums up the number of 1 entries in the upper triangle.
The Keychain Order Selection problem has \textit{almost} the same objective except that it only counts 1 entries in rows of the upper triangle such that the main diagonal entry of that row is a 1 ($A_{i, i} = 1$).
We refer to this as the diagonal constraint.

The diagonal constraint is not easily encoded in either MAS or the UTMP objectives, whose focus is simply to maximize the number of ``satisfied'' (in some way) edges.
A high value solution to MAS or UTMP can have potentially zero value in the Keychain Order Selection problem if all the diagonal entries are 0.
Conversely, a high value solution for the Keychain Order Selection problem may not be the highest value solution in MAS or UTMP (although it \textit{will} retain the same value), as it may be dwarfed by a solution that disregards the diagonal constraint entirely.

We leave open multiple directions for the Keychain Order Selection problem.
If the diagonal constraint issue can be navigated properly, then steps could be made towards a proper hardness of approximation result.
It would be interesting to learn whether known hardness results in MAS \cite{guruswami2008beating} generalize to the Keychain Order Selection problem. 

\begin{question}[Hardness of Approximation]
    Assuming standard complexity theory assumptions ($P \neq NP$ or the Unique Games Conjecture), is it hard to approximate the Keychain Order Selection problem past 1/2?
\end{question}

This would imply a deep and concrete connection to MAS and showing a hardness of approximation past 1/2 would imply that our simple algorithm  is optimal.
However, a direct approximation-preserving reduction from MAS seems difficult as it is unclear how to construct a Keychain Order Selection instance that both mirrors a given MAS instance and enforces the diagonal constraint in the objective.
On the other hand, and perhaps more interestingly: if additional structure of the Keychain Order Selection problem can be leveraged properly, it may be possible to break the $\frac12$OPT barrier.

Another promising direction is in developing a PTAS when the problem is \textit{dense}: settings in which the value of the optimal solution is guaranteed to be at least $c \cdot m$ for some constant $c > 0$.
Loosely speaking, it is easy to identify ``good'' keys to try in these settings: either each relevant key shows up many\footnote{By many here, we mean roughly $\Omega(m)$ times.} times (opportunity cost of skipping is low), or the key has a high probability of opening the lock (the opportunity cost of skipping is recognizably high).

\begin{question}[PTAS for Dense Instances]
    Does the Keychain Order Selection problem admit an additive or multiplicative PTAS for dense instances?
\end{question}

One possible approach for dense settings is to use the techniques of \cite{arora1996new}, which could possibly be generalized to incorporate the conditional nature of the diagonal constraint.
However, a direct application fails as the diagonal constraint eventually manifests as a linear constraint that needs to be satisfied \textit{exactly} in any solution, but \cite{arora1996new} only satisfies linear constraints approximately.

\subsection{Known Keychain Order and Many Correct Keys}\label{subsec:multi_openpoly}
There are a wealth of enticing directions moving forward in the Multiple Correct Key setting with known keychain order.
We offer two natural settings which are largely unexplored.

In the first setting, the prior on correct keys has polynomial size support.
Much like small priors in the Probabilistic Scenarios setting facilitated the design of efficient algorithms, we question if small priors in the Multiple Correct Key setting allows that design of similar efficient algorithms.
\begin{question}[Small Priors]
    Is the Multiple Correct Key setting with a prior on correct keys with small support tractable or not?
\end{question}

Alternatively, the reader can ask whether our intractability result is truly contingent on the event that keys opening the lock are mutually dependent.
In other words, we suggest looking into the Multiple Correct Key setting when the correctness of keys is mutually independent.

\begin{question}[Independent Key Acceptances]
    Is the Multiple Correct Key setting with independent key acceptances tractable?
\end{question}

\section{Advisor Search Calculations}\label{sec:app_intro_example}
See Section \ref{subsec:intro_example} for a description of the Advisor Search Example.
We introduce some notation to facilitate our analysis.
The events $A$, $B$, and $C$ denote when Alice, Bob, and Carol would be a good advisor for Lily, respectively.
The random variable $N_A(t)$ denotes the number of times Alice is available on rounds $t\in T$.
We use the notation $N_B(t)$ and $N_C(t)$ in the same way, but for Bob and Carol, respectively.
Finally, $S$ denotes the event that Alice takes a sabbatical period during the second rotation.

The greedy policy selects Alice for the first rotation.
She yields the greatest expected number of good rotations.
Specifically, if Alice is selected on the first round, we can expect to have one successful rotation with her:
\begin{align*}
    \Pr[A]\cdot \E\left[N_A(1,2,3)\right]
    = \left(\frac{3}{7}\right)\left( \frac{2}{3} \cdot 2 + \frac{1}{3} \cdot 3 \right)
    = 1
\end{align*}
In contrast, selecting Bob or Carol for the first rotation yields a lower number of expected successful rotations:
\begin{align*}
    \Pr[B]\cdot\E[N_B(1,2,3)] = \Pr[C]\cdot\E[N_C(1,2,3)] = \frac{6}{7}
\end{align*}
If collaboration with Alice is successful during the first rotation, then Lily should continue to collaborate with Alice on subsequent rotations.
Otherwise, she should pick Bob on the second rotation and Carol on the third rotation, if collaboration with Bob is unsuccessful.
The expected number of successful rotations under Lily's greedy policy is:
\begin{align*}
    &\E[\text{number of successful rotations of greedy}] \\
    &= \Pr[A] \cdot \E[N_A(1,2,3)] + \Pr[B] \cdot \E[N_B(2,3)] + \Pr[C] \cdot \E[N_C(3)] \\
    &= \left(\frac{3}{7}\right)\left(\frac{7}{3}\right) + \left(\frac{2}{7}\right)\left(2\right) + \left(\frac{2}{7}\right)\left(1\right) 
    = \frac{13}{7}
\end{align*}

We now analyze the second optimal policy.
To start, we consider the case in which Alice is on sabbatical.
If Lily finds the correct advisor, she should continue to work with them when possible.
Lily will first choose to collaborate with Bob, and Carol second if Bob is not the right advisor. 
If neither Bob nor Carol is are good advising fit, then Lily will choose to collaborate with Alice for the third rotation.
Lily's expected number of successful rotations conditioned on Alice's sabbatical is
\begin{align*}
    &\E[\text{successful rotations of optimal} \mid S] \\
    &= \Pr[A\mid S]\cdot\E[N_A(3) \mid S] + \Pr[B\mid S]\cdot\E[N_B(1,2,3)\mid S] \\
    &\quad\quad + \Pr[C\mid S]\cdot\E[N_C(2,3)\mid S] \\
    &= \left(\frac{3}{7}\right) \cdot 1 + \left(\frac{2}{7}\right) \cdot 3 + \left(\frac{2}{7}\right) \cdot 2
    = \frac{13}{7}
\end{align*}
Now we analyze the case when Alice does not go on sabbatical and is available for every rotation.
Lily chooses to collaborate with Bob for the first rotation.
On the second rotation, she should choose to work with Alice if Bob is not a good advising fit.
Finally, if neither Alice nor Bob is a good advising fit, she should collaborate with Carol for the last rotation.
Lily's expected number of successful rotations in this event is:
\begin{align*}
    &\E[\text{successful rotations of optimal} \mid \bar{S}] \\
    &= \Pr[A\mid \bar{S}]\cdot\E[N_A(2,3) \mid \bar{S}] + \Pr[B\mid \bar{S}]\cdot\E[N_B(1,2,3)\mid \bar{S}] \\
    &\quad\quad+ \Pr[C\mid \bar{S}]\cdot\E[N_C(3)\mid \bar{S}] \\
    &= \left(\frac{3}{7}\right) \cdot 2 + \left(\frac{2}{7}\right) \cdot 3 + \left(\frac{2}{7}\right) \cdot 1
    = \frac{14}{7}
\end{align*}
Applying the law of total expectation, we find that Lily's expected number of successful rotations for this second policy is:
\begin{align*}
    &\Pr[S] \cdot \E[\text{successful rotations of optimal} \mid S] \\
    &\quad\quad + \Pr[\bar{S}]\cdot \E[\text{successful rotations of optimal} \mid \bar{S}] \\
    &= \left(\frac{2}{3}\right) \left(\frac{13}{7}\right) + \left(\frac{1}{3}\right)\left(\frac{14}{7}\right)
    = \frac{40}{21} > \frac{13}{7}
\end{align*}

\section{Additional Discussions on Related Works}\label{sec:app_related_works}
Relevant works to our combinatorial auction reduction in the Keychain Setting with Probabilistic Scenarios are \cite{dobzinski2010auctions} and \cite{feige2006welfare}. They provide similar $(1-1/e)$-approximations that involve solving a configuration linear program and rounding ``preallocated'' items using supporting price queries.
We invoke this algorithm in Lemma \ref{lem:app_scenarios_combo} in a black-box manner to recover a $(1-1/e)$-approximation for the Probabilistic Scenarios setting.
Preceding the previously mentioned works is \cite{dobzinski2005auctions}, which provides a $(1/2)$-approximation.
Other works in the rich space of combinatorial auctions study different valuation functions \cite{khot2005inapproximability,lehmann2001combinatorial,schrijver2000combinatorial}, truthful mechanisms \cite{archer2004approximate,assadi2020improved,bartal2003incentive,dobzinski2011impossibility,lehmann2002truth}, and more \cite{sandholm2002algorithm}.
Finally, we note that the work by \cite{cetin2025online} provides an algorithm for online stochastic optimization with arbitrary correlations. However, their results do not apply to our scenarios setting, as their algorithm would require a constant bound on the number of keychains that any one key appears on.

The order selection setting bears similarities to ordering problems such as Maximum Acyclic Subgraph (MAS) and the maximum Quadratic Assignment Problem (maxQAP).
Maximum Acyclic Subgraph is a classic NP-Hard optimization problem \cite{karp1972reducibility}, with a folklore $(1/2)$-approximation and a $(2/3)$-inapproximability result assuming $P \neq NP$ \cite{austrin2015np, bhangale2019ug, newman2000approximating}.
Under the Unique Games Conjecture, \cite{guruswami2008beating} shows that even beating a $(1/2)$-approximation is intractable.
In contrast to MAS, the Order Selection setting has the locksmith select two coupled orderings -- one over keychains and another over keys encoding when she would like to play a untested key -- so hardness of approximation results for MAS do not directly apply.

Optimizing over pairs of orderings suggests links to maxQAPs.
We establish a concrete connection between the order selection setting and the NP-complete Upper Triangular Matrix Permutation (UTMP) problem \cite{fertin2015obtaining, gerbner2016topological}.
More generally, QAPs were introduced in \cite{koopmans1957assignment} and further works \cite{makarychev2014maximum, nagarajan2009quadratic} have studied various forms of maxQAPs. \cite{nagarajan2009quadratic} obtains an approximation under the triangle inequality, and \cite{makarychev2014maximum} finds inapproximability results for general maxQAPs.
Despite our connection to UTMP, the pair of orderings in the Order Selection setting must satisfy an additional constraint: keys must be on the current chain when played.
It is unclear how to encode this constraint into the objective or constraints of both UTMP and maxQAP.

\section{Omitted Discussions and Proofs from Section \ref{sec:warm-up} }\label{sec:app_intro_policy}

\subsection{Proof of Observation \ref{ob1:exploitative}}\label{sec:app_obs1}
\subsubsection*{Optimality of Exploitation}
  Consider an arbitrary policy $\pi$ which is not exploitative.
    There is a non-zero probability event $E$ in which $\pi$ will select $k_t \neq k^*$ despite having identified $k^*$ and being faced with a keychain with $k^*\in C_t$.
    Construct a new policy $\pi$ that selects $k^*$ whenever event $E$ occurs.
    This algorithm yields equally as much reward when event $E$ does not occur.
    Moreover, it recovers strictly greater reward when event $E$ does occur, thus it yields greater expected reward.

\subsubsection*{The Keychain Problem as an MDP and Optimality of Deterministic Policy}
 Here we show how to represent the  key setting with one correct key and known keychain order as a  Markov Decision Process (MDP), which thus admits an optimal policy that is deterministic. The   variants of probabilistic scenarios  can be argued similarly.

There are two types of states in the constructed MDP. The first collection of states encodes when the locksmith has not identified the correct key.
These states are tuples $(U,t)$, where $U$ is a set of untested keys and $t$ is a round.
Additionally, there is a terminal state $\textsc{Stop}$ to model when the locksmith has identified the correct key.

Consider if the locksmith is in state $(U,t)$.
Her available actions are any keys on the current chain $k\in C_t \cup \{\perp\}$.
Consider if she plays an untested key $k_t \in U$.
Then, with probability $\Pr_{k^*\sim p}[k_t = k^*\mid k^*\in U]$, she moves to the terminal state $\textsc{Stop}$ and receives reward for all future appearances of $k_t$: $\sum_{\tau \ge t} \1{k_t\in C_t}$.
With the remaining probability, she moves to state $(U \setminus \{k_t\}, t+1)$ and obtains no reward.
Finally, in the case that she selects an already tested key (or the null key $\perp$), she deterministically moves to state $(U, t+1)$ and receives no reward.

\subsection{Proof of Proposition \ref{prop:intro_fixedalgo}}\label{append_sec_reduction_mwbp}
 Let $  \langle n, \{ C_t \}_{t=1}^m, p \rangle$ denote the keychain problem instance, and $\pi$ be an arbitrary deterministic and exploitative policy. Construct a bipartite graph with left-side nodes as all keys and right-side nodes as all keychains. An edge $e=(k, t)$ exists if and only if key $k \in C_t$.  
If policy $\pi$ selects key $k$ on keychain $C_t$, the edge $e=(k, t)$ must exist; moreover, its future expected reward is: 
\begin{align*}
    r_{k,t} = \underbrace{\Pr_{k^*\sim p}[k^* = k]}_{\text{Probability of exploitation}}\underbrace{\left(\1{k\in C_t} \sum_{\tau=t}^{m} \1{k\in C_{\tau}}\right)}_{\text{Future reward of exploitation}}
\end{align*}
Hence, the (expected) reward of  policy $\pi$ is:
\begin{align*}
    r(\pi) &= \E_{k^*\sim p}\left[\sum_{t} \1{k_t = k^*}\right] \\
    &= \E_{k^*\sim p}\left[\sum_{t} \1{\pi(t) = k^* \text{ and }k^* \in C_t} \sum_{\tau = t}^{m} \1{k^*\in C_{\tau}} \right] \\
    &= \sum_{k, t} \left(\left( \Pr_{k^*\sim p}[k^* = k] \cdot \1{k\in C_t} \sum_{\tau=t}^{m} \1{k\in C_\tau} \right)\1{\pi(t) = k}\right)  \\
    &= \sum_{k,t} r_{k,t} \cdot \1{\pi(t) = k}
\end{align*}

Notice that the injective function $\pi$ representation of a policy can be equivalently viewed as a matching between keys and keychains.
By setting the edge weights of this graph to the expected future reward of testing keys, we produce a bipartite graph where the weight of any matching is the expected value of its corresponding policy.

Formally, construct the bipartite graph with   left nodes denoting all keys and right nodes denoting all chains. 
There is an edge $(k, t)$ from left node $k$ to right node $t$ if and only if $k \in C_t$,
and its edge weight is set to $w_{k,t} = r_{k,t}$.  A matching $M$ corresponds to the policy $\pi$ when $\pi(t) = k$ if and only if $(k,t)\in M$.
    This correspondence is bijective.
Observe that the weight of a matching $M$ is equal to the expected reward of its counterpart policy $\pi$:
    \begin{align*}
        w(M) = \sum_{(k,t)\in M} w_{k,t} = \sum_{k,t} r_{k,t} \1{\pi(t) = k} = r(\pi)
    \end{align*}
The reduction takes $O(nm^2)$ time since each $r_{k,t}$ can be computed in $O(m)$ time.

\subsection{Sub-optimality of Exploitative Policies with Multiple Correct Keys}\label{appendix:multi_exploit_suboptimal} 
To further illustrate the challenge with many successful keys,   we give an example to show that exploitative policies may not be optimal. In our example, even when one of the correct keys is identified, the optimal policy needs to intentionally avoid playing that correct key despite its availability in the current chain.

\begin{fact} \label{fact:multi_exploit_suboptimal}
    There exist instances of the Keychain Problem with Many Correct Keys in which exploitative policies are suboptimal. This result holds even when the events that each key opens the lock are mutually independent.
\end{fact}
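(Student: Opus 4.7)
My plan is to exhibit an explicit small Keychain instance with many correct keys whose Bayes optimal value strictly exceeds the value of every exploitative policy, and to verify the gap by direct computation. The proof itself will be nothing more than a well-chosen example together with two arithmetic evaluations.

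The instance would consist of a short keychain sequence $C_1, C_2, C_3$ (or $C_1, \dots, C_4$ if a third chain is needed to amplify the gap) together with an independent Bernoulli prior $(p_j)_{j=1}^{n}$ on the correctness of each key, engineered so that at some round $t$ the exploit constraint requires the locksmith to replay a previously identified correct key $k \in C_t$, while the same chain contains an untested key $j \in C_t$ whose early identification reroutes the locksmith to a strictly more rewarding action on a downstream chain of the form $\{j, \ell\}$. Concretely, I would take $n = 2$ or $n = 3$ keys with probabilities tuned near $1/2$ for clean numerics.

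The verification then proceeds by enumeration. For each of the $2^n$ joint realizations of the correctness indicators, I would trace the round-by-round rewards of (i) the best exploit policy and (ii) a specific non-exploit policy that modifies the exploit policy at round $t$ to test $j$ instead of replaying $k$. Averaging per-branch rewards yields the two expected values, and the comparison establishes the strict inequality claimed by the fact.

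The main technical obstacle is engineering the chain structure so that the non-exploit deviation is strictly profitable under the \emph{independent} prior. A one-round exchange argument shows that deviating from exploit at round $t$ loses $1 - p_j$ immediately and recovers at most $p_\ell (1 - p_j) \le 1 - p_j$ in expected downstream value from any single subsequent chain, so the gap cannot come from a single downstream decision. The instance therefore has to be arranged so that the scheduling benefit of the early test compounds over several downstream rounds—this is what forces the example to use at least three keys and a carefully chosen interlocking chain structure, rather than a trivial two-key, two-chain configuration. Once the combinatorial structure is fixed, the remaining verification is elementary arithmetic.
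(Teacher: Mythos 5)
Your overall strategy --- exhibit an explicit instance with mutually independent key-correctness events and verify by direct computation that some non-exploitative policy beats every exploitative one --- is exactly the paper's strategy, and your ``one-round exchange'' observation correctly identifies the central obstacle. But the resolution you propose does not work. First, the instance size you commit to ($n=2$ or $n=3$ keys, three or four chains) is too small: no counterexample exists at that scale. With two keys, the bit learned by the deviation never changes any later optimal action, so the deviation strictly loses. With three keys --- say key $1$ known correct and the deviation testing key $2$ at immediate cost $1-p_2$ --- every later chain containing key $1$ is already worth the maximal reward $1$ to the exploitative policy, so the key-$2$ bit can only pay off on chains inside $\{2,3\}$; there the non-deviating policy can test key $2$ at its first such opportunity and catch up, and the cumulative downstream gain is at most $(1-p_2)\bigl(p_3+(1-p_3)(\cdots)\bigr)\le 1-p_2$, so the deviation never \emph{strictly} profits and the Fact is not witnessed. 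Second, and more tellingly, your picture of where the gain comes from --- a downstream chain ``of the form $\{j,\ell\}$'' containing the deviation key $j$ --- is precisely the mechanism your own exchange argument rules out.

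The paper's construction escapes this trap by making the bit about the deviation key pay off on chains that do \emph{not} contain that key. It uses $k_1$ (correct w.p.\ $1$), $k_2$ (correct w.p.\ $1-\epsilon$), and $x$ fresh triples $a_i,b_i,c_i$ (probabilities $0.51$, $0.5$, $0.51$), with chains $\{k_1,k_2\}$, then $\{a_i,b_i\}$ for each $i$, then $\{k_2,b_i,c_i\}$ for each $i$. Testing $k_2$ first (instead of the known-correct $k_1$) reveals, in the probability-$\epsilon$ failure branch, that one should test $b_i$ rather than $a_i$ on each chain $\{a_i,b_i\}$, and that choice in turn pays off on the corresponding chain $\{k_2,b_i,c_i\}$. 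Because each gadget uses fresh keys, these gains accumulate linearly in $x$ while the up-front loss is a fixed multiple of $\epsilon$; the computation gives an advantage of $(0.235x-0.49)\epsilon$, forcing $x\ge 3$, i.e.\ $11$ keys and $7$ chains. The missing idea in your proposal is this cascading gadget structure with a growing supply of fresh keys; without it, the ``carefully chosen interlocking chain structure'' you defer to later cannot be instantiated at the size you describe, and the final enumeration step would simply confirm that exploitation is optimal on your instance.
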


Before giving the explicit construction for Fact \ref{fact:multi_exploit_suboptimal},  we provide the reader with an intuition for why we might expect for such a statement to be true.
When there are multiple correct keys, there is
an opportunity cost to exploiting: namely, we lose out on the ability to try a previously untested key and reduce uncertainty in the future.
This stands in contrast to settings in which there is only one correct key, where once the correct key is identified, exploiting is best as no other key can work.
Our construction leverages this fact so that exploitative strategies suffer more losses in the future than they gain in present reward.

\begin{proof} 
We construct an instance of the Keychain Problem with Multiple Correct Keys such that playing a known correct key is provably suboptimal.

For a given positive integer $x$, there are $m = 2x+1$ keychains and $n = 3x + 2$ distinct keys.
There are $x$ pairs of keychains $(\{a_i, b_i\}, \{k_2, b_i, c_i\}), \forall i \in [x]$. 
The probability that any key of the form $a_i$ or $c_i$ is accepted is $0.51$. 
The probability that any key of the form $b_i$ is accepted is $0.5$. 
Finally, the first keychain is exactly $\{k_1, k_2\}$, with the probability that $k_1$ is accepted is $1$ and the probability that $k_2$ is accepted is $1-\epsilon$ for an arbitrarily small constant $\epsilon > 0$.
The keychains are ordered as follows: after the first keychain, all keychains of the form $\{a_i, b_i\}$ appear before the first keychain of the form $\{k_2, b_i, c_i\}$ appears. 
A visual representation of the the instance is presented in Table \ref{instance:multi_key_exploit_suboptimal}.

\begin{table}[ht]
    \centering
    \begin{tabular}{| c | c | c | c | c | c | c | c | c |}
        \hline
        Key &  $p_k$ & c1 & c2 & c3 & c4 & c5 & c6 & c7 \\
        \hline
        $k_1$ & 1         & o  &    &    &    &    &    &   \\
        \hline
        $k_2$ & 1-$\epsilon$ & o  &    &    &    & o   & o   & o \\
        \hline
        $a_1$ & 0.51      &    & o  &    &    &    &    &    \\
        \hline
        $b_1$ & 0.5       &    & o  &    &    & o  &    &    \\
        \hline
        $c_1$ & 0.51      &    &    &    &    & o  &    &    \\
        \hline
        $a_2$ & 0.51      &    &    & o  &    &    &    &    \\
        \hline
        $b_2$ & 0.5       &    &    & o  &    &    & o  &   \\
        \hline
        $c_2$ & 0.51      &    &    &    &    &    & o  &    \\
        \hline
        $a_3$ & 0.51      &    &    &    & o  &    &    &  \\
        \hline
        $b_3$ & 0.5       &    &    &    & o  &    &    & o  \\
        \hline
        $c_3$ & 0.51      &    &    &    &    &    &    & o \\
        \hline
    \end{tabular}
    \caption{An example instance of the Keychain Problem with Multiple Correct Keys where exploitation is not optimal. In this example, there $3$ pairs of keychains, as well as the first chain.}
    \label{instance:multi_key_exploit_suboptimal}
\end{table}

    Much like the advisor search example in \Cref{subsec:intro_example}, this example offers two natural policies: one that acts greedily and one that sacrifices immediate rewards to explore alternatives that offer rewards in the future.
    The first exploitative strategy favors the guaranteed reward of correct keys that have been identified.\footnote{It is worth noting that we will treat $k_1$ as an ``identified'' key because it is known to open the lock with probability $1$, although strictly speaking, it is not yet tested. This can be rectified by adding a keychain c0 at the beginning of the sequence that only contains key $k_1$.}
    On the first keychain, this strategy tests key $k_1$, which offers no future exploratory benefit.
    The second policy instead opts to test key $k_2$ first.
    By doing so, with small probability $\epsilon$, it rules out one key on each keychain of the form $\{k_2,b_i,c_i\}$.
    By further exploring the key $b_i$ on the keychains with form $\{a_i, b_i\}$, this second policy guarantees it receives a reward when possible on the last $n$ keychains. 

Consider the \textsc{Exploit} strategy that first plays $k_1$, then continues play the $a_i$ keys for the next $x$ keychains. 
Then, when the first keychain of the form $\{k_2, b_i, c_i\}$ appears, \textsc{Exploit} tries $k_2$. 
If $k_2$ is accepted, then it plays $k_2$ for all remaining keychains, otherwise it plays $c_i$ for all $x-1$ remaining keychains. 
The expected value of $\textsc{Exploit}$ is:
\begin{align}
    \mathbb{E}[r(\textsc{Exploit})] &= 1 + 0.51x + (1-\epsilon)x + \epsilon[(x-1)(0.51)] \nonumber \\
    &= 1 - 0.51\epsilon + [1.51 - 0.49 \epsilon]x \label{EV_Exploit}
\end{align}

If we pick the \textsc{OPT} strategy that first plays $k_2$, there are two options. 
With probability $1-\epsilon$, $k_2$ works and we automatically win on the 2nd keychain of each pair. 
Then, it is best to take all the $a_i$ keys when we see the first keychain of each pair. 
Otherwise, it is better to play $b_i$ for the first keychain of each pair. 
If $b_i$ works, then we automatically win on the second keychain of that pair. 
If $b_i$ doesn't work, then we play $c_i$ on the second keychain of each pair. 
The expected value of \textsc{OPT} is:
\begin{align}
    \mathbb{E}[r(\textsc{OPT})] &= (1 - \epsilon) \left[ (x+1) + 0.51x \right] + \epsilon[(0.5(2)x + 0.5(0.51)x] \nonumber \\
    &= 1 - \epsilon + [1.51 - 0.255 \epsilon]x \label{EV_Explore}
\end{align}

The advantage of \textsc{OPT} over \textsc{Exploit} is:
\begin{align*}
    \mathbb{E}[r(\textsc{OPT})] - \mathbb{E}[r(\textsc{Exploit})] &= [0.235 x- 0.49] \epsilon > 0
\end{align*}
Note that for any fixed $\epsilon > 0$, the advantage of exploration can be made arbitrarily large by taking the number of pairs $x$ to $\infty$.

All that remains to show is that for small enough $\epsilon$, \textsc{Exploit} is the best strategy that begins by playing $k_1$.
At this point, it will be easier to condition on the event that $k_2$ is correct.
The value of \textsc{Exploit} under this viewpoint is:
\begin{align}
    \mathbb{E}[r(\textsc{Exploit})] &= 1 + (1 - \epsilon)[(0.51 + 1)x] + \epsilon[(x-1)(0.51 + 0.51) + (0.51 + 0)] \nonumber \\
    &= 1 + (1-\epsilon)[1.51x] + \epsilon[1.02(x-1) + 0.51] \nonumber \\
    &= 1 + (1-\epsilon)[1.51x] + \epsilon[1.02x - 0.51] \label{cond_view_exploit}
\end{align}

For any policy $\pi$ that plays $k_1$ on the first round, it enters an instance in which it can choose to play an $a_i$ key $w$ times and the $b_i$ key the remaining $(x-w)$ times.
Supposing $(x-w) \geq 1$ (else $\pi$ would be the same policy as \textsc{Exploit}), a (loose) upper bound\footnote{Note that this is a loose upper bound -- we do not account for the fact that the locksmith must ``waste'' a round trying $k_2$ in the cases it doesn't work.} for the value of this policy is:
\begin{align}
    \mathbb{E}[r(\pi)] &\leq 1 + (1-\epsilon)[w(0.51 + 1) + (x-w)(0.5 + 1)] \\
    &\quad\quad + \epsilon[w(0.51 + 0.51) + (x-w)(0.5(2) + 0.51)] \nonumber \\
    &= 1 + (1-\epsilon)[1.51w + 1.5(x-w)] + \epsilon[1.02w + 1.51(x-w)] \label{cond_view_pi}
\end{align}

Setting the expected reward of $\pi$ (Line \ref{cond_view_pi}) less than the expected reward of \textsc{Exploit} (Line \ref{cond_view_exploit}) and solving for $\epsilon$ yields that \textsc{Exploit} is a better policy than $\pi$ when $\epsilon \leq \frac{1}{50 + \frac{51}{x-w}}$.
Since $(x-w) \geq 1$, when $\epsilon = \frac{1}{1000} < \frac{1}{50 + 51}$, \textsc{Exploit} is the best strategy that begins by selecting $k_1$.
\end{proof}

\subsection{APX-hardness with Many Correct Keys and Known Keychain Order }\label{app_subsec:many-key}
To illustrate the challenge of many correct keys that opens the lock, we exhibts a natural setting here and show that it is APX-hard to compute a Bayes optimal policy. 


The setting with many correct keys and known keychain order is the same as  described  at the beginning of  Section \ref{sec:warm-up}, except that now there is a set $K^* \subseteq [n]$ of correct keys and the locksmith gets reward $\1{k_t \in K^*}$.
To study the problem's computational complexity, we shall consider settings with a succinct polynomial-size representation of the prior distribution of correct keys.

\paragraph{Succinct prior with dueling key pairs. }To show the hardness, we consider a simple case with dueling key pairs. Similar to the standard setup, let there be $m$ keychains $C_1, \dots, C_m \subseteq [n]$.  Suppose the number of keys $n$ is even. The prior distribution about correct keys  is such that one and only one key among $2i-1$ and  $2i$ can open the lock, and $2i-1$ is the correct key with probability $p_i \in [0,1]$ (hence $2i$ is correct with probability $1-p_i$). These events are independent for different $i$'s, so there are precisely $n/2$ correct keys. An input to such an instance is thus $\langle  n, \{ C_t \}_{t=1}^m, \{ p_i \}_{i=1}^{n/2} \rangle$.  

\begin{restatable}[Many Correct Keys APX-Hardness]{proposition}{multikeyhard}\label{thm:multi_hard}
It is APX-hard to compute a Bayes optimal policy for the Keychain problem with many correct keys and a succinct prior with dueling key pairs, as described above.  
\end{restatable}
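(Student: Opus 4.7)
The plan is to prove APX-hardness via a gap-preserving reduction from Vertex Cover on cubic (3-regular) graphs, a standard APX-hard variant. Given a cubic graph $G = (V, E)$, I would build a Keychain instance with $n = 2|V|$ keys organized into $|V|$ dueling pairs (one pair per vertex); a sequence of $m = |E|$ keychains presented in arbitrary fixed order, where for each edge $e = \{u, v\}$ the keychain $C_e$ consists of the four keys of the two endpoint pairs; and prior $p_v = 1/2$ for every pair. Each pair's correct key is thus sampled uniformly and independently, giving a succinct description of the correct-key distribution as the statement requires.

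The heart of the argument is the identity $\opt = m - \tau(G)/2$, where $\tau(G)$ is the minimum vertex cover number of $G$. For the upper bound I would argue by per-keychain accounting: for any (adaptive) deterministic policy, let $T$ denote the random set of pairs the policy ever tests. The first test of any pair is an unbiased coin flip, contributing expected reward $1/2$; each later keychain containing an already-resolved pair contributes at most $1$, achieved by playing that pair's now-known correct key; and a keychain whose pairs all lie outside $T$ contributes $0$. Since the $|T|$ first-test keychains are distinct and all lie among the $|K_T|$ keychains covered by $T$, the expected reward is at most $\mathbb{E}[|K_T| - |T|/2]$. A short algebraic step using the fact that augmenting $T$ by one endpoint per uncovered edge always yields a vertex cover (so $|T| + m - |K_T| \ge \tau(G)$ always) sharpens this to $m - \tau(G)/2$. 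The matching lower bound is attained by fixing a minimum vertex cover $T^*$ and, at each keychain, playing the known correct key of a resolved pair in $T^*$ if available and otherwise first-testing an untested pair in $T^* \cap C_e$, which exists because $T^*$ covers every edge.

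To transfer APX-hardness I would turn the identity into an $L$-reduction. The same accounting shows that any policy with expected reward $\ge (1-\epsilon)\opt$ satisfies $\mathbb{E}[|T|] \le 2\epsilon m + (1-\epsilon)\tau(G)$; extracting a $T$ with this property from a suitable realization and closing it up to a vertex cover $U$ by adding one endpoint per uncovered edge yields $|U| \le 2\epsilon m + (1-\epsilon)\tau(G)$. On cubic graphs $m = 3|V|/2$ and $\tau(G) \ge |V|/2$, so $m \le 3\tau(G)$ and hence $|U| \le (1 + 5\epsilon)\tau(G)$. Invoking the known APX-hardness of Vertex Cover on cubic graphs, there is some constant $\alpha_0 > 0$ such that $(1+\alpha_0)$-approximating $\tau(G)$ is NP-hard; choosing $\epsilon < \alpha_0/5$ then precludes any polynomial-time $(1-\epsilon)$-approximation of the Keychain reward unless $P = NP$.

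The main obstacle is the middle step, proving $\opt = m - \tau(G)/2$ in the multi-correct-key regime where \Cref{ob1:exploitative} fails (as witnessed by \Cref{fact:multi_exploit_suboptimal}). A general policy can delay resolving a pair, play suboptimal keys on exploit keychains, or condition on partial observations in intricate ways, so I must argue that no such adaptive trickery beats the per-keychain bound. The clean accounting $|K_T| - |T|/2$, combined with the inequality $|T| + m - |K_T| \ge \tau(G)$, bypasses the need to reason about any specific schedule and covers all adaptive strategies uniformly. A secondary technicality is converting the expectation-level bound on $|U|$ into a deterministic vertex cover of the stated size, which can be handled by standard derandomization of the policy's internal randomness and does not introduce any real difficulty.
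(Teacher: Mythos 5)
Your proposal is correct and follows essentially the same route as the paper's proof: the identical gadget (one dueling pair per vertex, one keychain per edge, uniform pair priors), the same reward identity $r(\pi)=m-\tfrac12\E|S_\pi|$ relating expected reward to the expected number of pairs tested, the same use of the degree-$3$ bound $m\le 3\tau(G)$, and the same extraction of an approximate vertex cover from an approximate policy (the paper uses Markov's inequality plus repeated simulation where you invoke derandomization, but this is cosmetic). Your per-keychain accounting via $|K_T|-|T|/2$ and the inequality $|T|+m-|K_T|\ge\tau(G)$ is a slightly more explicit justification of the step the paper states more briefly, but the argument is the same.
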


The proof is via a reduction from a special case of Vertex Cover with bounded-sized sets, described as follows.
\begin{definition} [Minimum Vertex Cover-$3$]
    Given a graph $G = (V, E)$ on which every node has degree at most $3$, compute a vertex cover of minimum size.
\end{definition}

\begin{proof}
    The idea of the reduction is to turn the edges into keychains and the nodes into keys; therefore, selecting a key from a keychain corresponds to covering an edge with a node.
    We construct an instance of the Keys and Keychains with Multiple Correct Keys and Succinct Prior Problem from an instance of Vertex Cover-3 as follows. Instantiate $m = |E|$ keychains $C_1, C_2, \dots, C_m$, one chain for each edge. Instantiate $2n$ keys ($n$ dueling key pairs); keys $(2i-1)$ and $2i$ correspond to node $i\in V$ (where $|V| = n$). For every $i \in [n]$ and $j \in [m]$, keychain $C_j$ includes keys $(2i-1)$ and $2i$ if and only if node $i$ is an endpoint of edge $j$. Finally, set $p_i = 1/2$ for all $i \in [n]$.

    First, observe that the optimal policy must play keys corresponding to nodes in a vertex cover, since it must choose one key from each keychain, which corresponds to choosing a node that covers the associated edge.
    Also note that an optimal policy always selects a key that can open the lock, except for the case when both of the keys corresponding to the selected edge haven't been tested in the previous rounds; to simplify analysis, we restrict our analysis to such policies without loss of generality.
    For any policy $\pi$, we denote $S_\pi$ as the random subset of nodes (key-pairs) that policy $\pi$ plays throughout the Keychain Game.
    The expected reward of $\pi$ is:
    \begin{align*}
            r(\pi) = m-\frac{1}{2} \E|S_\pi|
    \end{align*}
    since with probability $1/2$ independent of the previous observations, every policy will incur a loss of at least $1$ upon testing a key from an unselected pair for the first time.

    Now we show that a high reward policy for the Keychain Problem with a known order and many correct keys implies a constant fraction approximation algorithm for Vertex Cover with a high probability guarantee.
    Intuitively, a high reward policy will play key-pairs corresponding with small cardinality vertex covers in expectation.
    By repeating the sequential decision-making process for such a policy, we can sample many vertex covers, one of which will have low cardinality with high probability.

    Let $\pi^*$ be a Bayes optimal policy for the Keychain Problem, and $S^*$ be a minimum cardinality vertex cover for the Set Cover instance.
    The Bayes optimal policy will select key-pairs corresponding to a minimum vertex cover to maximize its reward:
    \begin{align*}
        r(\pi^*) = m - \frac{1}{2} |S^*|
    \end{align*}
    Also, notice that in the Vertex Cover instance, the minimum vertex cover must have large cardinality by the bound on node degrees: $|S^*| \ge \frac{1}{3} m$.
    Thus, an $\alpha$-approximation for the Keychain problem implies:
    \begin{align*}
        &\left(m - \frac{1}{2}\E|S_{\pi}|\right) \ge \alpha \left(m - \frac{1}{2} |S^*|\right) \\
        &2\left(1 - \alpha\right)m + \alpha |S^*| \ge \E|S_{\pi}| \\
        &\left(6\left(1 - \alpha\right) + \alpha\right) |S^*| \ge \E|S_{\pi}| \\
        &\left(6 - 5\alpha\right) |S^*| \ge \E|S_\pi|
        \tag{$3|S^*| \ge m$}
    \end{align*}

    By \cite{alimonti2000cubic}, there is a universal constant $\lambda \in (0,1)$ such that no algorithm computes vertex cover $S$ guaranteeing (assuming $P\neq NP$):
    \begin{align*}
        |S| \le (1 + \lambda) |S^*|
    \end{align*}
    Thus, if the Keychain Problem admits an approximation with a constant factor of $\alpha \ge 1 - \frac{1}{10}\lambda$, we can compute a policy $\pi$ such that $\E|S_{\pi}| \le (1 + \frac{1}{2}\lambda) |S^*|$.
    Applying Markov's inequality, we see that a randomly sampled vertex cover for policy $\pi$ will likely have low cardinality:
    \begin{align*}
        \Pr[|S_{\pi}| \ge (1 + \lambda) |S^*|] \le \frac{\E|S_{\pi}|}{(1 + \lambda) |S^*|} \le \frac{2 + \lambda}{2 + 2\lambda} < 1
    \end{align*}
    Since we know that $\lambda \in (0, 1)$ is a fixed universal constant strictly between $0$ and $1$.
    Running $O(\frac{1}{\lambda} \ln(\frac1\delta)) = O(\ln(1/\delta))$ trials draws $O(\ln(1/\delta))$ i.i.d. samples from $S_\pi$.
    One such sample is a low-cardinality vertex cover ($\le (1+\lambda)|S^*|$) with probability at least $1-\delta$ given $\pi$ is a good approximation.
    In this proof, we showed that a constant factor approximation (with a sufficiently large constant) for the Keychain Problem with many correct keys implies a constant factor approximation for Vertex Cover-3 with high probability.
    Such an algorithm does not exist unless $P=NP$.
\end{proof}

\section{Omitted Details and Proofs from Section \ref{sec:scenarios} }\label{sec:app_scenario} 
\subsection{Set Functions in Literature}\label{sec:app_scenario_background}
\subsubsection{Set Function Classes}
\begin{definition}[XOS Set functions]\label{def:app_scenarios_xos}
    A set function $f:2^{[n]} \to \R$ is XOS if it is the maximization of additive $w_1,\dots,w_h$ set functions called clauses:
    \begin{align*}
        f(S) = \max_{k\in[h]} w_k(S), \quad\text{for all }S\in 2^{[n]}
    \end{align*}
    Equivalently a $f$ is XOS if on any input $S$ it admits supporting prices $p_1,\dots,p_n$ that satisfy the following three properties:
    \begin{align*}
        p_i = 0 \text{ for all } i\notin S
        \quad\quad\quad\quad p(S) = f(S)
        \quad\quad\quad\quad p(T) \le f(T) \text{ for all } T\in 2^{[n]}
    \end{align*}
\end{definition}

\begin{definition}[Monotone Set Functions]
    A set function $f:2^{[n]}\to \R$ is monotone if it provides greater value on supersets:
    \begin{align*}
        f(S) \le f(T) \text{ for all }S\subseteq T
    \end{align*}
\end{definition}

\begin{definition}[Normalized Set Functions]
    A set function $f:2^{[n]}\to \R$ is normalized if $f(\emptyset) = 0$.    
\end{definition}

\begin{definition}[Submodular Set Functions]\label{def:app_scenario_sub}
    A set function $f:2^{[n]}\to\R$ is submodular if, on every $S\subseteq T$ and $i\in S$ it satisfies:
    \begin{align*}
        f(S) - f(S\setminus \{i\}) \ge f(T) - f(T\setminus \{i\})
    \end{align*}
\end{definition}

\begin{definition}[Supermodular Set Functions]
    A set function $f:2^{[n]}\to\R$ is supermodular if, on every $S\subseteq T$ and $i\in S$ it satisfies the inverted inequality of submodular set functions \ref{def:app_scenario_sub}:
    \begin{align*}
        f(S) - f(S\setminus \{i\}) \le f(T) - f(T\setminus \{i\})
    \end{align*}
\end{definition}

\subsubsection{Set Function Oracles}
To explicitly represent a set function $f:2^{[n]}\to\R$ requires exponential space.
Thus, we assume access to several commonly considered set function oracles.
All oracles are defined with respect to the set function $f:2^{[n]}\to \R$.

\begin{definition}[Value Oracle]
    A value oracle takes as input a set $S\in 2^{[n]}$ and outputs the value of the set function $f(S)$.
\end{definition}

\begin{definition}[Demand Oracle]
    A demand oracle takes prices $p_1,\dots,p_n \ge 0$ on elements and returns a subset $D$ of elements yielding greatest utility:
    \begin{align*}
        D\in \arg\max_{S\in^{[n]}} \left\{f(S) - p(S)\right\}   
    \end{align*}
    The sets returned by the maximizing utility are often referred to as sets in demand.
\end{definition}

\begin{definition}[Supporting Prices Oracle]
    A supporting price oracle takes as input a set $S$ and outputs supporting prices as defined in \ref{def:app_scenarios_xos}:
    \begin{align*}
        p_i = 0 \text{ for all } i\notin S
        \quad\quad\quad\quad p(S) = f(S)
        \quad\quad\quad\quad p(T) \le f(T) \text{ for all } T\in 2^{[n]}
    \end{align*}
\end{definition}

\subsection{Antichain Set Functions}\label{sec:app_scenario_antichain}
Vital to our combinatorial auction reductions are antichain functions and disjoint antichain functions, more generally.
Specifically, these set functions serve as the valuations of bidders.
In this section, we explicitly work with disjoint antichain functions.
All properties of disjoint antichain functions apply to the subsumed class of antichain functions.

Disjoint antichain functions output the maximum weight of at most $k$ disjoint antichains on a vertex-weighted forest.
We apply disjoint antichain functions to the forest representation of laminar families to find maximum weight subsets with few comparable elements.
We begin by showing that disjoint antichain functions are XOS, but not submodular nor supermodular.
To ensure that we can use disjoint antichain functions as valuations in combinatorial auctions, we also prove that all such functions are monotone and normalized.
Finally, to facilitate our reductions, we prove that this special class of set functions admits polynomial-time value, demand, and supporting price oracles.

\begin{definition}[Disjoint Antichain Function]
    A set function $f:2^{[n]} \to \R$ is a $k$-disjoint antichain function if there is a positive integer $k$, weights $w_1,\dots,w_n\in\mathbb{R}_{\ge 0}$, and type sets $T_1,\dots, T_n\subseteq T$ forming a laminar family such that
    \begin{align*}
        f(S) = \max_{A_1,\dots,A_k \in \ac(S)} w(A_1 \cup \dots \cup A_k)
    \end{align*}
    where $\ac(S) = \{ A\subseteq S \mid T_i \cap T_j = \emptyset \text{ for all } i,j\in A \text{ with } i\neq j\}$ is the collection of antichains induced by laminar family $\{T_i\}_{i\in S}$.
    $1$-disjoint antichain functions are also called antichain functions.
\end{definition}

We now prove several pertinent properties of disjoint antichain functions.
The first property is that the class of XOS functions subsumes the class of disjoint antichain functions.
To further classify disjoint antichain functions we also show that they are neither submodular nor supermodular.
The latter properties relate to the efficient implementation of value, demand, and supporting price oracles despite the exponential number of clauses in the XOS representation of disjoint antichain functions.
In each of our proofs, $f$ is a disjoint antichain function defined by a positive integer $k$, weights $w_1,\dots,w_n$, and laminar type sets $T_1,\dots, T_n$.

\begin{observation}[XOS Property]
    If a function is a disjoint antichain function, then it is also an XOS function.
\end{observation}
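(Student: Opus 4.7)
The plan is to construct explicit additive clauses, indexed by $k$-tuples of antichains of the \emph{full} laminar family $\{T_1,\dots,T_n\}$, and show that their pointwise maximum equals $f$. Concretely, for every tuple $\vec B = (B_1,\dots,B_k)$ of sets in $\ac([n])$, let $B = B_1 \cup \dots \cup B_k$ and define the additive clause
\begin{align*}
    w_{\vec B}(S) \;=\; \sum_{i \in S \cap B} w_i.
\end{align*}
I claim that $f(S) = \max_{\vec B} w_{\vec B}(S)$ for every $S \in 2^{[n]}$, which is exactly the XOS condition.

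The first step is the upper bound $w_{\vec B}(S) \le f(S)$ for every $\vec B$ and $S$. The key observation is that antichain-hood is hereditary: if $B_\ell \in \ac([n])$ then $B_\ell \cap S \in \ac(S)$, because the pairwise disjointness of the type sets $T_i$ required by $\ac(\cdot)$ is preserved under taking subsets. Setting $A_\ell := B_\ell \cap S$ for each $\ell$, the tuple $(A_1,\dots,A_k)$ is a feasible choice in the maximization defining $f(S)$, and $A_1 \cup \dots \cup A_k = B \cap S$, so $w_{\vec B}(S) = w(B\cap S) = w(A_1 \cup \dots \cup A_k) \le f(S)$.

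The second step is the matching lower bound: for every $S$ there is some $\vec B$ with $w_{\vec B}(S) = f(S)$. Let $(A_1^\ast,\dots,A_k^\ast)$ attain the maximum in the definition of $f(S)$, so each $A_\ell^\ast \subseteq S$ lies in $\ac(S)\subseteq \ac([n])$, and $f(S) = w(A_1^\ast \cup \dots \cup A_k^\ast)$. Choose $\vec B = (A_1^\ast,\dots,A_k^\ast)$; then $B \subseteq S$, hence $w_{\vec B}(S) = w(S \cap B) = w(B) = f(S)$. Combining the two steps gives $f(S) = \max_{\vec B} w_{\vec B}(S)$, exhibiting $f$ as a max of additive functions, which is precisely the XOS property.

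I expect no substantial obstacle beyond the hereditary-antichain observation above, which is essentially by definition. The only subtlety worth being explicit about in the write-up is that the indexing set for the clauses is finite (tuples of antichains of $[n]$), so the ``max'' in the XOS definition is well-defined without any supremum issues; and that the clauses are genuinely additive, since $w_{\vec B}(S) = \sum_{i\in[n]} \bigl(w_i \mathbf{1}\{i\in B\}\bigr)\mathbf{1}\{i\in S\}$, of the form $\sum_{i\in S} c_i$ with $c_i \ge 0$. No other properties of the laminar family are used, so the same argument will transparently specialize to the $k=1$ antichain case cited in the Fact of Section~\ref{subsubsec:scenarios_combo}.
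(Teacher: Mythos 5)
Your proof is correct and follows essentially the same route as the paper: both exhibit $f$ as a maximum of additive clauses indexed by $k$-tuples of antichains of the full ground set, with each clause assigning weight $w_i$ to elements of the union. You simply make explicit the two inequalities (via the hereditary property of antichains and the choice of the optimal tuple) that the paper compresses into a single chain of ``simple algebraic manipulations.''
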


\begin{proof}
    By simple algebraic manipulations, we express disjoint antichain functions in their XOS representation (maximization over additive set functions):
    \begin{align*}
        f(S)
        &= \max_{A_1,\dots,A_k \in \ac(S)} w(A_1 \cup \dots \cup A_k) \\
        &= \max_{A_1,\dots,A_k \in \ac(S)} \sum_{i\in S} \left(w_i \cdot \1{i\in A_1 \cup \dots \cup A_k}\right)
    \end{align*}
    Thus, disjoint antichain functions are XOS functions with exponentially-many clauses.
    Each clause is associated with antichains a collection of antichains $A_1,\dots,A_k$ which induce weight $w_i \cdot \1{i\in A_1\cup\dots\cup A_k}$ for each element $i$.
\end{proof}

\begin{observation}[Monotonicity]
    If a function is a disjoint antichain function, then it is monotone
\end{observation}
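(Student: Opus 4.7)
The plan is to prove monotonicity directly from the definition by observing that enlarging the input set can only enlarge the feasible region of the maximization, and weights are nonnegative. Concretely, I would take arbitrary sets $S \subseteq T \subseteq [n]$ and show that $\ac(S) \subseteq \ac(T)$: any $A \in \ac(S)$ is a subset of $S$ with pairwise disjoint type sets, hence a subset of $T$ with pairwise disjoint type sets, which is precisely the definition of $\ac(T)$.

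Next, I would let $A_1^*,\dots,A_k^* \in \ac(S)$ be any collection achieving the maximum defining $f(S)$, so that $f(S) = w(A_1^* \cup \dots \cup A_k^*)$. By the inclusion $\ac(S) \subseteq \ac(T)$, the same tuple $(A_1^*,\dots,A_k^*)$ is a feasible choice in the maximization defining $f(T)$. Therefore
\[
f(T) \;=\; \max_{A_1,\dots,A_k \in \ac(T)} w(A_1 \cup \dots \cup A_k) \;\geq\; w(A_1^* \cup \dots \cup A_k^*) \;=\; f(S),
\]
which is exactly the monotonicity claim. Note that nonnegativity of the weights $w_1,\dots,w_n$ is not actually required for this argument, because the same feasible tuple witnesses both values; nonnegativity would only be needed if one wanted a strict inequality or to argue via a strictly larger antichain.

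I do not anticipate a technical obstacle: the entire content of the proof is the set-theoretic observation that the constraint ``all type sets in $A$ are pairwise disjoint'' depends only on $A$ itself, not on the ambient ground set, so the feasible region grows monotonically with the input. The one small thing worth being explicit about is that the laminar family $\{T_i\}_{i\in[n]}$ and the integer $k$ are fixed parameters of $f$ and do not change between the evaluations at $S$ and $T$, so $\ac(\cdot)$ is well-defined on both sides with the same underlying data.
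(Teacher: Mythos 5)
Your proof is correct and follows essentially the same argument as the paper: take the antichains achieving the maximum for $f(S)$, note they remain feasible in the maximization defining $f(T)$ since $\ac(S)\subseteq\ac(T)$, and conclude $f(T)\ge f(S)$. Your added remark that nonnegativity of the weights is not needed here is accurate but incidental.
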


\begin{proof}
    Consider $f(S)$ and $f(T)$ with $S \subseteq T$.
    Let $A_1^*,\dots, A_k^*$ be the antichains inducing the maximizing clause for the disjoint antichain function $f$.
    Then $A_1,\dots, A_k$ is also a feasible selection of antichains on input $B$:
    \begin{align*}
        f(S)
        = w(A_1,\dots,A_k)
        \le \max_{A_1^*,\dots,A_k^* \in \ac(T)} w(A_1^* \cup \dots \cup A_k^*)
        = f(T)
    \end{align*}
    The last inequality holds because $A_1^*,\dots,A_k^* \in \ac(T)$
\end{proof}

\begin{observation}[Normalized]
    If a function is a disjoint antichain function, then it is also normalized
\end{observation}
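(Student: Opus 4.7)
The plan is to verify the defining identity $f(\emptyset)=0$ directly from the definition of a disjoint antichain function, since normalization is the statement that the set function vanishes on the empty set. First I would unpack $\ac(\emptyset)$: any antichain $A \in \ac(\emptyset)$ must satisfy $A \subseteq \emptyset$, so $\ac(\emptyset) = \{\emptyset\}$ is a singleton. Consequently the only feasible tuple $(A_1,\dots,A_k)$ in the maximization defining $f(\emptyset)$ is the all-empty tuple, so the union $A_1 \cup \dots \cup A_k$ is empty.

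Next I would invoke the convention that $w(\emptyset) = \sum_{i \in \emptyset} w_i = 0$ for the additive shorthand introduced in the preliminaries. Substituting into the definition of $f$ yields
\begin{align*}
    f(\emptyset) \;=\; \max_{A_1,\dots,A_k \in \ac(\emptyset)} w(A_1 \cup \dots \cup A_k) \;=\; w(\emptyset) \;=\; 0,
\end{align*}
which is exactly the definition of a normalized set function. Since this argument did not use any structural properties of the weights $w_i$ or the laminar family $\{T_i\}$ beyond nonnegativity of weights (which is given in the definition), it applies to every disjoint antichain function, and in particular to every antichain function as the special case $k=1$.

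There is no real obstacle here: the entire content is definitional, and the only thing one must be slightly careful about is remembering that $\ac(\emptyset)$ is nonempty (it contains the empty antichain), so the max is well-defined rather than being over an empty index set. I would state this as a one-line remark to make the proof self-contained.
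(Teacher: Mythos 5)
Your proof is correct and is essentially the paper's own argument: the paper also notes that $\ac(\emptyset)=\{\emptyset\}$ and concludes $f(\emptyset)=w(\emptyset)=0$. Your added remark that the maximum is over a nonempty index set is a harmless (and reasonable) bit of extra care, but the substance is identical.
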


\begin{proof}
    We have $f(\emptyset) = w(\emptyset) = 0$ because $\ac(\emptyset) = \{\emptyset\}$
\end{proof}

\begin{observation}[Not Submodular nor Supermodular]
    There is a $1$-disjoint antichain function that is neither submodular nor supermodular.
\end{observation}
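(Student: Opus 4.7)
The plan is to exhibit a single explicit antichain function on a three-element ground set and then point to two witness triples, one violating submodularity and the other violating supermodularity. Concretely, I would take $n = 3$ with laminar types $T_1 = \{a,b\}$, $T_2 = \{a\}$, $T_3 = \{b\}$ (so in the induced forest the element $1$ is the root and $2,3$ are its children), and weights $w_1 = 3$, $w_2 = w_3 = 2$.

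First I would evaluate $f$ on all eight subsets of $\{1,2,3\}$. Since $T_2, T_3 \subset T_1$, the subsets $\{1,2\}$, $\{1,3\}$, $\{1,2,3\}$ are not antichains, while $\{2,3\}$ is. A direct tabulation then gives $f(\emptyset) = 0$, $f(\{2\}) = f(\{3\}) = 2$, $f(\{1\}) = f(\{1,2\}) = f(\{1,3\}) = 3$, and $f(\{2,3\}) = f(\{1,2,3\}) = 4$, where the last equality uses the fact that $w_2 + w_3 = 4 > 3 = w_1$.

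Next I would produce the two witness triples $(S,T,i)$ with $S \subseteq T$ and $i \in S$. For the failure of submodularity, take $S = \{1,2\}$, $T = \{1,2,3\}$, $i = 2$: then $f(S) - f(S \setminus \{i\}) = 3 - 3 = 0$ while $f(T) - f(T \setminus \{i\}) = 4 - 3 = 1$, so the submodular inequality $f(S) - f(S\setminus\{i\}) \ge f(T) - f(T\setminus\{i\})$ is violated. For the failure of supermodularity, take $S = \{2\}$, $T = \{1,2\}$, $i = 2$: then $f(S) - f(S \setminus \{i\}) = 2 - 0 = 2$ while $f(T) - f(T \setminus \{i\}) = 3 - 3 = 0$, so the supermodular inequality $f(S) - f(S\setminus\{i\}) \le f(T) - f(T\setminus\{i\})$ is violated.

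There is no serious obstacle beyond choosing an instance small enough to verify by inspection yet rich enough that a single function witnesses both failures. The guiding intuition is that the root weight $w_1$ must sit strictly between $\max(w_2,w_3)$ and $w_2 + w_3$: then the marginal value of child $2$ is $2$ when added to $\emptyset$, collapses to $0$ when added to $\{1\}$ (since $1$ already blocks the antichain and $w_1 > w_2$), and rebounds to $1$ when added to $\{1,3\}$ (since $\{2,3\}$ now overtakes $\{1\}$). This non-monotone behavior of a single marginal across nested sets is exactly what simultaneously breaks the submodular and supermodular inequalities.
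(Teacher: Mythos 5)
Your proposal is correct and uses exactly the same counterexample as the paper ($k=1$, weights $3,2,2$, types $\{a,b\},\{a\},\{b\}$), with the same witness for supermodularity failure and a symmetric witness ($i=2$ rather than $i=3$) for submodularity failure. The tabulated values and both violated inequalities check out.
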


\begin{proof}
    Consider the $1$-disjoint antichain function $f:2^{[3]}\to \R_{\ge 0}$ on three elements defined by positive integer $k=1$, element weights $w_1 = 3$, $w_2 = 2$, and $w_3 = 2$, and laminar type sets $T_1 = \{a,b\}$, $T_2 = \{a\}$, and $T_3 = \{b\}$. Then $f$ is not submodular because
    \begin{align*}
        f(1,2,3) - f(1,2) = 4 - 3 \ge 3 - 3 = f(1,3) - f(1)
    \end{align*}
    Moreover, $f$ is not supermodular because
    \begin{align*}
        f(1,2) - f(1) = 3 - 3 \le 2 - 0 = f(2) - f(\emptyset)
    \end{align*}
\end{proof}

The procedure we use to answer value, demand, and supporting price queries is computing $k$ or fewer disjoint antichains with maximum weight on a forest.
We solve this problem via dynamic programming.
In our dynamic program, $\opt[i, u]$ is the weight of the $u$-disjoint maximum weight antichains in the subtree rooted at $i\in V$.
To compute $\opt[i,u]$ recursively, we consider two possible situations: $i$ is or is not one of the $u$-disjoint maximum weight antichains.

\begin{algorithm}[ht]
    \KwIn{
        positive integer $k$, vertex weights $w$, and forest $G = (V, E)$.
    }
    Initialize $\opt[i, 0] \gets 0$ for all $i\in [n]$\;
    \For{each $u=1,\dots,k$}{
        $\opt[i, u] \gets \max\left\{\sum_{j\in\child(i)} \opt[j, u], w_i + \sum_{j\in \child(i)} \opt[j, u-1] \right\}$\;
    }
    Backtrack to recover antichains $A_1,\dots,A_k$\;
    \Return Antichains $A_1,\dots,A_k$\;
    
    \caption{\textsc{Disjoint Maximum Weight Antichain}$\inner{k,w,G=(V,E)}$}
    \label{alg:app_scenarios_disjoint}
\end{algorithm}


\begin{lemma}[Disjoint Antichain Algorithm]
    Protocol \ref{alg:app_scenarios_disjoint} computes at most $k$ disjoint antichains with maximum total vertex weight in polynomial time.
\end{lemma}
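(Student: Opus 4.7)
The plan is to prove correctness by induction on the structure of the subtrees processed in postorder and then to bound the runtime by a straightforward accounting argument. Throughout, let $T_i$ denote the subtree rooted at node $i$, and let $\textsc{OPT}(i,u)$ denote the maximum total weight of any collection of at most $u$ pairwise disjoint antichains contained in $T_i$. The goal is to show that $\opt[i,u] = \textsc{OPT}(i,u)$ for every node $i$ and every $u \in \{0,1,\dots,k\}$; summing this over the roots of the forest then yields the optimum for the whole instance.

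The base case $\opt[i,0]=0$ is immediate. For the inductive step, I would fix $i$ and $u \ge 1$ and argue that an optimal collection $A_1,\dots,A_u$ of antichains in $T_i$ falls into exactly one of two cases. In Case 1 no antichain contains $i$; because the subtrees rooted at distinct children of $i$ share no ancestor--descendant relation, the restriction of $A_1,\dots,A_u$ to each child subtree $T_j$ is again a collection of at most $u$ pairwise disjoint antichains, and conversely any choice of $u$ disjoint antichains in each $T_j$ yields a valid collection in $T_i$. This gives the term $\sum_{j\in\child(i)} \opt[j,u]$. In Case 2 exactly one antichain, say $A_1$, contains $i$; then $A_1$ contains no proper descendant of $i$ (since $i$'s type set strictly contains each descendant's type set, violating incomparability), while $A_2,\dots,A_u$ may freely use descendants of $i$. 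Restricting $A_2,\dots,A_u$ to each $T_j$ yields $u-1$ disjoint antichains whose weights sum to at most $\sum_{j\in\child(i)}\opt[j,u-1]$, and this bound is attained by combining optimal $(u-1)$-tuples chosen in each child subtree. This gives the term $w_i + \sum_{j\in\child(i)}\opt[j,u-1]$. Taking the maximum over the two cases produces exactly the recursion used in Protocol~\ref{alg:app_scenarios_disjoint}, which establishes correctness once the children are processed before the parent, a property guaranteed by the postorder traversal.

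For the runtime, I would observe that each table entry $\opt[i,u]$ is computed by combining values over the children of $i$, contributing $O(|\child(i)|)$ arithmetic operations; summing over all nodes for a fixed $u$ gives $O(|V|)$, and summing over $u\in\{1,\dots,k\}$ gives $O(k|V|)$ total work, which is polynomial in the input size since $k\le n$. Backtracking to recover the antichains $A_1,\dots,A_k$ is a standard reverse traversal: at each entry $\opt[i,u]$ one records which of the two cases attained the maximum, and in Case 2 one places $i$ into a designated antichain while recursing with parameter $u-1$ at each child; this adds only $O(k|V|)$ additional work. The main conceptual subtlety, which I would emphasize, is the independence argument used in Case 2: because the antichain $A_1$ containing $i$ cannot reach into $T_i$'s descendants, the remaining $u-1$ antichains decouple cleanly across the child subtrees, so their optima can be summed. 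All other pieces are bookkeeping.
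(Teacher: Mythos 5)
Your proof is correct and takes essentially the same approach as the paper's: the same two-case analysis of whether node $i$ joins one of the $u$ antichains, justified by restricting an optimal solution to the child subtrees and conversely recombining child optima (using that distinct child subtrees contain no comparable pairs), followed by the same $O(k|V|)$ runtime and backtracking accounting — you simply spell out the induction and the decoupling argument more explicitly than the paper does. One immaterial nitpick: you say $i$'s type set \emph{contains} each descendant's, whereas the paper's forest convention orients the containment the other way; either way the type sets of an ancestor-descendant pair intersect, which is all the antichain condition needs.
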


\begin{proof}
    For the base case, $\opt[i, u] = 0$ because we are allowed to select $0$ antichains.
    The recursive statement considers two possibilities: including $i$ and excluding $i$ as one of the $u$ disjoint antichains.
    In the case that $i$ is included, the weight of the $u$ disjoint antichains in the subtree rooted at $i$ is the weight of $i$ plus the weight of the $u-1$ maximum weight disjoint antichains in the subtree rooted at the children of $i$.
    In the case that $i$ is excluded, we take the sum of the $u$ disjoint antichains on the subtrees rooted at the children of $i$.
    The recursive statement for \opt\, in Protocol \ref{alg:app_scenarios_disjoint} considers which of these situations yields greater weight; thus, the algorithm correctly computes every entry of the \opt\, table.
    Notably, we can recover the maximum weight antichains via a standard backtracking argument.

    Protocol \ref{alg:app_scenarios_disjoint} runs in time $O(k \cdot |V|)$ as computing the \opt\, table, and backtracking are completed in that amount of time.
\end{proof}

Value and supporting prices queries reduce to finding a maximizing clause of the XOS representation.
For disjoint antichain functions, this corresponds to identifying $k$ disjoint maximum weight antichains.
We can compute such antichains using algorithm \ref{alg:app_scenarios_disjoint}.

\begin{observation}[Value Queries]\label{obs:app_scenarios_value}
    Value queries for disjoint antichain functions can be answered in polynomial time.
\end{observation}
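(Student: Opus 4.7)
The plan is to reduce a value query $f(S)$ to a single call of Protocol~\ref{alg:app_scenarios_disjoint}, run on the forest naturally induced by the laminar type sets restricted to $S$. Concretely, given input $S \in 2^{[n]}$, I first build the forest $G_S = (S, E_S)$ whose edges are the covering pairs of the containment order on $\{T_i\}_{i \in S}$: make $j$ the parent of $i$ whenever $T_i \subsetneq T_j$ and no $\ell \in S$ satisfies $T_i \subsetneq T_\ell \subsetneq T_j$. Since $\{T_i\}_{i \in S}$ is a subfamily of a laminar family, it is itself laminar, so its Hasse diagram is a forest; it can be built in $O(|S|^2)$ time using direct set-containment checks (or faster with suitable preprocessing of $\{T_i\}_{i \in [n]}$).

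Next I would argue that antichains of $G_S$ coincide exactly with elements of $\ac(S)$. By laminarity, any two distinct $i, j \in S$ satisfy either $T_i \subseteq T_j$, $T_j \subseteq T_i$, or $T_i \cap T_j = \emptyset$; the first two cases correspond to $i$ and $j$ being comparable in $G_S$, while the third case corresponds to incomparability. Hence $\{i, j\}$ is disjoint-type if and only if $i, j$ are incomparable in $G_S$, and therefore $A \subseteq S$ lies in $\ac(S)$ if and only if $A$ is an antichain in $G_S$. It follows that
\[
f(S) \;=\; \max_{A_1,\dots,A_k \in \ac(S)} w(A_1 \cup \cdots \cup A_k) \;=\; \max w(A_1 \cup \cdots \cup A_k),
\]
where the latter maximum is taken over collections of at most $k$ disjoint antichains in $G_S$ with vertex weights $\{w_i\}_{i \in S}$.

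Finally, I invoke Protocol~\ref{alg:app_scenarios_disjoint} on $(k, w|_S, G_S)$ to obtain antichains $A_1,\dots,A_k$ in polynomial time, and return $w(A_1 \cup \cdots \cup A_k)$. Correctness and running time of that subroutine were established in the preceding lemma, so the overall procedure is polynomial. The only substantive step is the structural identification of $\ac(S)$ with the antichains of $G_S$; this is immediate from laminarity and is the only place where the hypothesis on $\{T_i\}$ is used, so no genuine obstacle remains.
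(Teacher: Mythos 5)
Your proof is correct and follows essentially the same route as the paper: build the forest induced by the laminar family $\{T_i\}_{i\in S}$, observe that $\ac(S)$ coincides with the antichains of that forest (by laminarity, nested $=$ comparable, disjoint $=$ incomparable), and invoke Protocol~\ref{alg:app_scenarios_disjoint}. You merely make explicit the structural identification that the paper leaves implicit.
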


\begin{observation}[Supporting Price Queries]\label{obs:app_scenarios_price}
    Supporting price queries for disjoint antichain functions can be answered in polynomial time.
\end{observation}

\begin{proof}[Proof for Observations \ref{obs:app_scenarios_value} and \ref{obs:app_scenarios_price}]
    We show how to find the maximizing clause of disjoint antichain functions.
    Let $S$ be the input to the value or supporting price query.
    We take $G = (V, E)$ to be the forest with $V = S$ induced by laminar family $\{T_i\}_{i\in S}$.
    Using algorithm \ref{alg:app_scenarios_disjoint} on integer $k$, weights $w$, and forest $G = (V, E)$ recovers maximum weight antichains $A_1,\dots, A_k$ which define the maximizing clause of the disjoint antichain function $f$.

    The output of the value query is $w(A_1\cup\dots\cup A_k)$.
    In addition, the price of item $i$ for the supporting price query is $w_{i}\cdot \1{i\in A_1\cup\dots\cup A_k}$.
\end{proof}

\begin{observation}[Demand Queries]
    Demand queries for disjoint antichain functions can be answered in polynomial time.
\end{observation}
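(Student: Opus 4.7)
The plan is to reduce a demand query to the same forest dynamic program already used for value and supporting price queries, except with vertex weights shifted by the posted prices. Given prices $p_1,\dots,p_n \ge 0$, I want to compute a set $D \in \arg\max_{S \in 2^{[n]}}\bigl(f(S) - p(S)\bigr)$, where $f$ is the given $k$-disjoint antichain function with weights $w_1,\dots,w_n \ge 0$ and laminar type sets $T_1,\dots,T_n$.

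The first step is to argue that one may take $D$ equal to the union of the $k$ pairwise disjoint antichains realizing $f(D)$. Let $S^*$ be any optimal set and let $A_1^*,\dots,A_k^*$ be pairwise disjoint antichains in $\ac(S^*)$ with $w(A_1^*\cup\dots\cup A_k^*) = f(S^*)$. Set $A^* = A_1^*\cup\dots\cup A_k^* \subseteq S^*$. Each $A_u^*$ is also an antichain in $\ac(A^*)$, so $f(A^*) \ge w(A_1^*\cup\dots\cup A_k^*) = f(S^*)$, and monotonicity of $f$ gives the matching upper bound; hence $f(A^*) = f(S^*)$. Combined with $p(A^*) \le p(S^*)$, the set $A^*$ is also optimal. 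Therefore the demand query collapses to choosing $k$ pairwise disjoint antichains $A_1,\dots,A_k$ in the forest induced by $T_1,\dots,T_n$ that maximize $\sum_{u=1}^{k}\sum_{i\in A_u}(w_i - p_i)$.

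Next I would solve this modified problem with essentially the same recursion as in Protocol \ref{alg:app_scenarios_disjoint}, namely $\opt[i,u] = \max\bigl\{ \sum_{j\in\child(i)}\opt[j,u],\; (w_i - p_i) + \sum_{j\in\child(i)}\opt[j,u-1] \bigr\}$, with base case $\opt[i,0] = 0$. The only difference from the value-oracle setting is that the effective weight $w_i - p_i$ may be negative. I would dispatch this with the observation that $\opt[i,u]$ is nondecreasing in $u$: any feasible selection using $u-1$ antichains is feasible using $u$, by leaving one antichain empty. Given this monotonicity, whenever $w_i - p_i < 0$ the first branch of the max automatically dominates, so the recursion declines to include $i$ at no cost and leaves $i$'s subtree fully available for the remaining antichains. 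Standard backtracking through the table recovers the antichains, and returning $D = A_1 \cup \dots \cup A_k$ answers the demand query in the same polynomial running time as Protocol \ref{alg:app_scenarios_disjoint}.

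The only substantive obstacle is the reduction in the first step: it relies on the fact that $f(A^*)$ coincides with (not merely upper bounds) $f(S^*)$, which fails for arbitrary XOS set functions and uses the antichain structure in an essential way. Once this collapse is justified, the rest is a direct invocation of the dynamic program already analyzed for the value and supporting-price oracles.
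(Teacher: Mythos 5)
Your proposal is correct and follows essentially the same route as the paper: both collapse the demand query to finding $k$ disjoint antichains maximizing $\sum_{i}(w_i - p_i)$ (using the observation that dropping items outside the realizing antichains never hurts, since they contribute no value and have nonnegative price) and then run the same forest dynamic program with shifted weights. Your extra remarks — the explicit monotonicity-in-$u$ argument showing the recursion handles negative effective weights — only make more careful a step the paper leaves implicit.
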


\begin{proof}
    Let $p_1,\dots,p_n \ge 0$ be the prices given to the demand query.
    We simplify demand queries:
    \begin{align*}
        &\arg\max_{S\in 2^{[n]}} \left\{f(S) - \sum_{i\in S} p_i \right\} \\
        &= \arg\max_{S\in 2^{[n]}} \left\{\left(\max_{A_1,\dots,A_k \in \ac(S)} w(A_1 \cup \dots \cup A_k)\right) - p(S) \right\} \\
        &= \arg\max_{A_1,\dots,A_k \in\ac([n])} \left\{ w(A_1\cup\dots\cup A_k) - p(A_1\cup\dots\cup A_k) \right\}
    \end{align*}
    The last equality holds because selecting nodes $i$ that are not a part of an antichain only decreases the objective.
    This node does not contribute to value and has a price $p_i \ge 0$.

    We use Algorithm \ref{alg:app_scenarios_disjoint} to find the $k$-disjoint maximum weight antichains on the forest $G = (V, E)$ induced by laminar family $T_1,\dots,T_n$ with weights $w_1 - p_1, \dots, w_n - p_n$.
    By the simplification of demand queries, an item is in demand if and only if it is contained in one of the antichains yielding maximum weight.
\end{proof}

\subsection{Proof of Proposition \ref{prop:scenarios_sampleapprox} }\label{sec_app:black-box-prop}

\begin{algorithm}[ht]
    \SetAlgoNoLine
    \KwIn{
        An instance of the Probabilistic Scenarios problem $\inner{n,m,p}$.
    }
    \KwOut{
        An admissible policy $\pi$.
    }
    Sample $h = \ceil{\frac{2 m^2 |\calO|^2 \left(\ln\left(n |\calO| / \delta\right) + 1\right) }{\epsilon^2}}$ iid. scenarios $s_1,\dots,s_h \sim p$\;
    Compute reward estimate $\hat{w}_{k,o} = \frac{1}{h} \sum_{i\in[h]} r_{k,o,s_i}$ for each $k,o$\;
    Reduce the problem to \mwlm\, as in Lemma \ref{lem:scenarios_redmwlm} using $\hat{w}$ in place of $p_o r_{k,o}$ as edge weights\;
    Reduce the MLWM instance to an instance of XOS Combinatorial Auctions (Lemma \ref{lem:scenarios_redcombo})\;
    \Return the policy $\pi$ corresponding to allocation output by $\calA$ the combinatorial auctions instance\;
    
    \caption{\textsc{Sample-Based Algorithm}($n$, $m$, $p$)}
    \label{alg:scenarios_sample}
\end{algorithm}


\begin{proof}
    We begin the proof by showing that $\hat{w}_{k,o} = p_o r_{k,o} \pm O(\epsilon / |\calO|)$ for all $k,o$ using Hoeffding's bound and union bounds.
    The proof is completed by arguing that using $\hat{w}_{k,o}$ as a proxy to compute the weight of a laminar matching introduces at most $\epsilon$ additive error.

    Define random variable $X_{k,o} = \sum_{i\in[h]} r_{k,o,s_i}$ for all $k,o$.
    This random variable satisfies $\hat{w}_{k,o} = \frac{1}{h} X_{k,o}$ and $\E[X_{k,o}] = h p_o r_{k,o}$.
    Using Hoeffding's bound implies
    \begin{align*}
        \Pr\left[|\hat{w}_{k,o} - w_{k,o}| \ge \frac{\epsilon}{2|\calO|}\right]
        = \Pr\left[|X_{k,o} - \E[X_{k,o}]| \ge \left(\frac{\epsilon}{2|\calO|}\right) h \right] \le \frac{\delta}{n|\calO|}
    \end{align*}
    Applying union bounds shows that $|\hat{w}_{k,o} - w_{k,o}| \le \epsilon / (2|\calO|)$ for all $k,o$ with a failure probability of at most $\delta$.
    Consider an arbitrary laminar matching $M$ on the constructed instance with edge weights $\hat{w}$ in the event that estimation does not fail.
    The total weight of $M$ satisfies:
    \begin{align*}
        |w(M) - \hat{w}(M)|
        = \left|\sum_{(k,o)\in M} \left(w_{k,o} - \hat{w}_{k,o}\right)\right|
        \le \frac{\epsilon |M|}{2|\calO|}
        = \epsilon / 2
        \tag{$M$ has at most $|\calO|$ edges}
    \end{align*}
    Let $M^*$ be the optimal laminar matching.
    Let $M$ be the laminar matching computed on the constructed instance of MWLM in Algorithm \ref{alg:scenarios_sample}.
    By Lemmas $\ref{lem:scenarios_redmwlm}$ and $\ref{lem:scenarios_redcombo}$ the reward of the returned policy $\pi$ satisfies
    \begin{align*}
        r(\pi)
        &= w(M) \\
        &\ge \hat{w}(M) - \frac{\epsilon}{2} \\
        &\ge \left(1 - 1/e\right)\cdot \hat{w}(M^*) - \frac{\epsilon}{2} \\
        &\ge \left(1 - 1/e\right)\cdot w(M^*) - \epsilon \\
        &= \left(1 - 1/e\right)\cdot \opt - \epsilon
    \end{align*}
    with probability at least $1-\delta$. 
\end{proof}


\subsection{Proof of Theorem \ref{thm:scenarios_hardness}}\label{sec_app_prob_hard}

\begin{definition}[Max-$(3, 2B)$-SAT]
    Given an instance of $3$-SAT with $n$ variables and $m$ clauses in which each literal appears twice among all clauses (balance) and each clause has three unique literals (uniqueness), find an assignment of variables that maximizes the number of satisfied clauses.
\end{definition}

Note that the number of variables ($n$) and clauses ($m$) appearing in a 
Max-$(3,2B)$-SAT formula must satisfy $4n = 3m$, since the left side counts the 
total number of literals (with multiplicities) when each of $2n$ literals appears 
twice in the formula, while the right side counts the total number of literals 
when each of $m$ clauses contains exactly three literals.

\begin{fact}[$(3,2B)$-SAT Hardness of Approximation due to \cite{berman2004approximation}]\label{fact:scenarios_sathard}
    In a $(3,2B)$-SAT instance with $m$ clauses, it is NP-hard to distinguish between the following cases for every $\epsilon \in (0,1/4)$:
    \begin{itemize}
        \item Complete: there is an assignment satisfying $\left(\frac{1016 - \epsilon}{1016}\right)m$ of the clauses.
        \item Sound: every assignment satisfies at most $\left(\frac{1015 + \epsilon}{1016}\right)m$ of the clauses.
    \end{itemize}
\end{fact}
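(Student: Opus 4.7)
The plan is to prove the claimed $1/1016$ gap by a gap-preserving reduction from a stronger, ``generic'' \textsc{MAX-3SAT} hardness result to the structured $(3,2B)$-\textsc{SAT} form. Concretely, I would begin with an NP-hard gap instance $\phi$ of \textsc{MAX-E3-SAT} produced by the PCP theorem (for example H{\aa}stad's $7/8 + \eta$ inapproximability), so that it is NP-hard to distinguish the case that all clauses of $\phi$ are satisfiable from the case that at most a $(1-\eta)$ fraction are satisfiable, for some fixed constant $\eta > 0$. From $\phi$ I would build a $(3,2B)$-\textsc{SAT} instance $\psi$ by simultaneously enforcing (i) balance (every literal appears exactly twice in $\psi$), and (ii) uniqueness (every clause of $\psi$ contains three distinct literals), while losing only a controlled multiplicative factor in the completeness/soundness gap.

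For balance, I would use an expander-based replication: for each variable $x$ appearing $d_x$ times in $\phi$, introduce a cloud $x^{(1)}, \ldots, x^{(d_x)}$ of fresh copies, one per occurrence, and replace each occurrence of $x$ in a clause of $\phi$ by the corresponding copy. I then glue the cloud together by an equality gadget along the edges of a constant-degree spectral expander $G_x$ on the copies: for each edge $(i,j) \in E(G_x)$, add the two implications $x^{(i)} \leftrightarrow x^{(j)}$ encoded as 2-clauses and then lifted to 3-clauses by padding with canonical auxiliary literals that themselves appear exactly twice. For uniqueness, any 3-clause with a repeated literal produced above is broken up with a standard small gadget on two auxiliary variables that preserves the $2B$ property and adds only $O(1)$ new clauses per offending clause.

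For completeness, any assignment satisfying $\phi$ lifts by setting every copy $x^{(i)}$ to the value of $x$, satisfying every lifted original clause, every expander-equality clause, and every uniqueness-gadget clause. For soundness, given an assignment $\alpha$ of $\psi$, I would round it by majority vote within each cloud to an assignment $\hat{\alpha}$ of $\phi$. The expander mixing lemma applied to each $G_x$ implies that the number of equality clauses violated by $\alpha$ is at least linear in the number of copies that disagree with their majority, so any $\alpha$ that approximates $\psi$ strictly better than the stated threshold is forced to be close to a cloud-consistent assignment, and $\hat{\alpha}$ then approximates $\phi$ strictly better than $1-\eta$, contradicting the starting hardness.

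The main obstacle, and the only nontrivial part once the skeleton above is fixed, is the explicit constant arithmetic that lands the denominator at exactly $1016$. This requires tracking, per original clause of $\phi$, the number of auxiliary clauses introduced by expander replication and by the uniqueness gadget; choosing the expander's spectral gap and degree so that each ``logical'' violation in $\phi$ costs a fixed fraction of the new clauses in $\psi$; and tuning $\eta$ against these overheads so that the overall soundness loss per violated original clause sums to $1/1016$ of the total clause count of $\psi$. The topological structure of the reduction is otherwise standard PCP machinery, and the final statement follows by observing that a polynomial-time distinguisher achieving the forbidden ratio on $\psi$ would, via the reduction and the majority-vote rounding, distinguish the two cases of $\phi$ in polynomial time.
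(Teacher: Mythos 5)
The paper does not prove this statement at all: it is stated as a known Fact and attributed directly to \cite{berman2004approximation}, so the only ``proof'' in the paper is the citation. Your proposal is therefore an attempt to reprove the Berman--Karpinski result from scratch, and while its skeleton (start from a PCP-based gap for \textsc{MAX-E3-SAT}, reduce literal occurrences via an amplifier construction, decode by majority within each cloud) matches the broad strategy actually used in that line of work, it has a genuine gap where it matters most.

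The entire content of the statement is the explicit constant $1016$, and your plan to obtain it by ``choosing the expander's spectral gap and degree'' and ``tuning $\eta$ against these overheads'' will not land there. Berman and Karpinski do not use generic spectral expanders with the expander mixing lemma; they use purpose-built combinatorial amplifiers (their ``wheel amplifiers'') whose exact edge counts and consistency guarantees are what produce the $1/1016$ gap, and a generic constant-degree expander argument provably loses more in the soundness analysis, yielding a strictly larger denominator. Deferring this to ``constant arithmetic'' is deferring the theorem itself. There are also two concrete technical problems in the construction as described: (i) padding an equality 2-clause $(a \vee b)$ to a 3-clause $(a \vee b \vee z)$ weakens it --- an assignment can satisfy the padded clause by setting $z$ true without enforcing the equality, so the majority-decoding soundness argument breaks unless the padding variables are themselves constrained by additional gadgetry, which then disturbs the occurrence counts; and (ii) the $(3,2B)$ condition requires every \emph{literal} to appear exactly twice, so each variable appears exactly four times, and placing two implication clauses per expander edge plus the original occurrence plus the padding occurrences must sum to exactly that budget for every copy --- a constant-degree expander of degree $d \ge 3$ already overshoots. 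Your proposal would establish \emph{some} inapproximability constant for a bounded-occurrence variant after these issues are repaired, but not the stated one; for the purposes of this paper the correct move is the one the authors make, namely to invoke \cite{berman2004approximation} as a black box.
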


In our reduction, there is a key for each literal.
The possible sequences of chains that could be realized correspond with clauses.
The main gadget of our reduction is a sequence of $n$ keychains, which is common to all scenarios.
Each of these first $n$ chains contains the positive and negative literal of a variable.
By selecting keys from these chains, a policy commits to a variable assignment.
The final keychain contains the keys corresponding to the literals of a clause selected uniformly at random.
A policy may receive a reward on the final chain only if its variable assignment specified in the first $n$ rounds satisfies the clause.

\begin{proof}
    We begin by describing how to construct an instance of the Probabilistic Scenarios problem given an instance of Max-$(3,2B)$-SAT.
    For each literal, we construct a key.
    For each clause, we construct three scenarios: this yields $3m$ total scenarios.
    Consider the clause $\ell_1 \vee \ell_2 \vee \ell_3$.
    The scenarios $s_1, s_2, s_3$ associated with this clause all have the same sequence of keychains.
    The first $n$ keychains contain the pairs of literals corresponding to a variable: $C_t = \left\{x_t, \neg x_t\right\}$ for all $t\in[n]$.
    The last chain contains the literals in the clause: $C_{n+1} = \left\{\ell_1, \ell_2, \ell_3\right\}$.
    The correct keys for the three scenarios are $k^*(s_1) = \ell_1$, $k^*(s_2) = \ell_2$, and $k^*(s_3) = \ell_3$.
    Finally, the prior $p$ is the uniform distribution over all scenarios.

    We now define a surjective map from admissible policies to $(3,2B)$-SAT assignments, in which the expected reward of the policy is determined by the number of clauses satisfied by the corresponding assignment.
    First, observe that for any admissible policy, there is an admissible policy that always plays an untested key when possible that yields at least as much expected reward.
    To further simplify our arguments, we focus on deterministic policies.
    The reduction for randomized policies follows from taking convex combinations of deterministic policies.
    Now consider a policy $\pi$ that selects keys/literals $\ell_1,\dots,\ell_n$ from the first $n$ chains.
    We map this policy to the assignment $x_t = \neg \ell_t$ for each $t\in [n]$.
    This map from policies to assignments is surjective.

    Finally, we relate the expected reward of a policy to the number of clauses satisfied
    by its corresponding variable assignment. The expected reward of policy $\pi$ may be divided into three shares: 
    \begin{enumerate}
      \item $r_1(\pi) =$ the expected number of locks opened in time steps $1,\ldots,n$. 
      \item $r_2(\pi) =$ the expected number of locks opened in step $n+1$ using a key that was
    already tested in one of the first $n$ times steps.
      \item $r_3(\pi) =$ the expected number of locks opened in step $n+1$ using a
    previously untested key.
    \end{enumerate}
    We will show that the first two of these terms are constants having no dependence on the policy,
    and that the third one depends on the number of clauses satisfied by the corresponding variable
    assignment.
    
    For any policy $\pi$, one can simulate $\pi$ 
    to precompute the set $S_{\pi}$ of keys that will be tested on the first $n$ keychains 
    assuming that none of the tested keys opens the lock. 
    Note that the contents of the first $n$ keychains are 
    the same in all scenarios, so $S_{\pi}$ is a fixed set 
    of $n$ keys that depends on $\pi$ but has no dependence
    on the scenario $s$. 
    Recall that the total number of scenarios is $3m = 4n$. By the balance property of
    Max-$(3,2B)$-SAT, each of the $2n$ keys is correct in exactly 2 of these scenarios. 
    Since $S_{\pi}$ has $n$ keys, the probability that the correct key belongs to 
    $S_{\pi}$ is $\frac{2n}{4n} = \frac12$,
    irrespective of the policy $\pi$. The number of locks opened during
    the first $n$ time steps is 1 if the correct key belongs to $S_{\pi}$ 
    and 0 otherwise, so $r_1(\pi) = \frac12$. Since the correct key always
    belongs to $C_{n+1}$ and $\pi$ is exploitative, the event that a lock
    is opened in step $n+1$ using a key that was previously tested coincides
    with the event that the correct key belongs to $S_{\pi}$, so 
    $r_2(\pi) = \frac12$. 
    
    Finally, to calculate $r_3(\pi)$, let $k_{n+1}(\pi)$ denote the key 
    that $\pi$ selects from $C_{n+1}$ assuming the first $n$ keys tested
    are not correct. Again, since the first $n$ chains are the same in all
    scenarios, the information after time $n$ is exactly the same across all scenarios. When seeing the last chain associated with a (randomly selected) clause, $\pi$ must choose the same key at time $n+1$ across all three scenarios associated with the clause ($k_{n+1}(\pi)$ is invariant across all scenarios $s$ associated with any fixed clause). 
    We can assume without loss of generality
    that $\pi$ will always select an untested key if possible. The only scenarios
    in which this is impossible are those in which the three keys on chain $C_{n+1}$
    belong to $S_{\pi}$. By our construction of the truth assignment 
    corresponding to $\pi$, this coincides with the case when this 
    truth assignment, which we denote by $\tau(\pi)$, fails to satisfy the clause 
    associated with the scenario, which we denote by $\text{cl}(s)$. 
    
    The event that $\pi$ opens the lock at time $n+1$ using a previously
    untested key is expressed by the relations $k^*(s) = k_{n+1}(\pi) \not\in S_{\pi}$. 
    We have seen that $k_{n+1}(\pi) \not\in S_{\pi}$ 
    is equivalent to the assertion that $\tau(\pi)$ satisfies 
    $\text{cl}(s)$. Furthermore, conditional on $k_{n+1}(\pi) \not\in S_{\pi}$,
    the correct key $k^*(s)$ is equally likely to be any of the three
    literals in $\text{cl}(s)$, so 
    $\Pr(k^*(s) = k_{n+1}(\pi) \, | \, k_{n+1}(\pi) \not\in S_{\pi}) = \frac13$.
    Putting these observations together, we find that 
    $ r_3(\pi) = \frac13 \cdot \frac{\text{sat}(\tau(\pi))}{m} $,
    where $\text{sat}(\tau)$ denotes the number of clauses satisfied
    by truth assignment $\tau$. 
    
    We now invoke Fact \ref{fact:scenarios_sathard} due to \cite{berman2004approximation}.
    Let $\epsilon \in (0,1/4)$ be an arbitrary constant.
    In the complete case of $(3,2B)$-SAT there is a variable assignment $\tau$ satisfying $\text{sat}(\tau) \ge \left(\frac{1016 - \epsilon}{1016}\right) m$ of the $m$ total clauses.
    Thus, there is a policy $\pi$ with expected reward at least
    \begin{align*}
        r(\pi) = r_1(\pi) + r_2(\pi) + r_3(\pi)
        = \frac12 + \frac12 + \frac{\text{sat}(\pi)}{3m} 
        \ge \frac{4064 - \epsilon}{3048}
    \end{align*}
    In the sound case, for every truth assignment $\tau$ we have $\text{sat}(\tau) \leq \left(\frac{1015 + \epsilon}{1016}\right) m$.
    Thus, the expected reward of any policy $\pi$ is bounded as follows:
    \begin{align*}
        r(\pi)
        = \frac12 + \frac12 + \frac{\text{sat}(\tau)}{3m}
        \le \frac{4063 + \epsilon}{3048}
    \end{align*}
    Thus, for any $\epsilon \in(0,1/4)$, the Probabilistic Scenarios setting is hard to approximate to a factor greater than $\frac{4063 + \epsilon}{4064 - \epsilon}$. The hardness of approximation factor in the theorem statement is obtained by applying a 
    change of variables, letting $\delta = \frac{8127 \epsilon}{4064\cdot(4064 - \epsilon)}$, which ranges from $0$ to $\frac{2031.75}{16,515,080} \approx 0.000123$ as $\epsilon$ varies from $0$ to $\frac14$.
\end{proof}

\section{Omitted Details and Proofs from Section \ref{sec:order}}\label{sec:app_order}

\subsection{NP-Hardness of Keychain Order Selection}\label{sec:app_order_hardness}
When each keychain can contain an arbitrary number of keys and the locksmith can pick the order of the keychains, we show that it is NP-Hard to determine the optimal strategy for the locksmith via a reduction from the \textit{Downwards Facing Bipartite Graph} problem. 

Consider the adjacency matrix $A_G$ for any instance graph $G$ of DFBP. It is easy to see that reordering the left nodes corresponds to permuting the rows of $A_G$, and reordering the right nodes of $G$ corresponds to permuting the columns of $A_G$. Then, finding an ordering of left and right nodes of $G$ such that there are no upward edges corresponds directly to independently permuting the rows and columns of $A_G$ such that the permuted matrix is upper triangular. 
\begin{definition} [Upper Triangular Matrix Permutation (UTMP) \cite{fertin2015obtaining,gerbner2016topological}]
    Given a binary $n \times n$ matrix, determine whether it is permute-equivalent to an upper triangular matrix.
\end{definition}

This version of DFBP is known as the \emph{Upper Triangular Matrix Permutation} (UTMP) problem, and it was shown to be NP-hard by \cite{gerbner2016topological, fertin2015obtaining}.
For ease of exposition, we will present our hardness result in terms of UTMP notation, although the proof may be converted to DFBP terminology by simply considering the underlying graph, instead of the adjacency matrix.
For the remainder of this section, we take $p$ to be the uniform distribution over keys.

\subsubsection{Notation}\label{subsubsec:app_order_hardness_notation}

To describe the UTMP problem and our reductions, it will be useful to define a few matrix-related notations. 
For an $n \times n$ matrix $M$,  its \textit{main diagonal} are the entries $M_{i, i}, \forall i \in [n]$, the \textit{upper triangle} of $M$ are the entries $M_{i, j}, \forall i \leq j$ and the \textit{lower triangle} of $M$ are the entries $M_{i, j}, \forall i \geq j$. A binary matrix $M$ has a \textit{full upper triangle} if   all the entries in the upper right triangle of $M$ are $1$ (i.e. iff $M_{i, j} = 1$, $\forall i \leq j$).
We say that a binary matrix $M$ is \textit{upper triangular} if   all its $1$-valued entries occur in the upper right triangle (i.e. iff $M_{i, j} \ne 0 \implies i \leq j$). 
Notably, a matrix with a full upper triangle is not necessarily upper triangular, nor is the reverse true. 
Finally, we say that a matrix $A$ is \emph{permute-equivalent} to $B$ if some independent combination of   row permutations and column permutations of $A$ yields $B$; in other words if there exist permutation matrices $\Pi_R$ and $\Pi_C$ such that $B = \Pi_R \, A  \, \Pi_C$.

\subsubsection{Useful Observations and Lemmas for NP-Hardness of Keychain Order Selection}\label{subsec:app_observations_for_hardness}
\begin{observation} \label{upper_bound_on_unordered_instances}
    Consider any instance of the Keychain Order Selection problem with an an equal number of keys and keychains. When $p$ is the uniform distribution, the locksmith's reward is at most $(n+1)/2$.
\end{observation}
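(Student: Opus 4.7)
The plan is to upper-bound the reward of any fixed (deterministic, exploitative, non-adaptive) policy $\pi = (\kappa, \sigma)$ term-by-term using the closed-form expression already derived in the setup and then exploit the fact that $p$ is uniform together with injectivity of $\kappa$. Since randomized policies are convex combinations of deterministic ones, bounding the deterministic case suffices.

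The key calculation is that, under the uniform prior, the reward formula collapses to
\[
  r(\kappa,\sigma) \;=\; \sum_{C \in \mathcal{C}} p_{\kappa(C)}\, r_{\kappa(C), C, \sigma}
  \;=\; \frac{1}{n}\sum_{C \in \mathcal{C}} r_{\kappa(C), C, \sigma}.
\]
For every keychain $C$, the quantity $r_{\kappa(C), C, \sigma}$ counts keychains at position $\sigma(C)$ or later containing key $\kappa(C)$, which is trivially at most the number of such positions, namely $n - \sigma(C) + 1$. The first step will be to write down this pointwise bound.

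The second step is to observe that, because there are $n$ keychains and $n$ keys, and $\kappa$ is required to be an injection (with the convention that a keychain on which no previously untested key is available contributes zero reward and can be ignored), the positions $\{\sigma(C) : C \in \mathcal{C}\}$ are a permutation of $[n]$. Thus summing the pointwise bound over $C \in \mathcal{C}$ gives
\[
  \sum_{C \in \mathcal{C}} \bigl(n - \sigma(C) + 1\bigr) \;=\; \sum_{i=1}^{n}(n - i + 1) \;=\; \frac{n(n+1)}{2},
\]
which after dividing by $n$ yields the desired bound $(n+1)/2$.

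There is no substantial obstacle; the only subtlety worth addressing is the corner case in which no system of distinct representatives exists and the policy representation $\kappa$ is not a full bijection. This is handled by allowing $\kappa$ to be a partial injection into $[n]$: the domain of $\kappa$ is a subset of $\mathcal{C}$, and keychains outside this domain contribute $0$. The same argument then bounds the sum by a subsum of $\sum_{i=1}^{n}(n-i+1)$, which is still at most $n(n+1)/2$, preserving the conclusion.
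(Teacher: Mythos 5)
Your proposal is correct and follows essentially the same argument as the paper: bound the contribution of the key tested at position $i$ by $\frac{1}{n}(n-i+1)$ and sum over positions to obtain $\frac{n+1}{2}$. The extra care you take with partial injections is fine but not needed for the bound, since dropping terms only decreases the sum.
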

\begin{proof}
    If the locksmith has not found the correct key by round $i$, the best she can hope for is that the key she tries shows up in every one of the future $n-i$ keychains. 
    Thus her optimal reward can be no more than:
    $$r(\pi) \leq \sum_{i \in [n]} \left( \frac{1}{n} \right) (n-i+1) = \frac{1}{n} \sum_{i \in [n]}i = \frac{n+1}{2}$$
\end{proof}

Further, observe that if it is possible to achieve the theoretical upper bound reward of $(n+1)/2$ in a uniform prior instance, then its adjacency matrix is permute-equivalent to a matrix with a full-upper triangle. 

\begin{lemma} \label{permute_equivalence_of_upper_bound}
    Consider an instance of the Keychain Order Selection problem with an equal number of keys and keychains, and let $A$ be the adjacency matrix of the instance (i.e. $A_{i, j} = 1$ iff key $i$ is on keychain $j$). The locksmith can achieve reward $(n+1)/2$ when the prior over keys is uniform if and only if $A$ is permute-equivalent to some $B$ with a full upper triangle.
\end{lemma}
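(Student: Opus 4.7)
The plan is to set up a bijective correspondence between deterministic exploitative policies for a square instance and pairs of permutations acting on the adjacency matrix $A$, under which the policy achieves the upper bound from Observation~\ref{upper_bound_on_unordered_instances} if and only if the resulting matrix has a full upper triangle. Since any randomized policy is dominated by some deterministic one, and since an optimal policy is without loss of generality exploitative and non-adaptive (by the argument in \Cref{subsec:order_setting}), restricting to deterministic policies $(\kappa,\sigma)$ loses no generality.

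For the forward direction, I would start from the identity
\begin{align*}
r(\kappa,\sigma) \;=\; \frac{1}{n}\sum_{i=1}^{n} r_{\kappa(\sigma^{-1}(i)),\,\sigma^{-1}(i),\,\sigma},
\end{align*}
and observe that each term is at most $n-i+1$. Hence $r(\kappa,\sigma)=(n+1)/2$ forces equality in every term, meaning that the key played in round $i$ must lie in every keychain played in rounds $i, i+1, \dots, n$. In particular, $\kappa(\sigma^{-1}(i))\in \sigma^{-1}(i)$ and the $n$ chosen keys are distinct, so $\kappa\circ\sigma^{-1}$ is a bijection from $[n]$ to $[n]$. Let $\Pi_C$ be the column permutation sending column $\sigma^{-1}(j)$ of $A$ to column $j$, and $\Pi_R$ the row permutation sending row $\kappa(\sigma^{-1}(i))$ of $A$ to row $i$. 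Then $B := \Pi_R A \Pi_C$ satisfies $B_{i,j}=1$ iff the key played in round $i$ lies in the keychain played in round $j$. The round-by-round equality condition above says exactly that $B_{i,j}=1$ whenever $j\ge i$, so $B$ has a full upper triangle.

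For the reverse direction, I would run the same construction backwards. Given permutation matrices $\Pi_R, \Pi_C$ with $B=\Pi_R A \Pi_C$ having a full upper triangle, define the ordering $\sigma$ by declaring $\sigma^{-1}(j)$ to be the keychain whose column of $A$ maps to column $j$ of $B$, and define $\kappa$ on this keychain to be the key whose row of $A$ maps to row $j$ of $B$. Since $B_{j,j}=1$ by the full-upper-triangle property, $\kappa(\sigma^{-1}(j))\in \sigma^{-1}(j)$, so $\kappa$ is a valid selection function; it is injective because $\Pi_R$ is a permutation. The condition $B_{i,j}=1$ for all $j\ge i$ then gives $r_{\kappa(\sigma^{-1}(i)),\sigma^{-1}(i),\sigma}=n-i+1$, and summing yields $r(\kappa,\sigma)=(n+1)/2$.

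The bookkeeping is the only real obstacle: one has to be careful that a single bijection $\kappa\circ\sigma^{-1}$ is what corresponds to the row permutation $\Pi_R$ (keys are indexed independently of keychains), and that the equality case in Observation~\ref{upper_bound_on_unordered_instances} both implies $\kappa(C)\in C$ for every chosen keychain and forces $\kappa$ to be a bijection onto $[n]$ (not merely an injection), so that $\Pi_R$ is a genuine permutation matrix. Once these indexing details are set up, both directions reduce to the observation that ``the key played in round $i$ appears in every later keychain (including round $i$'s own)'' is exactly the statement that row $i$ of the permuted adjacency matrix is $1$ from column $i$ onward.
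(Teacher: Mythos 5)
Your proof is correct and follows essentially the same route as the paper's: both directions rest on the equality case of Observation~\ref{upper_bound_on_unordered_instances} (the key played in round $i$ must appear on every keychain from round $i$ onward, including its own) together with the identification of row/column permutations with key renaming and keychain reordering. Your version is simply more explicit about the permutation-matrix bookkeeping, which the paper leaves implicit.
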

\begin{proof}
    $\Leftarrow$ 
    Suppose $A$ is permute equivalent to $B$ with a full upper triangle. 
    Notice that swapping rows is equivalent to renaming the keys and swapping columns is equivalent to reordering the keychains.
    Renaming and reordering the keys and keychains of $A$ according to $B$ and playing the keys along the main diagonal of $B$ yields the locksmith the optimal reward of $(n+1)/2$.
    
    $\Rightarrow$
    The sequence $(k_1,C_1), \, (k_2,C_2), \, \ldots, (k_n,C_n)$ encodes the Locksmith's strategy; it specifies the order in which she tests the keys and the associated keychains from which they are selected.
    Observe that $k_i$ must give the locksmith exactly $n-i+1$ reward in order to achieve the theoretically optimal upper bound.
    Consider the adjacency matrix $B$ of the instance when the keys
    are listed in the order $k_1,\ldots,k_n$ and the keychains are listed in the order $C_1,\ldots,C_n$. Whenever key $i$ is tested, it must be on every subsequent keychain in order to achieve the optimal reward bound. 
    So, $B_{i, j} = 1, \forall i \leq j$ and thus $B$ has a full upper triangle.
    Further, playing the diagonal of $B$ guarantees the locksmith $(n+1)/2$ reward.
\end{proof}

\subsubsection{Reduction from Upper Triangular Matrix Permutation}\label{sec:app_order_hardness_proof}
The UTMP problem is known to be $NP$-hard \cite{fertin2015obtaining,gerbner2016topological}.
To reduce it to the \emph{Keychain Order Selection} (KOS) Problem, we first show a (potentially achievable) upper bound for any instance of KOS.
Next, we observe that any binary matrix $A$ can be interpreted as an adjacency matrix for a KOS problem, in which $A_{i, j} = 1$ if and only if key $i$ is on keychain $j$.
In this viewpoint, swapping rows of $A$ corresponds to changing the names of keys, and swapping two columns of $A$ corresponds to swapping the order of the two associated keychains.
Finally, we show that an input $M$ to UTMP can be transformed into a matrix $A$ such that KOS achieves the optimal upper bound on $A$ if and only if $M$ is upper triangular. 

\keychainorderhard*

\begin{proof}
    We will consider the corresponding UTMP instance of the DFBP problem. Given an instance $M$ of the UTMP problem, let $\overline{M}$ be the matrix resulting from flipping every entry of $M$, 
    i.e., $\overline{M} = \mathbf{1}_{n \times n} - M$.
    Let $B$ be the matrix resulting after adding a full column of $1$s to the left of $\overline{M}$ and a full row of $1$s on the bottom of $\overline{M}$.
    Finally, let $A = B^T$.
    
    We interpret $A$ as the adjacency matrix of a Keychain Order Selection instance ($A_{i, j} = 1$ iff key $i$ is on keychain $j$). 
    We let the prior for any correct key be the uniform distribution on the $n+1$ keys.
    Notice that relabeling the keys corresponds to swapping the rows of $A$ and reordering the keychains corresponds to swapping the columns of $A$.
    
    Suppose $M$ is a YES instance of the UTMP problem. 
    Then there are permutation matrices $\Pi_R, \, \Pi_C$ such that the matrix $U = \Pi_R \, M \, \Pi_C$ is upper
    triangular, i.e.~$U$ has no 1 entries below its main diagonal. Then, $V = \Pi_R \, \overline{M} \, \Pi_C$ has 
    no 0 entries below its main diagonal. Now, let $P_R, \, P_C$ be permutation matrices with $n+1$ rows and columns
    obtained from $\Pi_R, \, \Pi_C$ as follows:
    \begin{equation} \label{eq:pr_pc}
        P_R = \begin{bmatrix} \Pi_R & 0 \\ 0 & 1 \end{bmatrix}, \qquad
        P_C = \begin{bmatrix} 1 & 0 \\ 0 & \Pi_C \end{bmatrix} .
    \end{equation}
    Left multiplication by $P_R$ permutes the first $n$ rows of an $(n+1) \times (n+1)$ matrix
    according to $\Pi_R$, while leaving the bottom row fixed. 
    Right multiplication by $P_C$ permutes the last $n$ columns of an $(n+1) \times (n+1)$ matrix
    according to $\Pi_R$, while leaving the leftmost column fixed. Hence, the matrix 
    $W = P_R \, B \, P_C$ will consist of the entries of $V = \Pi_R \, \overline{M} \, \Pi_C$ 
    with a full column of 1's appended to the left and a full row of 1's appended to the
    bottom. Recalling that $V$ has no 0 entries below its main diagonal, we see that $W$ 
    has no 0 entries \emph{on or below} its main diagonal. Hence, the transpose matrix
    $W^T = P_C^T \, A \, P_R^T$ has a full upper triangle. By \Cref{permute_equivalence_of_upper_bound}, 
    the KOS instance with adjacency matrix $A$ and uniform prior over keys has a policy allowing 
    the locksmith to achieve reward $((n+1)+1)/2$. 
    

    Conversely, suppose the KOS instance with adjacency matrix $A$ and uniform prior over keys has a policy 
        achieving reward $((n+1)+1)/2$. Then, by \Cref{permute_equivalence_of_upper_bound}, it must be possible to 
        multiply $A$ on the left and right by permutation matrices to obtain a matrix with a full upper triangle. 
        Denote this factorization as $W^T = P_C^T \, A \, P_R^T$, and consider the transpose matrix
        $W = P_R \, B \, P_C$ which has no 0 entries on or below the main diagonal. In particular, the
        bottom row of $W$ consists entirely of 1's. The bottom row of $B$ also consists entirely of 1's,
        so we may assume without loss of generality that $P_R$ is a permutation matrix that fixes the bottom
        row, i.e.~$(P_R)_{n+1,n+1} = 1$. Otherwise, if $(P_R)_{i,n+1}=1$ for some $i < n+1$, then the $i^{\mathrm{th}}$ row of $P_R B$ also
        consists entirely of 1's, so we can left-multiply $P_R$ by a permutation matrix that transposes rows
        $i$ and $n+1$ to obtain a new permutation matrix $P'_R$ satisfying $(P'_R)_{n+1,n+1}=1$ and 
        $P'_R \, B = P_R \, B$. Similarly, we may assume without loss of generality that $P_C$ is a 
        permutation matrix that fixes the leftmost column, i.e. $(P_C)_{1,1}=1$. Having justified
        these assumptions about the structure of $P_R$ and $P_C$, we may represent them in the form
        given in equation~\eqref{eq:pr_pc} above, for some $n \times n$ permutation matrices 
        $\Pi_R, \, \Pi_C$. Then $\Pi_R \, \overline{M} \, \Pi_C$ consists of the first $n$ rows
        and last $n$ columns of $W$, hence it has no 0 entries below the main diagonal. Subtracting
        this matrix from the matrix $\mathbf{1}_{n \times n}$, we find that 
        $\Pi_R \, M \, \Pi_C$ has no 1 entries below the main diagonal, i.e.~it is upper triangular.
        Thus, $M$ is a YES instance of the UTMP problem, as desired.
\end{proof}

    The same reduction for the proof of Theorem \ref{thm:order_hard} implies that it is NP-Hard to determine whether a given matrix is permute equivalent to a matrix with a full upper triangle.
    Though, an algorithm for this problem would not immediately translate to an algorithm for the Keychain Order Selection problem due to the additional constraint that keys must be on the current keychain when played (see \Cref{subsec:order_opendirections} for a detailed discussion).
    \begin{corollary}
        Assuming $P \ne NP$, no algorithm can determine whether a binary square matrix $A$ is permute equivalent to a matrix with a full upper triangle. 
    \end{corollary}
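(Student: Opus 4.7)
The plan is to observe that the reduction constructed in the proof of Theorem~\ref{thm:order_hard} is, at its core, already a Karp reduction from UTMP directly to the problem of deciding permute-equivalence to a matrix with a full upper triangle; the KOS problem plays no essential role in establishing this intermediate matrix-theoretic equivalence. So the proof just needs to extract and re-package the relevant portions of that argument.

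Concretely, given any UTMP instance $M$, I would construct $A$ in polynomial time exactly as before: form $\overline{M} = \mathbf{1}_{n \times n} - M$, prepend a column of $1$s and append a row of $1$s to obtain $B$, and set $A = B^T$. I would then invoke the forward direction of the proof of Theorem~\ref{thm:order_hard}: if $M$ is permute-equivalent to an upper triangular matrix via $\Pi_R, \Pi_C$, then the block-diagonal permutations $P_R, P_C$ defined in equation~\eqref{eq:pr_pc} witness that $A$ is permute-equivalent to a matrix with a full upper triangle. Conversely, I would invoke the reverse direction of that same proof: if $A$ is permute-equivalent to some $W^T$ with a full upper triangle, then the ``without loss of generality'' manipulation that fixes the bottom row and leftmost column of $P_R, P_C$ yields $n \times n$ permutation matrices $\Pi_R, \Pi_C$ such that $\Pi_R \, M \, \Pi_C$ is upper triangular.

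Combining the two directions gives the equivalence ``$M$ is permute-equivalent to an upper triangular matrix'' if and only if ``$A$ is permute-equivalent to a matrix with a full upper triangle,'' which together with the NP-hardness of UTMP \cite{fertin2015obtaining, gerbner2016topological} yields the corollary. There is no real obstacle: all of the technical content already appears in the proof of Theorem~\ref{thm:order_hard}, and the only work is to note that Lemma~\ref{permute_equivalence_of_upper_bound} is used solely to translate between the matrix property and locksmith reward, and is therefore unnecessary for this statement.
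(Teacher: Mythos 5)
Your proposal is correct and matches the paper's intended argument: the text preceding the corollary states that "the same reduction for the proof of Theorem~\ref{thm:order_hard}" yields this result, and your extraction of the matrix-theoretic equivalence (bypassing Lemma~\ref{permute_equivalence_of_upper_bound}) is exactly what is meant. Both directions you cite are valid as stated in the paper's proof, so nothing further is needed.
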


\subsection{(1/2)-approximation of Keychain Order Selection}\label{sec:app_order_approx}
We begin by defining some useful notation.
We will let $\sigma$ denote an ordering over the keychains ($\sigma: \mathcal{C} \to [m]$ is a bijective map).
For a given ordering $\sigma$, we will let $-\sigma$ be the reverse ordering. 
In other words, $\sigma(C) = i$ if and only if $-\sigma(C) = m - i + 1$, $\forall C \in \mathcal{C}$.

It will be useful to think of the locksmith's policy as a tuple $\pi = (\kappa, \sigma)$, where $\sigma$ is an ordering over the keychains and $\kappa: \mathcal{C} \to [n]$ is an injective \textit{key selection} function that selects a unique key to try for each keychain, with the additional constraint that $\kappa(C) \in C, \forall C \in \mathcal{C}$.
In other words, $\kappa(C) = k$ implies that keychain $C$ plays key $k$, regardless of its position in the ordering over $\mathcal{C}$.
Finally, we let $r_{k, C,\sigma}$ be the amount of times that key $k$ shows up in keychains on or after $C$ in the order $\sigma$.
Formally, $r_{k, C,\sigma} := \sum_{C':\sigma(C') \geq \sigma(C)}\textbf{1}\{k \in C'\}$.
The locksmith's goal is to play a policy $\pi = (s, \sigma)$ maximizing her expected reward:
\begin{align*}
    r(\pi) = r(\kappa, \sigma) = \sum_{C \in \mathcal{C}} \left[p_{\kappa(C)} \sum_{C':\sigma(C') \geq \sigma(C)}\textbf{1}\{\kappa(C) \in C'\} \right] = \sum_{C \in \mathcal{C}} p_{\kappa(C)} r_{\kappa(C), C, \sigma}
\end{align*}

\orderhalfapprox*
\begin{proof}
    Our algorithm selects an arbitrary ordering over the keychains $\sigma$ and its reverse $-\sigma$.
    Then, we solve the Keychain problem with fixed order $\sigma$ and $-\sigma$, obtaining injective functions $\kappa_\sigma$ and $\kappa_{-\sigma}$. 
    We return the better of the two solutions.
    Note that it is sufficient to compare the output of our algorithm to the best deterministic policy because the expected reward of a randomized policy is always a convex combination of the expected rewards of its constituent deterministic policies, so an optimal deterministic policy achieves at least as much reward as any randomized policy.
        
    Suppose the locksmith is forced to play a fixed selection function $\kappa$ and let $N_k$ denote the number of times that key $k$ occurs over all keychains.
    Now, notice that for every ordering $\sigma$ and played\footnote{This is only true for keys that are played; unplayed keys will receive 0 reward in both orderings $\sigma$ and $-\sigma$.} key $k \in [n]$ when using selection function $\kappa$, $r_{k, C,\sigma} + r_{k, C,-\sigma} = N_k+1$.
    Aggregating over all the keys (let $\sigma^*_\kappa$ be the best ordering for $\kappa$), 
    \begin{align*}
        r(\kappa, \sigma)  + r(\kappa, -\sigma)
        &= \sum_{C \in \mathcal{C}} p_k r_{\kappa(C), C,\sigma} + \sum_{C \in \mathcal{C}} p_k r_{\kappa(C), C,-\sigma} \\ 
        &= \sum_{C \in \mathcal{C}} p_{\kappa(C)} (N_{\kappa(C)}+1) 
        \geq r(\kappa, \sigma^*_\kappa)
    \end{align*}
    The last inequality follows from the fact that the best ordering for $\kappa$ can only observe at most $N_k$ reward for every key it plays.
    Notice that the optimal policy selects the optimal selection function $\kappa^*$ as well as the optimal ordering $\sigma^*$.
    Since $\kappa^*$ is a selection function, by our previous argument:
    $$r(\kappa^*, \sigma^*) \leq r(\kappa^*, \sigma) + r(\kappa^*,-\sigma) \leq r(\kappa_\sigma, \sigma) + r(\kappa_{-\sigma}, -\sigma)$$
    where $\kappa_\sigma$ and $\kappa_{-\sigma}$ denote the optimal selection functions for $\sigma$ and $-\sigma$, respectively.
    Thus, $r(\kappa_\sigma, \sigma) \geq \frac{1}{2}r(\kappa^*, \sigma^*)$ or $r(\kappa_{-\sigma}, -\sigma) \geq \frac{1}{2}r(\kappa^*, \sigma^*)$. 
\end{proof}

\section{Weighted Online Bipartite b-Matching Algorithms}\label{subsec:app_scenarios_obm}
We apply our \mwlm\, and XOS combinatorial optimization results to the problem of Stochastic Maximum Weight Online Bipartite One-Side $b$-Matching (\wobm).
We prove all our approximation results with respect to $b$-matchings as they imply the same results in the special case of standard matchings constraints.
First, we provide an approximation-preserving reduction from \wobm\, to maximum weight laminar $b$-matching (\mwlbm).
In this generalization of Maximum Weight Laminar Matching, left node $i$ may have up to $b_i$ neighbors of each type for prespecified capacities $b$.
Lastly, we reduce \mwlbm\, to XOS Combinatorial Auctions with disjoint antichain valuations in an approximation-preserving way.
Since such combinatorial auctions admit a $(1-1/e)$-approximation, we recover a $(1-1/e)$-approximation for \wobm.

\begin{restatable}[\wobm\, Approximation]{proposition}{wobmapprox}\label{thm:app_scenarios_approx}
    There is a $\left(1 - 1/e\right)$-approximation for the \wobm\, Problem, where $n$ is the number of offline nodes.
\end{restatable}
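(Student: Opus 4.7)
The plan is to mirror, step for step, the three-stage reduction used for the Probabilistic Scenarios setting: first set up the online matching problem as an optimization over deterministic policies, then reduce to a $b$-capacitated generalization of Maximum Weight Laminar Matching (\mwlbm), and finally reduce \mwlbm\ to welfare maximization in a combinatorial auction with disjoint-antichain valuations, invoking the $(1-1/e)$-approximation of \cite{dobzinski2010auctions} as a black box. The only novelty relative to Section~\ref{sec:scenarios} is the capacity parameter $b_i$, which will be absorbed into the valuations by replacing antichain functions with $b_i$-\emph{disjoint} antichain functions, whose key properties are already established in Appendix~\ref{sec:app_scenario_antichain}.

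For the first step I would formalize information sets for \wobm\ as prefixes $o=w_{:,1:j}$ of realized weight matrices, set $p_o=\sum_{w\in\supp(p):\, w_{:,1:j}=o}p_w$, and show (by a simple exchange argument, analogous to \Cref{ob1:exploitative}) that we may restrict attention to deterministic policies $\pi:\calO\to[n]\cup\{\perp\}$ satisfying the constraint that for each $w\in\supp(p)$ and each offline node $i$, at most $b_i$ of the sets $o\in P(w)=\{w_{:,1:j}\}_{j\in[m]}$ are mapped to $i$. The expected reward decomposes as
\begin{equation*}
r(\pi)=\sum_{i,o}\, p_o\, w_{i,j(o)}\,\1{\pi(o)=i},
\end{equation*}
where $j(o)$ is the arrival index encoded in $o$.

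For the second step I would define \mwlbm\ exactly as \mwlm\ (\Cref{def:scenarios_mwlm}) except that each left node $i$ carries a capacity $b_i$ and the laminar constraint becomes $|N_M(i)\cap R(t)|\le b_i$ for every left node $i$ and every type $t$. I would then build the \mwlbm\ instance with $L=[n]$, $R=\calO$, capacities $\{b_i\}$, type sets $T_o=\{w\in\supp(p):o\in P(w)\}$ (which form a laminar family because prefix relations induce a forest of information sets), and edge weights $w_{i,o}=p_o\,w_{i,j(o)}$. The bijection $\pi\leftrightarrow M=\{(i,o):\pi(o)=i\}$ sends admissible policies to laminar $b$-matchings, and the per-edge weight identity gives $r(\pi)=w(M)$.

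For the third step I would reduce \mwlbm\ to XOS combinatorial auctions as in \Cref{lem:scenarios_redcombo}, but now define buyer $i$'s valuation on a bundle $S\subseteq R$ as
\begin{equation*}
v_i(S)=\max_{A_1,\dots,A_{b_i}\in\ac(S)}\sum_{j\in A_1\cup\dots\cup A_{b_i}} w_{i,j},
\end{equation*}
i.e., the $b_i$-disjoint antichain function induced by the type sets $\{T_j\}_{j\in S}$ with weights $w_{i,\cdot}$. By the facts established in Appendix~\ref{sec:app_scenario_antichain}, these valuations are XOS, monotone, and normalized, and they admit polynomial-time value, demand, and supporting-price oracles via the dynamic program of Protocol~\ref{alg:app_scenarios_disjoint}. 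A surjective map from allocations to laminar $b$-matchings is obtained by reading off, for each buyer $i$ and allocated bundle $S_i$, the up-to-$b_i$ disjoint antichains realizing $v_i(S_i)$; the supporting-price identity $v_i(S_i)=\sum_{j\in A_1\cup\dots\cup A_{b_i}} w_{i,j}$ guarantees that social welfare equals matching weight. Feeding this auction instance to the $(1-(1-1/n)^n)\ge(1-1/e)$-approximation of \cite{dobzinski2010auctions} and pulling the resulting matching back through both reductions yields the claimed approximation for \wobm.

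The main obstacle I anticipate is verifying that $b_i$-disjoint antichain functions retain the XOS/monotone/normalized structure and, more delicately, that the supporting-price oracle produces a set of items whose witnessing antichains are \emph{actually} $b_i$ in number, so that the induced $b$-matching respects the capacity on left node $i$. This is where the generalization from $1$ to $b_i$ disjoint antichains is non-trivial, but it is precisely what Appendix~\ref{sec:app_scenario_antichain} handles: the dynamic program is already stated for general $k$, and both the XOS decomposition and the supporting-price recovery respect the cap on the number of antichains. Once this is in place, every other step is a direct translation of the Probabilistic Scenarios argument with ``keys'' replaced by ``offline nodes'' and the single-neighbor-per-type constraint relaxed to $b_i$ neighbors per type.
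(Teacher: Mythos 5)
Your proposal is correct and follows essentially the same route as the paper: the paper's Appendix on \wobm\ likewise reduces to Maximum Weight Laminar $b$-Matching via information-set right nodes with scenario type sets and edge weights $p_o w_{i,o}$, then to XOS combinatorial auctions with $b_i$-disjoint antichain valuations, and invokes the $(1-1/e)$-approximation of Dobzinski et al.\ as a black box. The concern you flag about capacities is handled exactly as you anticipate, by the general-$k$ disjoint antichain machinery of Appendix~\ref{sec:app_scenario_antichain}.
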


\subsection{Setting}
There is a set of left nodes $n$, right nodes $m$, and a prior $p\in\Delta(\R_{\ge 0}^{n\times m})$ over a finite number of weighted edge sets.
Finally, there are capacities $b_i$ on each left node $i$.
In online fashion over right nodes $j$, the edges $w_{1,j},\dots,w_{n,j}$ are revealed.
Upon observing the edges incident on right node $j$, the matcher must irrevocably select a single left node $i$ to match to $j$ or choose to match nothing.
The matcher must maintain a one-sided $b$-matching in which each left node $i\in[n]$ has at most $b_i$ neighbors.

\begin{protocol}[ht]
    \KwIn{
        An instance of Weighted Online Bipartite One-Sided $b$-Matching $(n,m,p)$.
    }
    Initialize matching $M \gets \emptyset$\;
    Initialize neighbor count $c_i \gets 0$\;
    \For{each right node $j=1,\dots,m$}{
        The environment reveals weights $w_{1,j},\dots,w_{n,j}$\;
        The matcher selects an edge $(i,j)$ such that $c_i + 1 \le b_i$\;
        Update $M\gets M\cup\{ i \}$ and update $c_i \gets c_i + 1$\;
    }
    \caption{\textsc{Stochastic Weighted Online Bipartite One-Sided $b$-Matching}}
\end{protocol}


The goal of the matcher is to play a policy that competes with a Bayes' optimal solution.
Notably, we allow our algorithms to run polynomial in the support size of $p$.

\subsection{Policy Characterization}
In this section, we describe our representation of a \wobm\, policy.
We assume familiarity with Section \ref{sec:scenarios}, as it contains similar observations and objects.
\wobm\, is a Markov Decision Process with exponentially many states.
Each state is a prefix of observed weight sequences, called information sets, and the previously selected edges.
By \cite{puterman2014markov}, this implies that there is a deterministic Bayes' optimal policy for \wobm.
Such deterministic policies implicitly encode previously selected edges in this MDP.
Thus, a state of the MDP is simply an information set.
We represent deterministic policies as mappings from information sets to an offline node to select.
A policy is admissible only if it allocates at most $b_i$ information sets associated with any realizable edge set to offline node $i$.

We introduce some notation to formalize our policy characterization.
The collection of information sets is denoted as $\calO = \{w_{:,1:j}\}_{j\in [m], w\in \supp(p)}$.
For information set $o = w_{:, 1:j}$, we use $w_{i,o} = w_{i,j}$ to denote the weight of matching offline node $i$ to the online node $j$ that arrived to yield information set $o$.
We call a realization of edge weights a scenario and use $\calS = \supp(p)$ to denote the set of all scenarios.
The realized weights in scenario $s$ are denoted as $w^s$
We use $o\in P(s)$ to denote that information set $o\in\calO$ is observed in scenario $s$.
A scenario $s$ is said to be consistent with information set $o$ when $o$ is observed in $s$; we use $s\in C(o)$ to denote that $s$ is consistent with $o$.
The probability of a scenario being realized is $p_s$ and the probability that an information set $o\in\calO$ is observed at some point in the \wobm\, game is $p_o = \sum_{s\in C(o)} p_o$.
A deterministic \wobm\, policy is a mapping $\pi:\calO \to[n]$.
A policy is admissible if $|\{o\in P(s) \mid \pi(o) = i\}| \le b_i$ for all scenario $s$.
Finally the reward of a policy is:
\begin{align*}
    r(\pi)
    &= \E_{s\sim p} \left[\sum_{o\in P(s)}\sum_{i\in [n]} w_{i,o}^s \cdot\1{\pi(o) = i} \right] \\
    &= \sum_{i,o} p_o w_{i,o} \1{\pi(o) = k}
\end{align*}

\subsection{Reduction: Maximum Weight Laminar b-Matching}
We now present the Maximum Weight Laminar b-Matching Problem and provide an approximation-preserving reduction from \wobm\, to \mwlbm.
An instance of \mwlbm\, is defined on a bipartite graph $G = (L\cup R, E)$, edge weights $\hat{w}$, laminar type sets $T_1,\dots, T_n$, and left node capacities $b$.

\begin{definition}[Laminar $b$-Matching]
    Given an instance of \mwlbm, a laminar $b$-Matching is an edge set $M\subseteq E$ satisfying:
    \begin{itemize}
        \item Right nodes have at most one neighbor: $|N_M(j)|\le 1$, \quad $\forall j\in R$
        \item Left node $i$ have at most $b_i$ neighbors of each type: $|N_M(i) \cap R(t)| \le b_i$, \quad $\forall i\in L$, types $t$
    \end{itemize}
    where $R(t) = \{j\in R\mid t\in T_j\}$ is the set of right nodes with type $t$.
\end{definition}

\begin{definition}[Maximum Weight Laminar $b$-Matching]
    The goal of \mwlbm is to, given an instance, compute a laminar $b$-Matching $M$ with maximum weight: $\hat{w}(M) = \sum_{(i,j)\in M} \hat{w}_{i,j}$.
\end{definition}

With \mwlbm\, defined, we now provide an approximation preserving reduction from \wobm\, to \mwlbm.
In this reduction, offline nodes correspond to left nodes, and information sets correspond to right nodes.
The type of a right node is the set of scenarios it is consistent with.
An edge in a laminar matching in this reduction encodes a mapping from an information set to an offline node.
With carefully selected edge weights, the weight of any laminar matching will be the expected reward of the corresponding \wobm\, policy.

\begin{lemma}[Reduction to \mwlbm]\label{lem:app_scenarios_mwlbm}
    If there is a polynomial time $(\alpha,\epsilon)$-approximation for the \mwlbm, Problem, then there is a polynomial time algorithm for the \wobm\, problem. 
\end{lemma}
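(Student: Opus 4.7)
The plan is to follow the template established by Lemma \ref{lem:scenarios_redmwlm}, which reduced Probabilistic Scenarios to \mwlm, but adapted to handle the richer one-sided $b$-matching constraint and the slightly different data of \wobm. The reduction I would construct takes a \wobm\ instance $(n, m, p)$ with capacities $b$ and produces an \mwlbm\ instance whose left nodes $L = [n]$ are the offline nodes, whose right nodes $R = \calO$ are the information sets, and whose underlying bipartite graph is complete. The left capacities carry over unchanged, and I would define the type set of each right node $o \in \calO$ to be $T_o = C(o)$, the scenarios consistent with $o$. Edge weights would be set to $\hat{w}_{i,o} = p_o w_{i,o}$, mirroring the weight choice in Lemma \ref{lem:scenarios_redmwlm}.

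The first thing I would verify is that $\{T_o\}_{o \in \calO}$ is indeed a laminar family. Given two information sets $o, o'$ that are both observed in some common scenario, they must be prefixes of the same realized weight sequence, so one is a prefix of the other; the shorter one is observed in every scenario that realizes the longer one, which yields $C(o') \subseteq C(o)$ or vice versa. Otherwise, no scenario realizes both, so $C(o) \cap C(o') = \emptyset$. This establishes laminarity.

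Next I would exhibit the bijection between admissible \wobm\ policies $\pi$ and laminar $b$-matchings $M$ via $M = \{(i, o) \in L \times R : \pi(o) = i\}$. The key thing to check here is that admissibility coincides with the laminar $b$-matching constraint. The right-degree-one condition is automatic since $\pi$ is a function. For the left side, the laminar constraint says left node $i$ has at most $b_i$ neighbors of each type $s$, meaning at most $b_i$ information sets $o$ with $s \in C(o)$, i.e.\ with $o \in P(s)$, are matched to $i$. This is exactly the admissibility condition that in every scenario $s$ the policy assigns at most $b_i$ of the $P(s)$ information sets to offline node $i$. A direct computation then gives
\begin{align*}
    \hat{w}(M) \;=\; \sum_{(i,o) \in M} p_o w_{i,o} \;=\; \sum_{i,o} p_o w_{i,o}\,\1{\pi(o) = i} \;=\; r(\pi).
\end{align*}

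The main obstacle I anticipate is precisely the laminarity check and its interaction with the per-type capacity constraint, since this is where \mwlbm\ diverges from ordinary laminar matching. Once that is settled and the bijection is in place, the remainder is purely mechanical: the reduction clearly runs in time polynomial in $n$, $m$, and $|\supp(p)|$ (which controls $|\calO|$), and an $(\alpha,\epsilon)$-approximate laminar $b$-matching pulls back through the bijection to an admissible policy whose expected reward is at least $\alpha\cdot \opt - \epsilon$, completing the proof.
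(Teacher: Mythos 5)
Your proposal is correct and follows essentially the same reduction as the paper: offline nodes become left nodes, information sets become right nodes with type sets $C(o)$, edge weights are $p_o w_{i,o}$, and the bijection $\pi(o)=i \Leftrightarrow (i,o)\in M$ transfers weight to expected reward. Your explicit verification of laminarity and of the match between per-type capacities and admissibility is a welcome bit of added rigor that the paper leaves implicit, but it is the same argument.
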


\begin{proof}
    Let $n,m,p,b$ be an instance of \wobm.
    We construct an instance of \mwlbm\, as follows.
    Let 
    $G = (L\cup R, E)$ be a complete bipartite graph with left nodes corresponding to offline nodes ($L = [n]$) and right nodes corresponding to information sets ($R = \calO$).
    The weight of edge $(i,o)\in E$ where left node $i$ corresponds with offline node $i$ and right node $j$ corresponds to information set $o$ is:
    \begin{align*}
        \hat{w}_{i,o} = p_o w_{i, o}  
    \end{align*}

    We map a laminar $b$-matching $M$ to the policy $\pi$ with $\pi(o) = i$ when $(i,o)\in M$.
    This mapping is a bijection.
    The proof is completed by showing that a laminar $b$-matching has equivalent weight to the expected reward of its policy counterpart $\pi$:
    \begin{align*}
        \hat{w}(M)
        = \sum_{(i,o)\in M} \hat{w}_{i,o}
        = \sum_{i,o} p_o w_{i,o} \1{\pi(o) = i} 
        = r(\pi)
    \end{align*}
\end{proof}

\subsection{Reduction: XOS Combinatorial Auctions}
In this section, we reduce \mwlbm\, to combinatorial auctions with disjoint antichain valuations.
We assume familiarity with combinatorial auctions and disjoint antichain functions.
See Section \ref{subsubsec:scenarios_combo} and Section \ref{sec:app_scenario_antichain} for background about combinatorial auctions and disjoint antichain functions, respectively.

In this reduction, each left node is a bidder and each right node is an item.
The valuation of each bidder is a disjoint antichain function, where the number of antichains is set to the capacity of the corresponding left node.
By ignoring allocated items with supporting price equal to $0$, we recover a laminar matching.
By carefully selecting the weights of the disjoint antichain function, we can ensure an allocation has social welfare equal to the weight of its corresponding laminar matching.

\begin{lemma}[Reduction to Combinatorial Auctions]\label{lem:app_scenarios_combo}
    A polynomial-time $(\alpha,\epsilon)$-approximation for the XOS Combinatorial Auctions Problem implies that the \mwlbm\ Problem admits a polynomial-time $(\alpha,\epsilon)$-approximation algorithm.
\end{lemma}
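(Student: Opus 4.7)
The approach is a direct generalization of the proof of Lemma~\ref{lem:scenarios_redcombo}, where the plain antichain valuation of each bidder is replaced by a $b_i$-disjoint antichain valuation so that the per-type capacity constraint of a laminar $b$-matching is encoded bidder-by-bidder. The plan is, given an instance $\langle G = (L \cup R, E), \hat{w}, \{T_j\}_{j \in R}, b \rangle$ of \mwlbm, to build a combinatorial auction with $n = |L|$ bidders and $m = |R|$ items, setting
\begin{align*}
v_i(S) \;=\; \max_{A_1, \dots, A_{b_i} \in \ac(S)} \; \sum_{j \in A_1 \cup \dots \cup A_{b_i}} \hat{w}_{i,j}
\end{align*}
for each bidder $i$, where $\ac(S)$ uses the same laminar family $\{T_j\}_{j \in R}$. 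The facts from Appendix~\ref{sec:app_scenario_antichain} immediately give that each $v_i$ is XOS, monotone, normalized, and admits polynomial-time value, demand, and supporting-price oracles via the dynamic program in Protocol~\ref{alg:app_scenarios_disjoint} run on the laminar forest of $\{T_j\}_{j \in R}$; this is exactly the oracle access required by an $(\alpha, \epsilon)$-approximation for XOS combinatorial auctions.

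Next I would set up the two-way correspondence between allocations and laminar $b$-matchings. In the forward direction, given an allocation $(S_1, \dots, S_n)$, I take antichains $A_1^{(i)}, \dots, A_{b_i}^{(i)} \in \ac(S_i)$ attaining $v_i(S_i)$ (tie-breaking consistently, exactly as in Lemma~\ref{lem:scenarios_redcombo}) and output the edge set
\begin{align*}
M \;=\; \bigcup_{i=1}^n \{\, (i, j) \in E \;:\; j \in A_1^{(i)} \cup \dots \cup A_{b_i}^{(i)} \,\}.
\end{align*}
Each right node lies in a unique $S_i$, so every right node has at most one neighbor in $M$, and since each $A_k^{(i)}$ is an antichain every type $t$ appears in at most $b_i$ of $i$'s neighbors. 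Thus $M$ is a valid laminar $b$-matching, and a direct computation (elements dropped from the maximizing collection contribute nothing) yields $\hat{w}(M) = \sum_i v_i(S_i)$.

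For the reverse direction, starting from an optimal laminar $b$-matching $M^*$, I set $S_i = \{j : (i,j) \in M^*\}$ and use this as a candidate allocation. The constraint $|N_{M^*}(i) \cap R(t)| \le b_i$ says no type $t$ appears in more than $b_i$ of the sets in $\{T_j\}_{j \in S_i}$. Because the laminar family forms a forest, a chain in $(\{T_j\}_{j \in S_i}, \subseteq)$ corresponds precisely to a collection of type sets all containing some common element; the longest such chain therefore has length at most $b_i$. Mirsky's theorem then partitions $S_i$ into $b_i$ antichains, certifying $v_i(S_i) \ge \sum_{(i,j) \in M^*} \hat{w}_{i,j}$. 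Combined with the forward direction, the optimal social welfare equals the optimal laminar $b$-matching weight, so mapping the approximate allocation produced by the combinatorial-auctions algorithm back to a matching yields the claimed polynomial-time $(\alpha, \epsilon)$-approximation for \mwlbm.

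The main obstacle I anticipate is the combinatorial link between the $b$-matching capacity and the disjoint antichain cover size; this is the only step that is not a near-verbatim copy of Lemma~\ref{lem:scenarios_redcombo}, and it is resolved precisely by the laminar-forest specialization of Mirsky's theorem above. Verifying that Protocol~\ref{alg:app_scenarios_disjoint} really does extend to a $b_i$-disjoint DP in polynomial time is routine but worth writing out explicitly, since it is what keeps the value, demand, and supporting-price oracles efficient for every bidder simultaneously.
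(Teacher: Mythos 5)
Your proposal is correct and follows essentially the same route as the paper: the same $b_i$-disjoint antichain valuations, the same allocation-to-matching correspondence via the maximizing antichains (equivalently, the items with positive supporting price), and the same appeal to the efficient oracles from the disjoint-antichain dynamic program. In fact your treatment is slightly more careful than the paper's: the paper asserts that the allocation $N_1,\dots,N_n$ induced by an arbitrary laminar $b$-matching maps back to that matching without justifying that each $N_i$ decomposes into $b_i$ antichains, whereas you supply exactly the missing argument (the per-type capacity bound limits chain length in the laminar forest, and Mirsky's theorem then gives the antichain partition).
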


\begin{proof}
    We convert an instance of \mwlbm\, into an instance of XOS Combinatorial auctions.
    For each left node and right node, construct a bidder and an item, respectively.
    The valuation of bidder (left node) $i$ for a bundle of items (subset of right nodes) $S$ is:
    \begin{align*}
        v_i(S) = \max_{A_1,\dots,A_k\in\ac(S)} \hat{w}_i(A_1\cup\cdots\cup A_k)
    \end{align*}
    where $k = b_i$ is the capacity of left node $i$ and $\hat{w}_i(A) = \sum_{j\in A} \hat{w}_{i,j}$ are the weights of edges incident on left node $i$.

    Now we define a surjective mapping from allocations to laminar $b$-matchings.
    Fix an allocation $S_1,\dots,S_n$ on bidders.
    Let $\hat{S}_i\subseteq S_i$ be the items allocated to bidder $i$ with positive supporting price.
    Notice that the items (right nodes) with the same type appear at most $b_i$ times because $\hat{S}_i$ has non-zero supporting price on at most $b_i$ antichains.
    Then allocation $S_1,\dots,S_n$ corresponds to the laminar $b$-matching in which left node (bidder) $i$ has neighbors $\hat{S}_i$.
    To prove this map is surjective, consider an arbitrary laminar $b$-matching $M$ in which left node $i$ has neighbors $N_i$.
    The allocation $N_1,\dots,N_n$ maps to laminar $b$-matching $M$.

    The proof is completed by showing that the social welfare of an allocation is equal to the expected reward of the laminar $b$-matching associated with the allocation.
    Consider allocation $S_1,\dots,S_n$ in which bidder (left node) $i$ has items $\hat{S}_i$ with non-zero supporting price.
    Then the associated laminar $b$-matching $M$ satisfies:
    \begin{align*}
        \sum_{i} v_i(S_i)
        = \sum_{i} \hat{w}_i(\hat{S}_i)
        = \sum_{i} \sum_{j\in A} \hat{w}_{i,j} 
        = \sum_{(i,j)\in M} \hat{w}_{i,j}
    \end{align*}
\end{proof}

\subsection{A \texorpdfstring{$(1-1/e)$}--Approximation for \wobm}
We use Lemmas \ref{lem:app_scenarios_mwlbm} and \ref{lem:app_scenarios_combo} to complete the proof of Theorem \ref{thm:app_scenarios_approx}.

\wobmapprox*

\begin{proof}
    By Lemma \ref{lem:app_scenarios_mwlbm} and \ref{lem:app_scenarios_combo}, an $\alpha$-approximation for the Combinatorial Auctions Problem with XOS valuations implies an $\alpha$-approximation for \wobm.
    Since Theorem 3.2 of \cite{dobzinski2010auctions} provides a $(1-1/e)$-approximation for the XOS Combinatorial Auction Problem, \wobm\, admits a $(1-1/e)$-approximation.
\end{proof}

\subsection{Hardness of Approximation}\label{sec_app_philosopher_hard}
We know that \wobm\, is APX-hard using the special case of the standard matching constraint ($b_i = 1$ for all $i$).
Since this proof closely resembles that of \Cref{thm:scenarios_hardness}, we state our claim and outline the reduction.

\begin{proposition}[\wobm\, Hardness of Approximation]\label{prop:app_scenarios_wobmhard}
    Assuming $P \neq NP$, there is no polynomial time algorithm for the \wobm\, Problem that guarantees a $\left(\frac{4063}{4064} + \epsilon\right)$-approximation for any $\epsilon \in(0,1/4)$.
\end{proposition}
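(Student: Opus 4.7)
The plan is to mirror the reduction in the proof of \Cref{thm:scenarios_hardness}, adapting it to the matching setting. Given a Max-$(3,2B)$-SAT instance with $n$ variables and $m$ clauses (so $4n = 3m$), I would build a \wobm{} instance with $2n$ offline nodes (one per literal), standard matching capacities $b_i = 1$, and $n+1$ online arrivals. The prior is the uniform distribution over $m$ scenarios, one per clause $C$ of the formula.

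In every scenario, the first $n$ online weight vectors are identical: online node $t$ reveals weight $1/n$ on each of the two literals of variable $x_t$ and $0$ on all other literals. In the scenario associated with clause $C$, online node $n+1$ reveals weight $1/3$ on each of the three literals in $C$ and $0$ elsewhere. Because the first $n$ weight vectors are scenario-independent, a deterministic policy $\pi$ behaves identically on them across scenarios; its best response is to match exactly one literal per variable (any other choice is strictly dominated, since offline nodes outside $\{x_t,\neg x_t\}$ offer zero weight in round $t$). This yields reward exactly $1$ from the first $n$ rounds and induces a truth assignment $\tau(\pi)$ that sets each variable so the selected literal becomes false.

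The reward analysis for round $n+1$ is the crux. A literal $\ell \in C$ is still available at round $n+1$ iff $\pi$ did not match $\ell$ earlier iff $\tau(\pi)$ makes $\ell$ true; hence the matcher collects $1/3$ exactly when $\tau(\pi)$ satisfies $C$, and averaging over the uniform prior over clauses gives
\[
r(\pi) \;=\; 1 + \frac{\text{sat}(\tau(\pi))}{3m}.
\]
Invoking \Cref{fact:scenarios_sathard}, the complete case yields some $\pi$ with $r(\pi)\geq (4064-\epsilon)/3048$ while in the sound case every $\pi$ has $r(\pi)\leq (4063+\epsilon)/3048$. The ratio $(4063+\epsilon)/(4064-\epsilon)$ then produces the claimed $(4063/4064)$-inapproximability after the same change of variables $\delta = 8127\epsilon/(4064-\epsilon)$ used in the Keychain hardness theorem. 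The extension to randomized policies follows from convexification, and the reduction is clearly polynomial-time.

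The main conceptual obstacle is that \wobm{} lacks the ``correct key'' feedback and the exploitation dynamics that produced the constant-$1$ contribution in Keychain. I expect the key step to be articulating why the one-sided matching constraint is the right substitute: matching a literal in rounds $1,\ldots,n$ \emph{prevents} reusing it at round $n+1$, which is precisely what ties the last round's reward to clause satisfaction under $\tau(\pi)$. Calibrating the weights to $1/n$ and $1/3$ so that the reward takes the form $A+B\cdot \text{sat}(\tau)/(3m)$ with $A=B=1$ — the same constant combination that drives \Cref{thm:scenarios_hardness} — is the only real design choice, and it is what produces the specific $4063/4064$ factor.
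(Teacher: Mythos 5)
Your construction is exactly the one the paper outlines for this proposition (literal-indexed offline nodes, $n+1$ arrivals with weights $1/n$ and $1/3$, a uniform prior over clauses, and reward $1+\mathrm{sat}(\tau(\pi))/(3m)$), and the completeness/soundness calculation matching the $4063/4064$ factor is correct. The one step worth spelling out in the soundness direction is that a policy gains nothing by \emph{skipping} a round $t\le n$ so as to keep both literals of $x_t$ available for round $n+1$: by the balance property the withheld literal lies in exactly two clauses, so the expected last-round gain is at most $2/(3m)$, strictly less than the forgone reward $1/n = 4/(3m)$.
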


\textbf{Proof Outline: }We reduce from the 3-SAT instance in \cite{berman2004approximation} that satisfies uniqueness and balance.
Given a 3-SAT instance with $n$ variables (and $4m/3$ clauses), we construct a graph with $2n$ offline and $n+1$ online nodes.
Each offline node corresponds to a literal.
The first $n$ neighbor sets deterministically include edges with weight $\frac{1}{n}$ to the pairs of literals corresponding to a variable: online node $i\le n$ has edges to the offline nodes representing $x_i$ and $\neg x_i$.
The last online node is determined by a clause drawn uniformly at random.
The node contains edges to the offline nodes associated with literals in the clause with weight $\frac{1}{3}$.
The proof is completed by making similar arguments to \Cref{thm:scenarios_hardness}.

\section{AI Software Disclosure}\label{sec:app_ai_disclosure}
We used Refine.ink \cite{refine_ec} to identify notational inconsistencies, minor mathematical bugs, and clarity issues, which the authors corrected.
The authors have verified the AI-generated content to the best of their ability.

\end{document}